\newcommand{\R}{\mathbb{R}}
\newcommand{\C}{\mathbb{C}}
\renewcommand{\H}{\mathcal{H}}
\def\>{\rangle}
\def\<{\langle}
\newcommand{\Tr}{\operatorname{Tr}}
\newtheorem{theo}{Theorem}
\newtheorem{lemma}{Lemma}
\newtheorem{prop}[theo]{Proposition}
\newtheorem{defi}{Definition}
\newcommand{\E}{\mathbb{E}}
\newcommand{\worst}{{\operatorname{worst}}}
\newcommand{\avg}{{\operatorname{avg}}}
\newcommand{\choi}{{\operatorname{Choi}}}
\newcommand{\ovl}{\overline}
\newcommand{\ul}{\underline}
\newcommand{\id}{\operatorname{id}}
\newcommand{\sgn}{\operatorname{sgn}}
\newcommand{\poly}{\operatorname{poly}}
\newcommand{\cE}{\mathcal{E}}
\newcommand{\cN}{\mathcal{N}}
\newcommand{\cD}{\mathcal{D}}
\newcommand{\cT}{\mathcal{T}}
\newcommand{\cL}{\mathcal{L}}
\newcommand{\cR}{\mathcal{R}}
\newcommand{\cC}{\mathcal{C}}
\newcommand{\cU}{\mathcal{U}}
\newcommand{\cJ}{\mathcal{J}}
\newcommand{\zw}[1]{{\color{black}#1}}
\newcommand{\linghang}[1]{{\color{black}#1}}
\newcommand{\lh}[1]{{\color{black}#1}}
\newcommand{\lhn}[1]{{\color{black}#1}}
\newcommand{\zwmerge}[1]{{\color{black}#1}}
\newcommand{\zwnew}[1]{{\color{black}#1}}
\newcommand{\zwnn}[1]{{\color{black}#1}}
\newcommand{\zwnnn}[1]{{\color{black}#1}}
\begin{document}
\title{Near-optimal covariant quantum error-correcting codes \\from random unitaries with symmetries}




\author{Linghang Kong} \email{linghang@mit.edu}
\affiliation{Center for Theoretical Physics, MIT, Cambridge, MA 02139, United States}
\author{Zi-Wen Liu} \email{zliu1@perimeterinstitute.ca}
\affiliation{Perimeter Institute for Theoretical Physics, Waterloo, Ontario N2L 2Y5, Canada}


\begin{abstract}
Quantum error correction and symmetries play central roles in quantum information science and physics.  It is known that quantum error-correcting codes that obey (are covariant with respect to) continuous symmetries in a certain sense cannot correct erasure errors perfectly (a well-known result in this regard being the Eastin--Knill theorem \zwnn{in the context of fault-tolerant quantum computing}), in contrast to the case without symmetry constraints. Furthermore, several quantitative fundamental limits on the accuracy of such covariant codes for approximate quantum error correction are known. Here, we consider the quantum error correction capability of uniformly random covariant codes.  In particular, we analytically study the most essential cases of $U(1)$ and $SU(d)$ symmetries, and show that for both symmetry groups the error of the covariant codes generated by Haar-random symmetric unitaries, i.e.,~unitaries that commute with the \zwnn{group actions}, typically \zwnn{scale as $O(n^{-1})$ in terms of both the average- and worst-case purified distances against erasure noise, saturating}  the fundamental limits to leading order.  We note that the results hold for symmetric variants of unitary 2-designs, and comment on the convergence problem of symmetric random circuits. Our results not only indicate (potentially efficient) randomized constructions of optimal $U(1)$- and $SU(d)$-covariant codes, but also reveal fundamental properties of random symmetric unitaries, which 
yield important solvable models of complex quantum systems (including black holes and many-body spin systems) that have attracted great recent interest in quantum gravity and condensed matter physics.
We expect our construction and analysis to \zwnn{find broad relevance in} both physics and \zwnn{quantum computing}.  
\end{abstract}

\preprint{MIT-CTP/5367}
\maketitle

\section{Introduction}

\zw{
One of the most important and widely studied ideas in quantum information processing is quantum error correction (QEC) \cite{PhysRevA.52.R2493,nielsen2011,gottesman2010,lidar2013}, which protects (logical) quantum systems against noise and errors by suitably encoding them into quantum error-correcting codes living in a larger physical Hilbert space. 
Besides the clear importance to the practical realization of quantum computing and other quantum technologies, QEC and quantum codes have also drawn great interest in physics recently as they are found to arise in many important physical scenarios in e.g.~holographic quantum gravity \cite{AlmheiriDongHarlow,pastawski2015} and many-body physics \cite{KITAEV20032,2015arXiv150802595Z,brandao2019}.

Physical systems typically entail symmetries and conservation laws, which restrict their behaviors in certain fundamental ways.
Therefore,  it is clearly important to understand how symmetry constraints may influence QEC, considering its practical significance and broad physical relevance.  
More explicitly, in the presence of symmetries, the encoders are restricted to be covariant with respect to the symmetry group (i.e.,~commute with certain forms of group actions), generating the so-called \emph{covariant codes} \cite{hayden2017,faist2019,Woods2020continuousgroupsof}.  
Covariant codes are known to have broad relevance in both practical and theoretical aspects, arising in many important areas in quantum information and physics such as fault tolerance \cite{eastin2009}, quantum reference frames \cite{hayden2017}, anti-de Sitter/conformal field theory (AdS/CFT) correspondence \cite{HarlowOoguri2018arXiv181005338H,harlow2019,kohler2019,faist2019,Woods2020continuousgroupsof}, and condensed matter physics \cite{brandao2019}.

When the symmetry is continuous (mathematically modelled by a Lie group), there exist fundamental limitations on the QEC capability of the corresponding covariant codes.  A well known no-go theorem in this regard is the Eastin--Knill theorem \cite{eastin2009}, which indicates that codes covariant with respect to continuous symmetries in the sense that the logical group actions are mapped to ``transversal'' physical actions that are tensor products on physical subsystems (which is highly desirable for fault tolerance since they do not spread errors within code blocks) cannot correct {local} errors perfectly (for physical systems with finite Hilbert space dimension).  
A physical interpretation of this phenomenon is that some logical charge information is necessarily leaked into the environment due to the error for such covariant codes, which forbids perfect recovery. 
Then the question naturally arises: to what degree can the QEC task be done approximately under these constraints?  Several ``robust'' versions of the Eastin-Knill theorem giving quantitative lower bounds on the inaccuracy of covariant codes have been recently established \cite{faist2019,Woods2020continuousgroupsof,KubicaD,zlj20,YangWoods,PRXQuantum.3.010337}, some of which employing methods from other areas of independent interest such as quantum clocks \cite{Woods2020continuousgroupsof}, quantum metrology \cite{KubicaD,zlj20,YangWoods}, and quantum resource theory \cite{zlj20,PRXQuantum.3.010337}.  
  
}

\zw{
This work concerns the achievability of such lower bounds.  Here we specifically consider two most fundamental and representative  continuous symmetry groups in quantum mechanics, \zwmerge{$U(1)$ and $SU(d)$.    $U(1)$ is the most basic continuous symmetry associated with the conservation of a single quantity (which may physically correspond to charge, energy, particle number etc.), and $SU(d)$ \zwnnn{represents a key type of non-Abelian symmetry groups associated with non-commuting charges. In particular, since $SU(d)$} describes the entire group of unitary actions on a $d$-dimensional quantum system, 
it is  closely related to  quantum computing.}   \zwmerge{In this work, we consider a natural approach to constructing random covariant codes using unitaries drawn from the Haar measure that commute with the symmetry actions, which is particularly interesting because of the following: i) The results faithfully indicate typical properties of all symmetric unitaries due to the uniform nature of the Haar measure;} ii) Haar-random unitaries and their relatives including unitary designs and random circuits \zwnnn{have become standard tools or models} in the study of complex many-body quantum systems such as black holes \cite{hayden2007,hosur2016,PRXQuantum.2.020339} and chaotic spin systems \cite{Nahum2,Nahum1,PhysRevX.8.021013} \zwnnn{due to their ``scrambling'' but solvable features}, indicating that our refined models of random unitaries with symmetries  are potentially of broad interest in physics \zwnnn{(see also Refs.~\cite{Yoshida:softmode,nakata2021black,liu2020,PhysRevX.8.031057,PhysRevX.8.031058} for some recent studies of relevant models in physical contexts)}.
We rigorously analyze the performance of our random covariant codes against erasure noise, as characterized by both the average-case and the worst-case  recovery error (measured by the purified distance) for all input states.   
To do so, we use the complementary channel technique \cite{beny2010}, which allows one to characterize the error rate of a code by the amount of information leaked into the environment.  At a high level, our derivation of the code errors for both $U(1)$ and $SU(d)$ is based on breaking down the error into two components, one characterizing the deviation of the physical state from its average \lhn{(over the randomness in the encoding)}, which leads to \zwnnn{an} error that can be bounded using a ``partial decoupling'' theorem \cite{wakakuwa2019} and turns out to be exponentially small, while the other characterizing a polynomially small intrinsic error induced by \zwnnn{symmetry}. 
We show that \zwnnn{ our random codes almost always saturate the lower bounds to leading order (up to constant factors, in certain cases exactly)}, indicating that \zwnnn{the symmetric} unitaries typically give rise to nearly optimal covariant codes.   The results hold if the Haar-random \zwmerge{symmetric} unitary is simplified to corresponding 2-designs, and as we conjecture, efficient random circuits composed of \zwmerge{symmetric} local gates.   Note that in our case with \zwmerge{symmetries} the error is intrinsically polynomially small, while in the no-symmetry case the error of such Haar-random codes is normally exponentially small and there exist perfect codes.

}

\zw{
The rest of this paper is structured as follows. In Sec.~\ref{sec:preliminary} we formally introduce the relevant background. 
Then in Sec.~\ref{sec:u1} and Sec.~\ref{sec:sud} we respectively discuss the $U(1)$ and $SU(d)$ cases. In each case, we first define our construction of the random covariant codes based on Haar-random symmetric unitaries, then present the analysis of their error  as measured by Choi and worst-case purified distances, and finally make explicit comparisons  with known lower bounds on the code error and other known constructions.
In Sec.~\ref{sec:designs}, we discuss the extension of our study to symmetric $t$-designs and random circuits.   We conclude the work with discussions on future directions, in particular the potential relevance to several topics of recent interest in physics,  in Sec.~\ref{sec:discussion}.  The Appendices include technical details of our derivation and several side results.
}


\section{Preliminaries}
Here, we formally introduce several basic concepts and tools that will play key roles in this work.
\label{sec:preliminary}
\subsection{Approximate quantum error correction and complementary channel formalism}
\label{sec:approx-qec}

The general procedure of QEC is to first encode the logical quantum system we wish to protect by some quantum code living in a larger physical system subject to noise and errors, and perform a decoding operation with the aim of recovering the original logical information. We denote the logical and physical systems (Hilbert spaces), respectively, by $L$ and $S$.  Let $\cE^{L\to S}$ denote an encoding map that defines a code, $\cD^{S\to L}$ denote a decoding map, and $\cN^S$ denote the noise channel\footnote{We may omit the superscripts that label the associated systems when there is no ambiguity.}. 

In this work, we consider {approximate QEC}, where the code does not necessarily enable perfect recovery of the logical information but may still be useful in broad scenarios. The performance of such an approximate code can be intuitively quantified by the ``distance'' between the input and output logical states. 
Here we mainly use a well-behaved distance measure called the \emph{purified distance}. Specifically, the purified distance between two quantum states $\rho$ and $\sigma$ is defined as 
\begin{equation}
    P(\rho,\sigma) := \sqrt{1-F(\rho,\sigma)^2},
\end{equation}
where $F$ is the \emph{fidelity} defined by
\begin{equation}\label{fidelity}
    F(\rho,\sigma) := \|\sqrt{\rho}\sqrt{\sigma}\|_1 = \Tr\sqrt{\sqrt{\rho}\sigma\sqrt{\rho}}.
\end{equation}
Note that the term ``fidelity'' sometimes means $F(\rho,\sigma)^2$ (Uhlmann fidelity) in the literature, but we shall stick to the definition of Eq.~(\ref{fidelity}) in this work. 
It is known \cite[Section II]{tomamichel2010} that the purified distance satisfies the triangle inequality
\begin{equation}
    P(\rho,\sigma) \le P(\rho,{\tau}) + P({\tau},\sigma)
\end{equation}
for any state {$\tau$}. Furthermore, it satisfies the following relation with the 1-norm distance \cite{nielsen2011}:
\begin{equation}
    \frac{1}{2}\|\rho-\sigma\|_1  \le P(\rho,\sigma) \le \sqrt{2\|\rho-\sigma\|_1}. \label{eq:distances}
\end{equation}
We consider the following two most important types of characterizations of the overall performance of the  code.
The first one uses the Choi isomorphism, which considers a maximally entangled state as the input and essentially characterizes the average-case behavior of different logical states. 
More explicitly, let $R$ be a reference system with the same Hilbert space dimension as $L$, and we define the \linghang{Choi fidelity and} 
\zw{\emph{Choi error}}
of the code given by $\cE$ as
\begin{align}
   F_\choi & := \max_{\cD} F(\hat\phi^{LR}, [(\cD \circ \cN \circ \cE)^L \otimes \id^R](\hat\phi^{LR})),\\
   \epsilon_\choi& :=\sqrt{1-F_\choi^2},
\end{align}
where $\hat\phi^{LR}$ is the maximally entangled state between $L$ and $R$,
\[
    \hat\phi^{LR} = |\hat\phi\>\<\hat\phi|^{LR},\quad |\hat\phi\>^{LR} := \frac{1}{\sqrt{d_L}}\sum_{i=0}^{d_L-1} |i\>^L|i\>^R,
\]
and $d_L$ is the Hilbert space dimension of $L$. 
\zwmerge{As mentioned, the Choi error can be regarded an average-case measure since $
\epsilon_\choi = \sqrt{\frac{d_L+1}{d_L}}\epsilon_A$ \cite{gilchrist2005distance,1998quant.ph..7091H} where $\epsilon_A:= \sqrt{\int d\psi^L P^2(\psi^L,(\cD \circ \cN \circ \cE) \psi^L)}$
with the integral over the uniform Haar measure ($\psi^L$ is some pure logical state).  
In particular, as $d_L$ increases, $\epsilon_\choi$ and $\epsilon_A$ approach the same value.}
The second one is based on considering the worst-case behavior over all input states, leading to the definitions of the worst-case fidelity and \zw{\emph{worst-case error}}: 
\begin{align}
    F_\worst &:=  \max_{\cD}\min_{R,\rho^{LR}} F(\rho^{LR}, [(\cD \circ \cN \circ \cE)^L \otimes \id^R](\rho^{LR})),\\
    \epsilon_\worst &:=\sqrt{1-F_\worst^2}.
\end{align}
\linghang{Note that the minimization runs over all reference systems $R$ and all input states $\rho^{LR}$.}

The code errors $\epsilon_\choi$ and $\epsilon_\worst$ can be characterized using the formalism of complementary channels \cite{beny2010}. 
It is always possible to view $(\cN \circ \cE)^{L \to S}$ as a unitary mapping {from} $L$ to the {joint system of the physical system $S$ and an environment $E$}, followed by a partial trace over $E$.  The complementary channel of \zwnn{$(\cN \circ \cE)^{L \to S}$, denoted by $(\widehat{\mathcal{N}\circ\mathcal{E}})^{L \to E}$, is given by tracing out $S$ and outputting the state left in the environment $E$. Intuitively, an encoding is good if we cannot learn much about the input from the environment, meaning that there is not much information leaked into the environment.} To be more precise, we have
\begin{align}
    \epsilon_\choi = \min_{\zeta}P(&({\widehat{\mathcal{N}\circ\mathcal{E}}}^{L\to E}\otimes \id^R)(|\hat\phi\>\<\hat\phi|^{LR}), \nonumber \\
    &(\mathcal{T}_\zeta^{L\to E}\otimes \id^R)(|\hat\phi\>\<\hat\phi|^{LR})), 
    \\
     \epsilon_\worst = \min_{\zeta}\max_{R,|\psi\>^{LR}} P(&(\widehat{\mathcal{N}\circ\mathcal{E}}^{L\to E}\otimes \id^R)(|\psi\>\<\psi|^{LR}), \nonumber \\
     &(\mathcal{T}_\zeta^{L\to E}\otimes \id^R)(|\psi\>\<\psi|^{LR})), \label{eq:complementary}
\end{align}
where $\mathcal{T}_\zeta$ is \zwnn{a} constant channel
\[
    \mathcal{T}_\zeta^{L\to E}(\rho^L) = \Tr[\rho^L]\zeta^E
\]
\zwnn{which  outputs state $\zeta$.}
Note that it is always possible to assume the dimension of $R$ is smaller or equal to that of $L$, because the subspace perpendicular to the support of 
\zwmerge{the reduced state on $R$} does not \zwmerge{contribute to the distances}. A property of the constant channel is that
\begin{equation}
    (\mathcal{T}_\zeta^{L\to E}\otimes \id^R)(\rho^{LR}) = \zeta^E \otimes \Tr_L[\rho^{LR}] \label{eq:const-channel},
\end{equation}
which is useful for our calculations later.

\subsection{Covariant codes}\label{sec:covcode}

Let $G$ be a Lie group, and let $g\to U^S(g)$ and $g\to U^L(g)$ be representations of $G$ representing the symmetry transformations on the physical and logical Hilbert spaces respectively. We say a code is covariant with respect to $G$ if the encoding channel $\cE$ commutes with the representations, i.e.,
\begin{equation}
    \cE(U^L(g)\rho U^L(g)^\dagger) = U^S(g) \cE(\rho) U^S(g)^\dagger
\end{equation}
for all $g \in G$ and state $\rho$. \zw{A standard scenario (consider the Eastin--Knill theorem for local errors) is when  $U^S(g)$ takes the transversal (tensor product) form}
\begin{equation}
    U^S(g) = U_1(g) \otimes U_2(g) \otimes \cdots \otimes U_n(g),
\end{equation}
where $U_i(g)$ acts on the $i$-th \zw{physical subsystem} (transversal part).   \zwnn{For example, for $G = U(1)$, the symmetry transformations on the logical and physical systems are \begin{equation}
    U^L = e^{-i\theta T^L}, \quad U^S = e^{-i\theta T^S },
\end{equation}
respectively generated by charge operators (Hamiltonians) $T^L$ and $T^S$,
} 
where the tensor product structure of {$U^S$} dictates that $T^S$ takes the \zwnn{1-local} form
\begin{equation}
    T^S = \sum_{i=1}^n (T^S)_i,
\end{equation}
where $(T^S)_i$ \zwnn{is only supported on the} $i$-th qubit.

\subsection{Conditional quantum min-entropy}
\label{subsec:min-entropy}
For a bipartite quantum state $\rho^{PQ}$, the conditional min-entropy (conditioned on $Q$) is defined as
\begin{equation}
    H_{\min}(P|Q)_\rho := \sup_{\sigma\ge 0, \Tr\sigma=1}\sup\{\lambda\in \R| 2^{-\lambda}I^P\otimes \sigma^Q \ge \rho^{PQ}\}.
\end{equation}
For pure state $\psi  = |\psi\>\<\psi|^{PQ}$, there is a simple formula for $H_{\min}(P|Q)_\psi$. Let the Schmidt coefficients of $|\psi\>$ be $\alpha_1,\cdots, \alpha_D$, then the conditional min-entropy is given by
\begin{equation}
    H_{\min}(P|Q)_\psi = -2\log(\alpha_1 + \cdots + \alpha_D). \label{eq:min-max}
\end{equation}
To see this, note that \cite{konig2009} 
for any tripartite pure state $\rho$ on $X$, $Y$, and $Z$,
\begin{equation}
    H_{\min}(X|Y)_\rho + H_{\max}(X|Z)_\rho = 0,
\end{equation}
where the conditional max-entropy $H_{\max}$ is defined as
\begin{equation}
    H_{\max}(X|Z)_\rho \zwnn{:=} \sup_{\sigma^Z} \log F(\rho^{XZ},I^X \otimes \sigma^Z)^2.
\end{equation}
Note that $I^X \otimes \sigma^Z$ is not a normalized quantum state and $F(\rho^{XZ},I^X \otimes \sigma^Z)$ should be interpreted as $d_X F\left(\rho^{XZ},\frac{I^X}{d_X} \otimes \sigma^Z\right)$.

In our case, the state $|\psi\>$ is pure on $P$ and $Q$, so we can choose the third register $R$ to be a trivial system and therefore \zwnn{obtain}
\begin{align}
    H_{\min}(P|Q)_\psi =& -H_{\max}(P|R)_\psi \nonumber\\
    =& -2\log\|\sqrt{\psi^P}\sqrt{I^P}\|_1 \nonumber\\
    =& -2\log(\Tr[\sqrt{\psi^P}]) \nonumber\\
    =& -2\log(\alpha_1 + \cdots + \alpha_D). \label{eq:pure-min}
\end{align}

\subsection{Decoupling and partial decoupling}
\label{sec:partial-dec}

The (one-shot) decoupling theorem  \cite{dupuis2014} {characterizes the degree to which a system is decoupled from the environment under certain channels in terms of (suitable variants of) conditional min-entropies.} 
It can actually be viewed as a concentration-of-measure type bound
where the randomness comes from a Haar-random unitary acting on the system.  To be more precise, for any bipartite state $\rho^{AR}$ and quantum channel $\tau^{A\to E}$,  
the decoupling theorem gives an upper bound for the following quantity:
\begin{equation}
    \E_{U^A \sim \text{Haar}}\left[\left\|\mathcal{T}^{A \rightarrow E} \circ \mathcal{U}^{A}\left(\Psi^{A R}\right)-\mathcal{T}^{A \rightarrow E}\left(\Psi_\avg^{A R}\right)\right\|_{1} \right],
\end{equation}
where $\mathcal{U}^{A}$ is the superoperator defined as
\begin{equation}
    \mathcal{U}^{A}\left(\Psi^{A R}\right) := U^A \rho^{AR} U^{A\dagger}
\end{equation}
and \zwnn{$\rho^{AR}_{\avg}$ is an \zwnnn{average  state} given by}
\begin{equation}
    \quad \rho^{AR}_{\avg} := \E_{U^A \sim \text{Haar}} U^A \rho^{AR} U^{A\dagger}.
\end{equation}

A generalization of decoupling called \emph{partial decoupling} that is useful for our purpose was studied in Ref.~\cite{wakakuwa2019}, where the unitary $U^A$ \zwnew{exhibits further structures}. 
More specifically, {we assume that} the Hilbert space of $A$ takes the form of a direct-sum-product decomposition
\[
    \H^A = \bigoplus_{j=1}^J \H_j^{A_l} \otimes \H_j^{A_r},
\]
and $U^A$ satisfies
\[
    U^A = \bigoplus_{j=1}^J  I_j^{A_l} \otimes U_j^{A_r}
\]
where $U_j^{A_r}$ is Haar-random on $\H_j^{A_r}$. The distribution of such $U^A$ \zwnew{is denoted by} $H_\times$. Let $l_j$ and $r_j$ be the dimensions of $\H_j^{A_l}$ and $\H_j^{A_r}$, respectively. The (non-smoothed) partial decoupling theorem \cite[Eq.~(84)]{wakakuwa2019} states that

\begin{align}
    &\mathbb{E}_{U \sim \zwnew{H_\times}}\left[\left\|\mathcal{T}^{A \rightarrow E} \circ \mathcal{U}^{A}\left(\Psi^{A R}\right)-\mathcal{T}^{A \rightarrow E}\left(\Psi_\avg^{A R}\right)\right\|_{1}\right] \nonumber\\
    \leq& 2^{-\frac{1}{2} H_{\min }(A^*|RE)_{\Lambda(\Psi, \mathcal{T})}}.
\end{align}
Here, the state $\Lambda(\Psi, \mathcal{T})$ is defined as
\begin{equation}\label{eq:lambda1}
    \Lambda(\Psi, \mathcal{T}) := \Xi(\Psi^{AR} \otimes \tau^{\bar A E})\Xi^\dagger,
\end{equation}
where $\tau^{\bar A E}$ is the Choi state of $\mathcal{T}$ and the operator \zwnn{$\Xi^{A\bar A \to A^*}$ is a map from \lh{$\H^{A} \otimes  \H^{\bar A}$ to $\H^{A^*} := \bigoplus_{j=1}^J \H_j^{A_r} \otimes \H_j^{\bar A_r}$} given by}
\begin{equation}\label{eq:lambda_F}
    \Xi^{A \bar{A} \rightarrow A^{*}} := \bigoplus_{j=1}^J \sqrt{\frac{d_A l_j}{r_j}}\<\Phi_j^l|^{A_l\bar A_l}\left(\Pi_{j}^{A} \otimes \Pi_{j}^{\bar{A}}\right)
\end{equation}
with $|\Phi_j^l\>$ being the maximally entangled state, and $\Pi_{j}^{A}$ being the projector onto $\H_j^{A_l} \otimes \H_j^{A_r}$. \zwnn{Also, $ \Psi_{\avg}^{A R}$ is again the \zwnnn{average state} defined by}
\begin{equation}
    \Psi_{\avg}^{A R} := \E_{U\sim H_\times} U \Psi^{A R} U^\dagger,
\end{equation}
\zwnn{which \zwnnn{takes the form}
\begin{equation}
    \Psi_{\avg}^{A R} = \bigoplus_{j=1}^J \Psi_{jj}^{A_l R} \otimes \frac{I_j^{A_r}}{r_j} \label{eq:avg-state}
\end{equation}
where
     $\Psi_{jj}^{A_l R} := \Tr_{A_r}[\Pi_j^A \Psi^{AR} \Pi_j^A].$
}

\section{$U(1)$ symmetry}
\label{sec:u1}
\zwnew{In this section, we consider the most basic $U(1)$ symmetry case, \zwnn{which is Abelian and  corresponds to a single conservation law}. }

\subsection{\zwnew{$U(1)$-covariant codes from charge-conserving unitaries}}
\label{sec:u1-construction}
\zwnew{We first formally define our code construction which will be analyzed.} \zwnn{Here we consider quantum codes that are covariant with respect to the $U(1)$ symmetry, or namely conserve the $U(1)$ charge, in the sense introduced in Sec.~\ref{sec:covcode}. That is, let $\cE$ be the encoding channel, and let the $U(1)$ group action $e^{i\theta} \in U(1)$ be represented as $e^{i\theta} \to e^{-i\theta T^L}$ and $e^{i\theta} \to e^{-i\theta T^S}$ generated by a logical charge operator $T^L$ and a transversal physical operator $T^S$ on the logical and physical Hilbert spaces, respectively. 
Then the covariance condition requires that $\cE^{L\rightarrow S}\circ \mathcal{U}^L = \mathcal{U}^S \circ\cE^{L\rightarrow S}$ where $\mathcal{U}^{L(S)} := e^{-i\theta T^{L(S)}}(\cdot) e^{i\theta T^{L(S)}}$, namely that the encoding map commutes with group actions.}
\zwnn{Without loss of generality, we consider the Hamming weight operator as the charge operator. On $m$ qubits, it is defined by}
\[
    Q^{(m)} := \sum_{i=1}^m \frac{I - Z_i}{2},
\]
where $Z = |0\>\<0| - |1\>\<1|$ is the Pauli-Z operator on a single qubit, and $Z_i$ is the operator $Z$ acting on qubit $i$. 
\zwnn{We consider codes that encode $k$ logical qubits in $n$ physical qubits, so that the logical and physical charge operators  are, respectively, $T^L = Q^{(k)}$  and $T^S = Q^{(n)}$.} 

Our code construction relies on $n$-qubit unitaries that commute with $Q^{(n)}$, i.e.,~conserve the Hamming weight. Such unitaries are block diagonal with respect to the eigenspaces of $Q^{(n)}$\zwmerge{, forming a compact subgroup of the unitary group} \lhn{denoted by $\cU_{\times, U(1)}$}. 
Let \zwnew{$H_{\times,{U(1)}}$} denote the Haar measure on the group \lhn{$\cU_{\times,U(1)}$.}
\begin{defi}
\zwnn{A code is called} a \emph{\zwnew{$(n,k;\alpha)$-$U(1)$ code}} if it encodes $k$ logical qubits in $n$ physical qubits by first appending an $(n-k)$-qubit state $|\psi_\alpha\>$ that \zwnew{has Hamming weight $\alpha$, i.e.,~satisfies}
\[
    Q^{(n-k)} |\psi_\alpha\> = \alpha |\psi_\alpha\>,
\]
and then applying a unitary $U$ on the joint $n$-qubit system that commutes with $Q^{(n)}$.  In particular, a \emph{\zwnew{$(n,k;\alpha)$-$U(1)$ random code}} is given by $U$ drawn from \zwnew{$H_{\times,{U(1)}}$}.
\end{defi}
It is straightforward to verify that such codes indeed satisfy the covariance condition.
\begin{prop}
\label{thm:cov}
 $(n,k;\alpha)$-$U(1)$ codes are covariant with respect to the $U(1)$ group \zwnew{ generated by charge operators $T^L=Q^{(k)}$ and  $T^S = Q^{(n)}$ on the logical and physical systems respectively}. 
This property holds for the \zwnew{$(n,k;\alpha)$-$U(1)$ random code}.
\end{prop}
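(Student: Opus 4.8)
The plan is to verify the covariance condition directly from the definition of the code, reducing everything to the commutation relation between the building-block unitary $U$ and the physical charge operator $Q^{(n)}$. First I would write the encoding channel explicitly as the composition of two operations: the isometry $V^{L\to S}$ that appends the fixed ancilla state $|\psi_\alpha\>$, so $V|\varphi\>^L = |\varphi\>^L\otimes|\psi_\alpha\>^{n-k}$, followed by conjugation by the weight-conserving unitary $U$. Thus $\cE(\rho) = U(\rho^L\otimes|\psi_\alpha\>\<\psi_\alpha|)U^\dagger$, and I need to check $\cE(e^{-i\theta Q^{(k)}}\rho\, e^{i\theta Q^{(k)}}) = e^{-i\theta Q^{(n)}}\cE(\rho)\,e^{i\theta Q^{(n)}}$ for all $\theta$ and all $\rho$.

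The key step is the observation that the appended ancilla is an eigenstate of $Q^{(n-k)}$ with eigenvalue $\alpha$, so $e^{-i\theta Q^{(n-k)}}|\psi_\alpha\> = e^{-i\theta\alpha}|\psi_\alpha\>$ — a global phase that drops out of the density matrix $|\psi_\alpha\>\<\psi_\alpha|$. Hence on the joint system, using $Q^{(n)} = Q^{(k)}\otimes I + I\otimes Q^{(n-k)}$ (the Hamming weight is additive across the two blocks of qubits), we get
\begin{equation}
e^{-i\theta Q^{(n)}}\bigl(\rho^L\otimes|\psi_\alpha\>\<\psi_\alpha|\bigr)e^{i\theta Q^{(n)}} = \bigl(e^{-i\theta Q^{(k)}}\rho^L e^{i\theta Q^{(k)}}\bigr)\otimes|\psi_\alpha\>\<\psi_\alpha|,
\end{equation}
the ancilla phases cancelling. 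Then I would conjugate by $U$ on both sides and use that $U$ commutes with $Q^{(n)}$, i.e. $U e^{-i\theta Q^{(n)}} = e^{-i\theta Q^{(n)}}U$, to move $U$ past the physical group action. Chaining these two facts gives exactly $e^{-i\theta Q^{(n)}}\cE(\rho)e^{i\theta Q^{(n)}} = \cE\bigl(e^{-i\theta Q^{(k)}}\rho\, e^{i\theta Q^{(k)}}\bigr)$, which is the covariance condition with $T^L = Q^{(k)}$, $T^S = Q^{(n)}$.

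Finally, since every $U$ in the support of $H_{\times,U(1)}$ lies in $\cU_{\times,U(1)}$ and therefore by construction commutes with $Q^{(n)}$, the argument applies verbatim to each realization of the random code; covariance is a pointwise property of each $U$ in the ensemble, so it holds for the $(n,k;\alpha)$-$U(1)$ random code as well (no averaging is needed). I do not anticipate a genuine obstacle here — the only point requiring a moment's care is the bookkeeping that the ancilla's charge eigenvalue $\alpha$ contributes only an overall phase and thus is invisible at the level of channels, and that the Hamming weight operator splits additively over the logical-qubit block and the ancilla block, which is immediate from its definition $Q^{(m)} = \sum_i (I-Z_i)/2$.
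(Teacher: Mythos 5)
Your proposal is correct and follows essentially the same route as the paper's proof: both rest on the additive splitting $Q^{(n)} = Q^{(k)}\otimes I + I\otimes Q^{(n-k)}$, the fact that $|\psi_\alpha\>$ is a charge eigenstate so the ancilla acquires only a cancelling phase, and the commutation of $U$ with $e^{i\theta Q^{(n)}}$, with the random-code case following pointwise for every $U$ in the support of $H_{\times,U(1)}$. No gaps.
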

\begin{proof}
Since the $n$-qubit unitary $U$ commutes with $Q^{(n)}$, it commutes with $e^{iQ^{(n)}\theta}$ for all $\theta$. Then for any $k$-qubit logical state $\rho$ we have
\begin{align}
    &e^{i Q^{(n)} \theta} U(\rho \otimes |\psi_\alpha\>\<\psi_\alpha|)U^\dagger e^{-iQ^{(n)}\theta} \nonumber\\
    =& U e^{i Q^{(n)} \theta} (\rho \otimes |\psi_\alpha\>\<\psi_\alpha|) e^{-iQ^{(n)}\theta} U^\dagger \nonumber\\
    =& U  (e^{i Q^{(k)} \theta}\rho e^{-i Q^{(k)} \theta} \otimes e^{i Q^{(n-k)} \theta}|\psi_\alpha\>\<\psi_\alpha|e^{-i Q^{(n-k)} \theta})  U^\dagger \nonumber\\
    =& U  (e^{i Q^{(k)} \theta}\rho e^{-i Q^{(k)} \theta} \otimes |\psi_\alpha\>\<\psi_\alpha|)  U^\dagger,
\end{align}
which means the encoding map $\cE$ satisfies the covariance condition
\[
    e^{i Q^{(n)} \theta} \cE(\rho) e^{-i Q^{(n)} \theta} = \cE(e^{i Q^{(k)} \theta}\rho e^{-i Q^{(k)} \theta}).
\]
\end{proof}

\zwnew{We emphasize that our \zwnew{$(n,k;\alpha)$-$U(1)$ random code} not only represents a randomized construction of $U(1)$-covariant codes, but also reveals the average or typical properties of all charge-conserving unitaries due to the Haar measure.
}

\subsection{Performance of random $U(1)$-covariant codes}
\label{sec:u1-error}

We now explicitly analyze both \zwnn{the} Choi error and the worst-case error of the \zwnew{$(n,k;\alpha)$-$U(1)$ random code}  in terms of purified distance, against \zwnn{the} erasure of $t$ qubits.  
\zwnew{Here, for simplicity of exposition, we fix the set of erased qubits, but note that the results still hold when the erased qubits are picked randomly as will be discussed in Sec.~\ref{sec:discussion}.}
\zwnn{Note that the complementary channel of the noise channel, namely \zwnn{the} erasure of $t$ qubits, is simply a partial trace over the other $n-t$ unaffected qubits, which we denote by $\Tr_{n-t}[\cdot]$.}

\subsubsection{Choi error}
\label{sec:choi}


\begin{theo}
\label{thm:choi}
\linghang{
In the large $n$ limit, \zwnn{suppose} $k$ and $t$ satisfy $k^2t^2 = o(n)$ and $\alpha=a n$ \zwnn{for some constant $a \in (0,1)$}, then the \zw{expected Choi error of the $(n,k;\alpha)$-$U(1)$ random code} satisfies}
\begin{equation}\label{eq:choi_expectation}
    \E \epsilon_\choi \le  \frac{\sqrt{tk}}{4n\sqrt{a(1-a)}}\left(1+O\left(\frac{k^2t^2}{n}\right)\right).
\end{equation}
Furthermore, the probability that \zw{the Choi error of a $(n,k;\alpha)$-$U(1)$ code (with respect to $H_{\times,U(1)}$)}  violates the  inequality above is exponentially small in $n$, i.e.,
\begin{equation} \label{eq:choi_prob}
\Pr\left[\epsilon_\choi >  \frac{\sqrt{tk}}{4n\sqrt{a(1-a)}}\left(1+O\left(\frac{k^2t^2}{n}\right)\right)\right] = e^{-\Omega(n)}.
\end{equation}
\end{theo}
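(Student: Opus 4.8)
The plan is to separate $\epsilon_\choi$, via the triangle inequality, into a ``typicality'' term controlled by partial decoupling (which will be exponentially small) and an ``intrinsic'' term coming purely from the $U(1)$ symmetry (the $\sqrt{tk}/n$ term). First I would rewrite the Choi error using the complementary-channel formalism of Sec.~\ref{sec:approx-qec}: the complementary channel of erasure of $t$ qubits is the partial trace $\Tr_{n-t}[\cdot]$ onto the environment $E$ consisting of the $t$ erased qubits, so for a fixed encoding unitary $U$ the composite complementary channel applied to the Choi input yields
\begin{equation}
  \Psi^{ER}(U):=\big[\big(\Tr_{n-t}\circ\,\cU^{A}\big)\otimes\id^R\big]\big(|\hat\phi\>\<\hat\phi|^{LR}\otimes|\psi_\alpha\>\<\psi_\alpha|^{A'}\big),
\end{equation}
where $A=LA'$ are the $n$ physical qubits and $\cU^A(\cdot)=U(\cdot)U^\dagger$. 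Setting $\Psi^{ER}_\avg:=\E_{U\sim H_{\times,U(1)}}\Psi^{ER}(U)$, the complementary-channel expression for $\epsilon_\choi$ and the triangle inequality give $\epsilon_\choi\le\epsilon_1(U)+\epsilon_2$, where $\epsilon_1(U):=P\big(\Psi^{ER}(U),\Psi^{ER}_\avg\big)$ and $\epsilon_2:=\min_{\zeta}P\big(\Psi^{ER}_\avg,\zeta^E\otimes I^R/2^k\big)$ is deterministic.

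For $\epsilon_1$ I would invoke partial decoupling. The Hamming-weight--conserving unitaries are exactly the ensemble $H_\times$ of Sec.~\ref{sec:partial-dec} for the direct-sum decomposition of $\H^A$ into the weight-$w$ eigenspaces of $Q^{(n)}$ (so $J=n+1$, each multiplicity factor $A_l$ trivial, $r_w=\binom{n}{w}$). Using $P(\rho,\sigma)\le\sqrt{2\|\rho-\sigma\|_1}$ and concavity of the square root, $\E\,\epsilon_1\le\big(2\cdot 2^{-\frac12 H_{\min}(A^*|RE)_{\Lambda}}\big)^{1/2}$, so it suffices to show $H_{\min}(A^*|RE)_{\Lambda}=\Omega(n)$. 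I would write out the purification $|\Lambda\>^{A^*REE'}$ of $\Lambda(\Psi,\mathcal{T})$ explicitly, $E'$ being the $(n-t)$-qubit Choi register of $\Tr_{n-t}$; since $|\Lambda\>$ is pure, $\lambda_{\max}(\Lambda^{A^*RE})=\lambda_{\max}(\Lambda^{E'})$, and a direct computation shows $\Lambda^{E'}$ is diagonal with largest eigenvalue at most $2^{t}/\min_{0\le\mu\le k}\binom{n}{\alpha+\mu}$. For $\alpha=an$ with $a\in(0,1)$ fixed and $k=o(n)$ one has $\min_\mu\binom{n}{\alpha+\mu}\ge 2^{nH(a)-o(n)}$ ($H$ the binary entropy), so the elementary bound $H_{\min}(X|Y)_\rho\ge-\log\lambda_{\max}(\rho^{XY})-\log d_Y$ gives $H_{\min}(A^*|RE)_{\Lambda}\ge nH(a)-2t-k-o(n)=\Omega(n)$. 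Hence $\E\,\epsilon_1=2^{-\Omega(n)}$; applying Markov's inequality to $\E\|\Psi^{ER}(U)-\Psi^{ER}_\avg\|_1\le 2^{-\Omega(n)}$ shows moreover $\epsilon_1(U)=2^{-\Omega(n)}$ except with probability $2^{-\Omega(n)}$.

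For $\epsilon_2$, averaging a block-diagonal unitary kills all cross-weight coherences and maps each block to its maximally mixed state, so $\Psi^{ER}_\avg=\sum_{\mu=0}^{k}\tfrac1{2^k}\rho^E_\mu\otimes P^R_\mu$, with $P^R_\mu$ the projector onto logical weight $\mu$ (rank $\binom{k}{\mu}$) and, conditioned on logical weight $\mu$, the environment in the permutation-invariant state $\rho^E_\mu$ whose computational-basis spectrum is the hypergeometric law $\rho_\mu(s)=\binom{t}{s}\binom{n-t}{\alpha+\mu-s}/\binom{n}{\alpha+\mu}$ in the environment weight $s$. By permutation symmetry on $E$ and a pinching argument the optimal $\zeta$ can be taken diagonal and flat within weight sectors, reducing the fidelity to $\sum_\mu w_\mu\sum_s\sqrt{\rho_\mu(s)\zeta(s)}$ with $w_\mu=\binom{k}{\mu}2^{-k}$; optimizing over the probability vector $\zeta$ gives $\sqrt{\zeta(s)}\propto g(s):=\sum_\mu w_\mu\sqrt{\rho_\mu(s)}$ and $F_{\max}=\|g\|_2$. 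The crucial estimate is the expansion, valid under $\alpha=an$ and $k^2t^2=o(n)$,
\begin{equation}
  \rho_\mu(s)=\binom{t}{s}a^s(1-a)^{t-s}\Big(1+\frac{\mu(s-at)}{na(1-a)}+\cdots\Big),
\end{equation}
so $\rho_\mu$ is $\mathrm{Bin}(t,a)$ up to an $O(\mu/n)$ tilt. Substituting into $\|g\|_2$ and expanding the square root to second order, the first-order term vanishes because $\sum_s\binom{t}{s}a^s(1-a)^{t-s}(s-at)=0$, and the second-order term yields $F_{\max}^2=1-\frac{tk}{16n^2a(1-a)}+\cdots$ (the logical-weight variance $k/4$ times the $\mathrm{Bin}(t,a)$ variance $ta(1-a)$, over $4n^2a^2(1-a)^2$). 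Hence $\epsilon_2=\sqrt{1-F_{\max}^2}=\frac{\sqrt{tk}}{4n\sqrt{a(1-a)}}\big(1+O(k^2t^2/n)\big)$.

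Combining, $\E\,\epsilon_\choi\le\epsilon_2+\E\,\epsilon_1=\frac{\sqrt{tk}}{4n\sqrt{a(1-a)}}\big(1+O(k^2t^2/n)\big)$, the $2^{-\Omega(n)}$ term being negligible beside $\sqrt{tk}/n$; and since $\epsilon_\choi\le\epsilon_1(U)+\epsilon_2$ with $\epsilon_2$ deterministic, any $U$ violating the stated bound must have atypically large $\epsilon_1(U)$, which occurs with probability $2^{-\Omega(n)}$, giving~\eqref{eq:choi_prob}. I expect the main obstacle to be the precise $1/n$-asymptotics underlying $\epsilon_2$: justifying the reduction to the classical barycenter, controlling the higher-order terms of the binomial-ratio expansion under $k^2t^2=o(n)$, and extracting the exact constant $\tfrac{1}{4\sqrt{a(1-a)}}$. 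The partial-decoupling step is conceptually the key tool but is comparatively forgiving, since only the crude estimate $H_{\min}(A^*|RE)_{\Lambda}=\Omega(n)$ is needed.
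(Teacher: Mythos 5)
Your proposal is correct and follows essentially the same route as the paper's proof: the triangle-inequality split into a partial-decoupling term (bounded via $H_{\min}(A^*|RE)_{\Lambda}=\Omega(n)$, exactly the operator-norm estimate $\|\Lambda\|_\infty\le 2^t/\min_{\alpha\le j\le\alpha+k}\binom{n}{j}$ the paper uses, plus Markov for the tail bound) and a deterministic symmetry term computed from the block-averaged state, whose binomial/hypergeometric expansion yields $1-F^2=\frac{tk}{16a(1-a)n^2}+\cdots$ and hence $\frac{\sqrt{tk}}{4n\sqrt{a(1-a)}}$. The only cosmetic differences are that you reach the min-entropy bound via a purification argument rather than bounding $\|\Xi\|_\infty$ directly, and you optimize $\zeta$ (getting $F_{\max}=\|g\|_2$) where the paper simply plugs in the marginal $\zeta_0$ of the average state; since only an upper bound on the symmetry term is needed, both give the same leading order.
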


\begin{proof}
\zwnew{To analyze the error, we employ the complementary channel formalism \zwnn{introduced in Sec.~\ref{sec:approx-qec}}. }
By Eq.~\eqref{eq:complementary}, for a specific choice of $U$ in our $(n,k;\alpha)$-$U(1)$ code construction, the Choi error of the corresponding code satisfies
\begin{align}
    &\epsilon_\choi  \nonumber\\
    =& \min_\zeta P\left(\Tr_{n-t}[U(|\hat\phi\>\<\hat\phi|\otimes |\psi_\alpha\>\<\psi_\alpha|)U^\dagger],\frac{I}{2^k}\otimes \zeta\right) \nonumber\\
    \le& P\left(\Tr_{n-t}[U(|\hat\phi\>\<\hat\phi|\otimes |\psi_\alpha\>\<\psi_\alpha|)U^\dagger],\Tr_{n-t}\Phi_\text{avg}\right) \nonumber\\
    &+\min_\zeta P\left(\Tr_{n-t}\Phi_\text{avg},\frac{I}{2^k}\otimes \zeta\right) \nonumber \\
    \le& \sqrt{2\left\|\Tr_{n-t}[U(|\hat\phi\>\<\hat\phi|\otimes |\psi_\alpha\>\<\psi_\alpha|)U^\dagger]-\Tr_{n-t}\Phi_\text{avg}\right\|_1} \nonumber\\
    &+\min_\zeta P\left(\Tr_{n-t}\Phi_\text{avg},\frac{I}{2^k}\otimes \zeta\right), \label{eq:bound}
\end{align}
where \lh{$|\psi_\alpha\>$ has eigenvalue $\alpha$ and could be taken as $|\psi_\alpha\> = |1 \cdots 10 \cdots 0\>$ ($\alpha$ 1's and $n-\alpha$ 0's) without loss of generality,} \zwnew{and $\Phi_\text{avg}$ is \zwnnn{the average physical state} given by}
\begin{equation}
    \Phi_\text{avg} = \E_{U\sim H_{\times,U(1)}} U(|\hat\phi\>\<\hat\phi|\otimes |\psi_\alpha\>\<\psi_\alpha|)U^\dagger.
\end{equation}
Eq.~\eqref{eq:const-channel} is used in the first line, \zw{and the second line follows from the triangle inequality of $P$}.

When averaging over $U$ sampled from $H_{\times,U(1)}$, the first term in Eq.~\eqref{eq:bound} can be bounded using the partial decoupling theorem:
\begin{align}
    &\E_{U} \sqrt{2\left\|\Tr_{n-t}[U(|\hat\phi\>\<\hat\phi|\otimes |\psi_\alpha\>\<\psi_\alpha|)U^\dagger]-\Tr_{n-t}\Phi_\text{avg}\right\|_1} \nonumber\\
    \le& \sqrt{2\E_{U}\left\|\Tr_{n-t}[U(|\hat\phi\>\<\hat\phi|\otimes |\psi_\alpha\>\<\psi_\alpha|)U^\dagger]-\Tr_{n-t}\Phi_\text{avg}\right\|_1} \nonumber \\
    \le & \sqrt{2} \times 2^{-\frac{1}{4}H_{\min}(A^*|RE)_{\Lambda(\Psi, \cT)}}.
\end{align}
\zwnn{Here, $\Psi = |\hat\phi\>\<\hat\phi|\otimes |\psi_\alpha\>\<\psi_\alpha|$  is the initial state  on $AR$, and $\cT^{A\rightarrow E}$ is the complementary erasure $\Tr_{n-t}$. Then $\Lambda(\Psi, \cT)$ is  a tripartite state on $A^*RE$ as defined in Eq.~(\ref{eq:lambda1}), obtained by applying the operator $\Xi^{A\bar A\to A^*}$ defined in Eq.~(\ref{eq:lambda_F}) to the tensor product of $\Psi$ and the Choi state of $\cT$ (on $\bar A E$). See Sec.~\ref{sec:partial-dec} for details. Then $H_{\min}(A^*|RE)_{\Lambda(\Psi, \cT)}$ is the conditional min-entropy of this state conditioned on $RE$ (see Sec.~\ref{subsec:min-entropy}).}

We prove in Appendix~\ref{app:entropy} that
\begin{equation}\label{eq:hmin}
    H_{\min }(A^*|RE)_\zwnn{{\Lambda(\Psi, \cT)}} = \Omega(n),
\end{equation}
which implies that the expectation value of the first term is exponentially small. A simple application of Markov's inequality shows that this term is exponentially small with probability equal to 1 minus an exponentially small quantity.

The second term in Eq.~\eqref{eq:bound} is independent of $U$. Since this is a minimization, an upper bound on this term can be found by any choice of $\zeta$. We set $\zeta$ to be the $t$-qubit marginal state of $\Tr_{n-t}[\Phi_\text{avg}]$, and as detailed in Appendix~\ref{app:average-state}, we obtain
\begin{equation}
    P\left(\Tr_{n-t}\Phi_\text{avg},\frac{I}{2^k}\otimes \zeta\right) \le \frac{\sqrt{tk}}{4n\sqrt{a(1-a)}}\left(1+O\left(\frac{t^2k^2}{n}\right)\right).
\end{equation}

\zwnew{Then the claimed bounds  follows from combining the above analysis of the two terms in Eq.~\eqref{eq:bound}.}

\end{proof}

\lh{Note that in \zwnn{Appendix}~\ref{sec:alternative}, we provide an alternative proof of Eq.~\eqref{eq:hmin}.  There we actually give an exact expression for $\Lambda(\Psi,\tau)$, yielding a lower bound as well as an upper bound for the \zwnn{conditional} min-entropy. The discussion may be of independent interest.}

\subsubsection{Worst-case error}
\label{sec:worst-case}
\zwnn{We need the following lemma, which} gives a lower bound of \zwnew{the worst-case purified distance}  for a fixed code.
\begin{lemma}[{\cite[Prop.~4]{faist2019}}]
\label{lem:worst-bound}
Given encoding channel $\mathcal{E}$ and noise channel $\mathcal{N}$, let $\widehat{\mathcal{N}\circ\mathcal{E}}$ be a complementary channel of $\mathcal{N}\circ\mathcal{E}$. \zwnn{For a fixed basis} of logical states $\{|x\>\}$, we define
\begin{equation}
    \rho^{x,x'} \zwnew{:=} \widehat{\mathcal{N}\circ\mathcal{E}} (|x\>\<x'|). \label{eq:rho-def}
\end{equation}
Suppose that there exists a state $\zeta$ as well as constants $\epsilon,\epsilon'>0$ such that
\begin{align}
    P(\rho^{x,x}, \zeta) \le& \epsilon, \label{eq:rhoxx}\\
    \|\rho^{x,x'}\|_1 \le & \epsilon', \quad \forall x \not= x'. \label{eq:rhoxxp}
\end{align}
Then, the code given by $\mathcal{E}$  \zwnn{satisfies}
\begin{equation}
    \epsilon_\worst \le \epsilon+ d_L \sqrt{\epsilon'}, \label{eq:worst-bound}
\end{equation}
where $d_L$ is the dimension of the logical system.

If one of several noise channels is applied at random but it is known which one occurred, then Eq.~\eqref{eq:worst-bound} holds for the overall noise channel if the assumptions above are satisfied for each individual noise channel.
\end{lemma}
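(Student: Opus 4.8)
The plan is to bound $\epsilon_\worst$ directly via the complementary-channel characterization in Eq.~\eqref{eq:complementary}, taking the constant channel's output state to be exactly the $\zeta$ supplied by the hypothesis. Since that formula already expresses $\epsilon_\worst$ as a minimization over $\zeta$ of a maximization over reference systems $R$ and \emph{pure} inputs $|\psi\>^{LR}$, it suffices to show that for this $\zeta$,
\begin{equation}
  P\!\left((\widehat{\cN\circ\cE}^{L\to E}\otimes\id^R)(|\psi\>\<\psi|^{LR}),\ \zeta^E\otimes\Tr_L[|\psi\>\<\psi|^{LR}]\right)\ \le\ \epsilon+d_L\sqrt{\epsilon'}
\end{equation}
for every $R$ and every pure $|\psi\>^{LR}$, where I used Eq.~\eqref{eq:const-channel} to rewrite the $\mathcal{T}_\zeta$ output. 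Expanding $|\psi\>^{LR}=\sum_x\sqrt{p_x}\,|x\>^L|\hat r_x\>^R$ in the fixed logical basis $\{|x\>\}$, with $|\hat r_x\>$ unit vectors and $\sum_x p_x=1$, the output state equals $\sum_{x,x'}\sqrt{p_xp_{x'}}\,\rho^{x,x'}\otimes|\hat r_x\>\<\hat r_{x'}|$ and the target equals $\sum_x p_x\,\zeta\otimes|\hat r_x\>\<\hat r_x|$.

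The key step is to pass through the intermediate ``dephased'' state
\begin{equation}
  \tilde\rho\ :=\ \sum_x p_x\,\rho^{x,x}\otimes|\hat r_x\>\<\hat r_x|,
\end{equation}
which is a legitimate density operator since $\widehat{\cN\circ\cE}$ is trace-preserving, and then apply the triangle inequality for $P$. The difference between the output and $\tilde\rho$ is the purely off-diagonal operator $\sum_{x\neq x'}\sqrt{p_xp_{x'}}\,\rho^{x,x'}\otimes|\hat r_x\>\<\hat r_{x'}|$; by the triangle inequality for $\|\cdot\|_1$, Eq.~\eqref{eq:rhoxxp}, and $\|\,|\hat r_x\>\<\hat r_{x'}|\,\|_1=1$, its trace norm is at most $\epsilon'\sum_{x\neq x'}\sqrt{p_xp_{x'}}=\epsilon'\big[(\sum_x\sqrt{p_x})^2-1\big]\le(d_L-1)\epsilon'$ by Cauchy--Schwarz, so by Eq.~\eqref{eq:distances} this distance is at most $\sqrt{2(d_L-1)\epsilon'}\le d_L\sqrt{\epsilon'}$. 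For the distance between $\tilde\rho$ and the target $\sum_x p_x\,\zeta\otimes|\hat r_x\>\<\hat r_x|$, joint concavity and multiplicativity of the fidelity give $F(\tilde\rho,\zeta\otimes\Tr_L[|\psi\>\<\psi|^{LR}])\ge\sum_x p_x F(\rho^{x,x},\zeta)\ge\sqrt{1-\epsilon^2}$, the last inequality from Eq.~\eqref{eq:rhoxx}, so this distance is at most $\epsilon$. Summing the two bounds gives the claim.

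For the concluding statement, let the noise be $\cN=\sum_i q_i\cN_i$ applied at random with the outcome $i$ heralded; the corresponding complementary channel then carries an additional classical ``which-branch'' register, so that $\rho^{x,x'}=\sum_i q_i\,\widehat{\cN_i\circ\cE}(|x\>\<x'|)\otimes|i\>\<i|$, and one takes $\zeta=\sum_i q_i\,\zeta_i\otimes|i\>\<i|$ with $\zeta_i$ the state valid for $\cN_i$. Because these operators are block-diagonal in the branch register, $\|\rho^{x,x'}\|_1=\sum_i q_i\|\widehat{\cN_i\circ\cE}(|x\>\<x'|)\|_1\le\epsilon'$ for $x\neq x'$ and $F(\rho^{x,x},\zeta)=\sum_i q_i F(\widehat{\cN_i\circ\cE}(|x\>\<x|),\zeta_i)\ge\sqrt{1-\epsilon^2}$; hence the hypotheses \eqref{eq:rhoxx}--\eqref{eq:rhoxxp} are inherited by the combined channel and the bound follows from the first part.

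I do not anticipate a conceptual obstacle here; the delicate point is purely in the bookkeeping. Because $R$ can be entangled with $L$ in an arbitrary way, the off-diagonal contributions $\rho^{x,x'}\otimes|\hat r_x\>\<\hat r_{x'}|$ must be controlled uniformly over all pure $|\psi\>^{LR}$, and it is exactly the Cauchy--Schwarz sum over the (at most) $d_L$ logical amplitudes $\sqrt{p_x}$ that produces the $d_L$ prefactor, with the slack there responsible for the gap between a $\sqrt{d_L\epsilon'}$-type estimate and the stated $d_L\sqrt{\epsilon'}$. Secondarily, one should keep track of the trace-preservation of the complementary channel to guarantee that $\tilde\rho$ is a valid state and $\Tr\rho^{x,x'}=0$ for $x\neq x'$, and discard the (zero-contribution) terms with $p_x=0$.
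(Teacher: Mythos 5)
Your proof is correct, and there is nothing in the paper to compare it against line by line: the paper does not prove this lemma at all, but imports it as Prop.~4 of Ref.~\cite{faist2019}. Your argument is essentially a faithful reconstruction of that original proof: fix the hypothesized $\zeta$ inside the complementary-channel expression Eq.~\eqref{eq:complementary}, expand an arbitrary purified input $|\psi\>^{LR}=\sum_x\sqrt{p_x}|x\>|\hat r_x\>$ in the fixed logical basis, and split the environment output into the dephased part (controlled by joint concavity and multiplicativity of the fidelity, yielding the $\epsilon$) and the off-diagonal coherences (controlled by the trace-norm triangle inequality together with $\sum_{x\neq x'}\sqrt{p_xp_{x'}}\le d_L-1$ and the conversion $P\le\sqrt{2\|\cdot\|_1}$ from Eq.~\eqref{eq:distances}, yielding $\sqrt{2(d_L-1)\epsilon'}\le d_L\sqrt{\epsilon'}$ since $2(d_L-1)\le d_L^2$). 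The two points you flag are indeed the only delicate ones and you handle them properly: the vectors $|\hat r_x\>$ need not be orthogonal, which is harmless because neither joint concavity nor the trace-norm estimates use orthogonality; and trace preservation of the complementary channel guarantees $\tilde\rho$ is a state. Your treatment of the heralded-noise case via a classical flag register in the environment, under which both hypotheses \eqref{eq:rhoxx}--\eqref{eq:rhoxxp} are inherited with the same $\epsilon,\epsilon'$, is also sound (the freedom in choosing the complementary channel is immaterial since $\zeta$ can be transported by the same environment isometry).
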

For our $(n,k;\alpha)$-$U(1)$ code construction, 
\begin{equation}
 {\rho^{x,x'}} = \Tr_{n-t}[U(|x\>\<x'| \otimes |\psi_\alpha\>\<\psi_\alpha|)U^\dagger].
\end{equation}
Note that the lemma above applies to a fixed encoding $\mathcal{E}$. To generalize this theorem to our randomized construction, we define $\rho^{x,x'}_\avg$ as $\rho^{x,x'}$ in Eq.~\eqref{eq:rho-def} averaged over the random unitary in $\mathcal{E}$. Then using the following lemma, we can obtain bounds on the worst-case error. 
\begin{lemma}
Consider the large $n$ limit. 
If the \zwnnn{average physical states} $\rho^{x,x}_\avg$ satisfy $P(\rho^{x,x}_\avg, \zeta) \le \epsilon$  for some fixed state $\zeta$ independent of $x$, then with probability at least $1-p_1-p_2$, \lh{our $(n,k;\alpha)$-$U(1)$ random code} satisfy
\begin{equation}
    \epsilon_\worst \le \epsilon + \delta + 2^k \sqrt{\delta'},
\end{equation}
\zwnn{where $p_1$ and $p_2$ are given by}
\begin{align}
    \log p_1 =& k-\frac{n}{4}\min\left\{H\left(\frac{\alpha}{n}\right),H\left(\frac{\alpha+k}{n}\right)\right\} \nonumber\\
    &+ \frac{t}{2}+\log\frac{1}{\delta} + O(\log n),\\
    \log p_2 =& 2k-\frac{n}{2} \min\left\{H\left(\frac{\alpha}{n}\right),H\left(\frac{\alpha+k}{n}\right)\right\} \nonumber\\
    &+ t + \log\frac{1}{\delta'} + O(\log n). \label{eq:p1p2}
\end{align}
\label{lem:worst-randomized}
\end{lemma}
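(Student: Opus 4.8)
The plan is to reduce to Lemma~\ref{lem:worst-bound}, which is a deterministic worst-case bound valid for any \emph{fixed} encoding once two conditions hold: $P(\rho^{x,x},\zeta)\le\epsilon_1$ for all $x$, and $\|\rho^{x,x'}\|_1\le\epsilon_2$ for all $x\neq x'$, in which case $\epsilon_\worst\le\epsilon_1+d_L\sqrt{\epsilon_2}$. The first condition is already assumed here for the \emph{averaged} states $\rho^{x,x}_\avg$ with parameter $\epsilon$, so it suffices to show that, with high probability over $U\sim H_{\times,U(1)}$, (i) $P(\rho^{x,x},\rho^{x,x}_\avg)\le\delta$ for every $x$, and (ii) $\|\rho^{x,x'}\|_1\le\delta'$ for every $x\neq x'$. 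Granting (i) and (ii), the triangle inequality for $P$ gives $P(\rho^{x,x},\zeta)\le\epsilon+\delta$, whence Lemma~\ref{lem:worst-bound} with $d_L=2^k$, $\epsilon_1=\epsilon+\delta$ and $\epsilon_2=\delta'$ yields $\epsilon_\worst\le\epsilon+\delta+2^k\sqrt{\delta'}$. All the work is therefore in bounding the probabilities that (i) and (ii) fail, which should turn out to be at most $p_1$ and $p_2$ respectively.

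\emph{The diagonal part.} For each computational-basis state $|x\>$, the encoder input $\Psi=|x\>\<x|\otimes|\psi_\alpha\>\<\psi_\alpha|$ is a pure product state lying entirely in the Hamming-weight-$(\alpha+|x|)$ sector, so the reference $R$ is trivial and the partial decoupling theorem of Sec.~\ref{sec:partial-dec} applies to the complementary erasure $\cT=\Tr_{n-t}$: $\E_U\|\rho^{x,x}-\rho^{x,x}_\avg\|_1\le 2^{-H_{\min}(A^*|E)_{\Lambda(\Psi,\cT)}/2}$. The min-entropy on the right is evaluated exactly as for Theorem~\ref{thm:choi} in Appendix~\ref{app:entropy}; since $\Psi$ is confined to a single charge sector it is of order the logarithm of that sector's dimension, hence $\Omega(n)$, and taking the worst case over $|x|\in\{0,\dots,k\}$ — using that the binary entropy is concave with maximum at $1/2$, so the minimum over the interval $[\alpha/n,(\alpha+k)/n]$ is attained at an endpoint — produces the $\min\{H(\alpha/n),H((\alpha+k)/n)\}$ dependence, up to $O(t)$ corrections. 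Converting to purified distance via Eq.~\eqref{eq:distances}, applying Markov's inequality, and taking a union bound over the $2^k$ states $|x\>$ then yields $\Pr[\exists x:\ P(\rho^{x,x},\rho^{x,x}_\avg)>\delta]\le p_1$ with $\log p_1$ as stated, the trailing $O(\log n)$ absorbing Stirling corrections and constant-order factors.

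\emph{The off-diagonal part.} The key point is that $\rho^{x,x'}_\avg=0$ whenever $x\neq x'$: the operator $|x,\psi_\alpha\>\<x',\psi_\alpha|$ is supported on the $(\alpha+|x|,\alpha+|x'|)$ block, and averaging over $H_{\times,U(1)}$ — the product of the Haar measures on the charge-sector unitary groups $U\!\left(\binom{n}{j}\right)$ — annihilates it: if $|x|\neq|x'|$ the two sectors are rotated by \emph{independent} Haar unitaries, each of zero mean, while if $|x|=|x'|$ Schur's lemma gives $\E[U(|x,\psi_\alpha\>\<x',\psi_\alpha|)U^\dagger]=\frac{\<x',\psi_\alpha|x,\psi_\alpha\>}{\binom{n}{\alpha+|x|}}\,I=0$ since $\<x'|x\>=0$. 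Hence $\rho^{x,x'}$ concentrates about $0$ and it is enough to show $\|\rho^{x,x'}\|_1$ is small with high probability. Bounding $\|\rho^{x,x'}\|_1\le 2^{t/2}\|\rho^{x,x'}\|_2$ and, by Jensen, passing to the second moment $\E_U\|\rho^{x,x'}\|_2^2$ (a genuine variance, since the first moment vanishes), a Weingarten computation shows this quantity is exponentially small with exponent governed by the dimensions $\binom{n}{\alpha+|x|}$ and $\binom{n}{\alpha+|x'|}$ of the charge sectors involved; because the second moment is of order the square of the first-moment decoupling bound, $p_2$ carries twice the negative exponent of $p_1$. Markov's inequality on $\|\rho^{x,x'}\|_1$ together with a union bound over the at most $2^{2k}$ ordered pairs $(x,x')$ then yields $\Pr[\exists x\neq x':\ \|\rho^{x,x'}\|_1>\delta']\le p_2$ with $\log p_2$ as stated.

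Combining the two bad events by a union bound, (i) and (ii) hold simultaneously with probability at least $1-p_1-p_2$, and there Lemma~\ref{lem:worst-bound} delivers the claimed bound. The hard part is the off-diagonal second-moment (Weingarten) estimate: one must carefully sum the contributions of the various charge sectors — separately handling the $|x|=|x'|$ and $|x|\neq|x'|$ cases — and account for the partial trace over the $n-t$ retained qubits, in order to obtain the precise exponent appearing in $p_2$. The companion ingredient, the min-entropy estimate behind $p_1$, is exactly the computation performed for Theorem~\ref{thm:choi} in Appendix~\ref{app:entropy}; everything else — Markov's and Cauchy--Schwarz inequalities, the union bounds, the triangle inequality for $P$, and translating binomial coefficients into binary entropies via Stirling — is routine bookkeeping within the stated $O(\log n)$ slack.
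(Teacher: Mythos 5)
Your overall architecture---reducing to Lemma~\ref{lem:worst-bound} with $d_L=2^k$, handling the diagonal terms via the triangle inequality, partial decoupling, and the min-entropy bound of Appendix~\ref{app:entropy}, then Markov plus union bounds---is exactly the paper's route, and your diagonal analysis (trivial reference system, confinement of $|x\>\<x|\otimes|\psi_\alpha\>\<\psi_\alpha|$ to the weight-$(\alpha+|x|)$ sector, minimum of the binary entropy attained at an endpoint of $[\alpha/n,(\alpha+k)/n]$) reproduces $p_1$ correctly, as does your observation that $\rho^{x,x'}_\avg=0$ for $x\neq x'$.

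The gap is in the off-diagonal step. The paper needs no Weingarten or second-moment computation: since the partial decoupling theorem applies to (subnormalized) states and $|x\>\<x'|$ is not positive, it writes $|x\>\<x'|=\frac12|\mu^+\>\<\mu^+|-\frac12|\mu^-\>\<\mu^-|+\frac i2|\nu^+\>\<\nu^+|-\frac i2|\nu^-\>\<\nu^-|$ with $|\mu^\pm_{x,x'}\>\propto|x\>\pm|x'\>$ and $|\nu^\pm_{x,x'}\>\propto|x\>\pm i|x'\>$, applies the decoupling bound to each of these four states (each with $H_{\min}\ge n\min\{H(\alpha/n),H((\alpha+k)/n)\}-2t+O(\log n)$), obtains the first-moment bound $\E_U\|\rho^{x,x'}\|_1\le 2\cdot 2^{-H_{\min}/2}$, and then Markov gives Eq.~\eqref{eq:prob-xx} and hence $p_2$. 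Your substitute---$\|\rho^{x,x'}\|_1\le 2^{t/2}\|\rho^{x,x'}\|_2$ together with an asserted Weingarten evaluation of $\E_U\|\rho^{x,x'}\|_2^2$---leaves the crucial estimate unproved, and the heuristic offered for why it gives the stated exponent ("the second moment is the square of the first-moment decoupling bound, hence $p_2$ carries twice the exponent of $p_1$") is not the right mechanism: in the paper both $p_1$ and $p_2$ arise from first-moment Markov bounds, and the factor-of-two difference in their exponents comes solely from the square root $P\le\sqrt{2\|\cdot\|_1}$ used on the diagonal side. Moreover, Markov applied to $\|\rho^{x,x'}\|_2^2$ would yield a $2\log(1/\delta')$ dependence rather than the stated $\log(1/\delta')$, so even granting the Weingarten estimate you would be proving a bound of a different functional form, not the lemma as stated. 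To close the gap you must either actually carry out the block-wise second-moment computation over $H_{\times,U(1)}$ and track the $\delta'$ and $t$ dependence honestly, or, more simply, use the four-state decomposition above so that the partial decoupling theorem does all the work.
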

\begin{proof}
We use the partial decoupling theorem to  upper bound  the average distance between $\rho^{x,x'}_\avg$ and $\rho^{x,x'}$. Then by Markov inequality \zwnn{we can bound the} probability that $\rho^{x,x'}$ behaves much worse than $\rho^{x,x'}_\avg$. \zwnn{Then we can} show the code has good performance with high probability using a union bound.

Given that there exists $\zeta$ such that $P(\rho^{x,x}_\avg, \zeta) \le \epsilon$ for all $x$, we have
\begin{align}
    \E_U P(\rho^{x,x}, \zeta) \le& P(\rho^{x,x}_\avg, \zeta) + \E_U P(\rho^{x,x}, \rho^{x,x}_\avg) \nonumber\\
    \le& \epsilon + \E_U \sqrt{2\|\rho^{x,x}- \rho^{x,x}_\avg\|_1} \nonumber \\
    \le& \epsilon +  \sqrt{2\E_U \left[\|\rho^{x,x}- \rho^{x,x}_\avg\|_1\right]} \nonumber \\
    \le& \epsilon + \sqrt{2} \times 2^{-\frac{1}{4}H_{\min}^{x}},
\end{align}
where $H_{\min}^{x}\zwnn{\equiv}H_{\min }(A^*|RE)_{\Lambda(\zwnn{\Psi^x}, \mathcal{T})}$ \zwnn{for which the initial state is $\Psi^x \equiv |x\>\<x|$ and $\cT$ is $\Tr_{n-t}$ (detailed definitions introduced in Thm.~\ref{thm:choi} and Sec.~\ref{sec:partial-dec}).} 
\zwnn{As shown in} Appendix~\ref{app:entropy},
\begin{align}
    H_{\min}^x &= nH\left(\frac{|x|+\alpha}{n}\right)-2t+\log(n) \nonumber\\
    &\ge n \min\left\{H\left(\frac{\alpha}{n}\right),H\left(\frac{\alpha+k}{n}\right)\right\} - 2t + O(\log n),
\end{align}
which \zwnn{implies} that for each $x$,
\begin{align}
    & \log \Pr[P(\rho^{x,x},\zeta) \ge \epsilon+\delta] \nonumber \\
    \le & \, \frac{1}{2}-\frac{1}{4}H_{\min}^x + \log\frac{1}{\delta} \nonumber \\
    \le & -\frac{n}{4}\min\left\{H\left(\frac{\alpha}{n}\right),H\left(\frac{\alpha+k}{n}\right)\right\} + \frac{t}{2}+\log\frac{1}{\delta} + O(\log n). \label{eq:prob-x}
\end{align}

When $x\not=x'$, it is easy to see that $\rho^{x,x'}_\avg=\E_{U\sim H_{\times,U(1)}}\Tr_{n-t}[U(|x\>\<x'| \otimes |\psi_\alpha\>\<\psi_\alpha|)U^\dagger]=0$. Since the partial decoupling theorem applies only to subnormalized states, we \zwnn{need to express $|x\>\<x'|$ as}
\begin{align}
    |x\>\<x'| =& \frac{1}{2}|\mu_{x,x'}^+\>\<\mu_{x,x'}^+| - \frac{1}{2}|\mu_{x,x'}^-\>\<\mu_{x,x'}^-| \nonumber \\
    &+ \frac{i}{2}|\nu_{x,x'}^+\>\<\nu_{x,x'}^+| - \frac{i}{2}|\nu_{x,x'}^-\>\<\nu_{x,x'}^-|,
\end{align}
where
\begin{align}
    |\mu^\pm_{x,x'}\> &\zwnn{\equiv} \frac{1}{\sqrt 2}(|x\> \pm |x'\>), \\
    |\nu^\pm_{x,x'}\> &\zwnn{\equiv} \frac{1}{\sqrt 2}(|x\> \pm i|x'\>).
\end{align}
Then we can apply the partial decoupling theorem to the states $|\mu^\pm_{x,x'}\>$ and $|\nu^\pm_{x,x'}\>$ and \zwnn{obtain}
\begin{align}
    \E_U \|\rho^{x,x'}\|_1 \le& \frac{1}{2}\times2^{-\frac{1}{2}H_{\min}^{x,x',\mu^+}}+\frac{1}{2}\times2^{-\frac{1}{2}H_{\min}^{x,x',\mu^-}}\nonumber\\
    &+\frac{1}{2}\times2^{-\frac{1}{2}H_{\min}^{x,x',\nu^+}}+\frac{1}{2}\times2^{-\frac{1}{2}H_{\min}^{x,x',\nu^-}},
\end{align}
where $H_{\min}^{x,x',\mu^\pm}$ and $H_{\min}^{x,x',\nu^\pm}$ are $H_{\min }(A^*|RE)_{\Lambda(\Psi, \mathcal{T})}$  with the initial state \zwnn{$\Psi$ taken  to be $|\mu^\pm_{x,x'}\>\<\mu^\pm_{x,x'}|$ and $|\nu^\pm_{x,x'}\>\<\nu^\pm_{x,x'}|$, respectively, and $\cT$ being $\Tr_{n-t}$}.
\zwnn{As shown in} Appendix~\ref{app:entropy}, we have
\begin{equation}
    H_{\min}^{x,x'} \ge n \min\left\{H\left(\frac{\alpha}{n}\right),H\left(\frac{\alpha+k}{n}\right)\right\} - 2t + O(\log n),
\end{equation}
where $H_{\min}^{x,x'}$ could be any one among $H_{\min}^{x,x',\mu^\pm}$ and $H_{\min}^{x,x',\nu^\pm}$. \zwnn{By Markov inequality we obtain}
\begin{align}
    \log \Pr\left[\|\rho^{x,x'}\|_1 \ge \delta'\right] \le& -\frac{n}{2} \min\left\{H\left(\frac{\alpha}{n}\right),H\left(\frac{\alpha+k}{n}\right)\right\} \nonumber\\
    &+ t + \log\frac{1}{\delta'} + O(\log n). \label{eq:prob-xx}
\end{align}
Now we can apply \zwnn{the} union bound and take \zwnn{the} sum of Eq.~\eqref{eq:prob-x} over all $x$ and Eq.~\eqref{eq:prob-xx} over all $x$ and $x'$. This means that $P(\rho^{x,x},\zeta) \le \epsilon+\delta$ and $\|\rho^{x,x'}\|_1 \le \delta'$ are satisfied for all $x$ and $x'$ with probability at least $1-p_1-p_2$ where $p_1$ and $p_2$ are defined in Eq.~\eqref{eq:p1p2}. By Lemma~\ref{lem:worst-bound}, the code satisfies
\begin{equation}
    \epsilon_\worst \le \epsilon + \delta + 2^k \sqrt{\delta'}
\end{equation}
\zwnn{with probability at least $1-p_1-p_2$.}
\end{proof}
\begin{theo}
\label{thm:worst}
\linghang{In the large $n$ limit}, 
\zwnn{suppose} $k$ and $t$ satisfy $kt^2=o(n)$ and $\alpha=an$ \zwnn{for some constant $a \in (0,1)$}, then \zw{the expected worst-case error of the $(n,k;\alpha)$-$U(1)$ random code} satisfies \begin{equation}
    \E \epsilon_\worst \le \frac{k\sqrt{t}}{4n\sqrt{a(1-a)}}\left(1+O\left(\frac{kt^2}{n}\right)\right). \label{eq:worst-bound2}
\end{equation}
Furthermore, the probability 
\zw{\zw{that the worst-case error of a $(n,k;\alpha)$-$U(1)$ code (with respect to $H_{\times,U(1)}$)}  violates the  inequality above is exponentially small in $n$, i.e.,}
\begin{equation}
    \Pr\left[\epsilon_\worst > \frac{k\sqrt{t}}{4n\sqrt{a(1-a)}}\left(1+O\left(\frac{kt^2}{n}\right)\right)\right]= e^{-\Omega(n)}.
\end{equation}
\end{theo}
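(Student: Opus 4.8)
The plan is to reduce everything to Lemma~\ref{lem:worst-randomized}, which already packages the partial-decoupling and min-entropy estimates controlling both the diagonal fluctuations $\rho^{x,x}-\rho^{x,x}_\avg$ (through the slack $\delta$) and the average-zero off-diagonal blocks $\rho^{x,x'}$, $x\ne x'$ (through $\delta'$). All that then remains is to exhibit a single state $\zeta$, independent of $x$, with $P(\rho^{x,x}_\avg,\zeta)\le\epsilon$ for the claimed $\epsilon$, and to fix the slack parameters $\delta,\delta'$.

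First I would identify the average states. Since $|x\>\otimes|\psi_\alpha\>$ is an eigenvector of $Q^{(n)}$ of weight $w_x:=|x|+\alpha$, and since every element of $\cU_{\times,U(1)}$ restricts to an arbitrary unitary on each weight eigenspace of $Q^{(n)}$, Haar-averaging a rank-one projector of definite weight $w$ over $H_{\times,U(1)}$ yields the maximally mixed state $\Pi_w/\binom{n}{w}$ on that eigenspace (Schur's lemma). Writing $\sigma_w:=\Tr_{n-t}[\Pi_w]/\binom{n}{w}$, one obtains
\begin{equation}
\rho^{x,x}_\avg=\sigma_{w_x}=\sum_{j=0}^{t} p^{(w_x)}_j\,\frac{\Pi^{(t)}_j}{\binom{t}{j}},\qquad p^{(w)}_j:=\frac{\binom{t}{j}\binom{n-t}{\,w-j\,}}{\binom{n}{w}},
\end{equation}
where $\Pi^{(t)}_j$ is the weight-$j$ projector on the $t$ erased qubits; i.e.\ $\rho^{x,x}_\avg$ is maximally mixed within each weight sector, the weight-$j$ sector carrying probability $p^{(w_x)}_j$ (the hypergeometric distribution $\mathrm{Hyp}(n,w_x,t)$), and it depends on $x$ only through $|x|\in\{0,\dots,k\}$.

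Next I would take $\zeta$ to be a weight-diagonal state centred midway between the extreme cases $|x|=0$ and $|x|=k$ (for instance $\sigma_{\alpha+k/2}$ when $k$ is even, and an equal mixture of $\sigma_{\alpha+(k-1)/2}$ and $\sigma_{\alpha+(k+1)/2}$, or a discrete Gaussian of matching mean and variance, otherwise), so that for every $x$ the hypergeometric mean of $\rho^{x,x}_\avg$ differs from that of $\zeta$ by at most $tk/(2n)$. Since $\rho^{x,x}_\avg$ and $\zeta$ commute, $F(\rho^{x,x}_\avg,\zeta)$ is a classical Bhattacharyya coefficient, so $P(\rho^{x,x}_\avg,\zeta)$ is the purified distance between hypergeometric distributions on $\{0,\dots,t\}$ with means differing by $\le tk/(2n)$ and common variance $ta(1-a)(1+o(1))$ (here $w_x\in[\alpha,\alpha+k]$, so $w_x=an(1+o(1))$). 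A local central-limit expansion then gives
\begin{equation}
P(\rho^{x,x}_\avg,\zeta)\le\frac{k\sqrt t}{4n\sqrt{a(1-a)}}\Big(1+O\big(\tfrac{kt^2}{n}\big)\Big)=:\epsilon
\end{equation}
uniformly in $x$, where the $O(kt^2/n)$ term is a crude bound absorbing the finite-population, $w_x/n-a$, and Edgeworth corrections (the antisymmetric Edgeworth pieces cancelling against the Gaussian weight); the ``centred'' choice of $\zeta$ is what turns the naive prefactor $\tfrac12$ into $\tfrac14$. This parallels the Choi computation of Appendix~\ref{app:average-state} but is simpler, as no binomial mixture of hypergeometrics arises; making the correction uniform over all $|x|\le k$ is, I expect, the main technical point.

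Finally I would feed $\epsilon$ into Lemma~\ref{lem:worst-randomized} with $\delta=n^{-c}$, $\delta'=2^{-2k}n^{-c}$ for a large constant $c$, so that $\delta+2^k\sqrt{\delta'}=O(n^{-c/2})$ is of smaller order than the $O(kt^2/n)$ correction already carried by $\epsilon$ (as $k,t\ge1$), while $\log\tfrac1\delta,\log\tfrac1{\delta'}=O(k+\log n)$. Substituting into Eq.~\eqref{eq:p1p2}, using $\alpha/n\to a$ and $(\alpha+k)/n\to a$ (so $\min\{H(\alpha/n),H((\alpha+k)/n)\}\to H(a)=\Omega(1)$) together with $k,t=o(n)$ (implied by $kt^2=o(n)$), gives $\log p_1,\log p_2=-\Omega(n)$, i.e.\ $p_1+p_2=e^{-\Omega(n)}$, which together with Lemma~\ref{lem:worst-randomized} is the high-probability statement. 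The expectation bound follows since $\epsilon_\worst\le1$ always, so $\E\epsilon_\worst\le\epsilon+\delta+2^k\sqrt{\delta'}+(p_1+p_2)=\frac{k\sqrt t}{4n\sqrt{a(1-a)}}\big(1+O(kt^2/n)\big)$.
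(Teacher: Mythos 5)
Your proposal is correct and follows the paper's overall skeleton: identify the Schur-averaged states $\rho^{x,x}_\avg$ as hypergeometric-weighted mixtures over the weight sectors of the $t$ erased qubits, exhibit one fixed $\zeta$ with $P(\rho^{x,x}_\avg,\zeta)\le \frac{k\sqrt t}{4n\sqrt{a(1-a)}}(1+O(kt^2/n))$ uniformly in $x$, and then feed this into Lemma~\ref{lem:worst-randomized} with small $\delta,\delta'$ (your polynomially small choice with a large exponent works just as well as the paper's exponentially small one, since the leftover $\delta+2^k\sqrt{\delta'}$ only needs to be $o(k^2t^{5/2}/n^2)$), concluding the expectation bound from $\epsilon_\worst\le 1$. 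Where you diverge is the comparison state and the fidelity estimate: the paper (Appendix~\ref{app:worst-avg}) takes $\zeta$ to be $\rho^{x,x}_\avg$ averaged uniformly over $x$, i.e.\ the binomial mixture $\sum_i\beta_i\Pi_i^{(t)}$ with the same $\beta_i$ as in the Choi analysis, precisely so that $F(\rho^{x,x}_\avg,\zeta)$ is the quantity already expanded exactly via falling factorials in Appendix~\ref{app:average-state}, Eq.~\eqref{eq:fidelity} with $j=|x|$, whose worst case $j\in\{0,k\}$ gives the stated constant; you instead center $\zeta$ at weight $\alpha+k/2$ and invoke a local-CLT/Edgeworth estimate of the Bhattacharyya coefficient of two hypergeometrics. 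That route does give the same leading order (the binomial mixing in the paper's $\zeta$ adds only a negligible $O(kt^2/n^2)$ to the variance, so both choices are effectively a Gaussian of variance $ta(1-a)$ centered at the midpoint, and your ``centering halves the shift, hence the $1/4$'' observation is exactly the mechanism), but the CLT step with uniform error control over $|x|\le k$ is left as a sketch, whereas the paper's choice of $\zeta$ lets it reuse an already-proved exact expansion. Two refinements you would need to nail down: for odd $k$, the naive concavity bound $\sqrt{(p_1+p_2)/2}\ge(\sqrt{p_1}+\sqrt{p_2})/2$ applied to your two-point mixture is lossy and would spoil the constant for small $k$, so you must argue via the mean-matched effective Gaussian (or simply adopt the paper's $\zeta$, which for $k=1$ coincides with your equal mixture); and the asserted $O(kt^2/n)$ relative error from the Edgeworth corrections should be verified, which is exactly the content of the paper's explicit series expansion.
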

\begin{proof}
In Appendix~\ref{app:worst-avg}, we show that $P(\rho^{x,x}_\avg, \zeta)$ is upper bounded by the right hand side of Eq.~\eqref{eq:worst-bound2}. Now we apply Lemma~\ref{lem:worst-randomized} with properly chosen exponentially small $\delta$ and $\delta'$ so that $p_1$ and $p_2$ are also exponentially small, which shows that the code satisfies the inequality \zw{with exponentially small failure probability}. \linghang{Since $\epsilon_\worst$ is at most 1, this  implies that the expectation of $\epsilon_\worst$ satisfies the inequality as well.}
\end{proof}

\zwnn{Note that Appendix~\ref{sec:alternative} also includes an alternative derivation of the conditional min-entropies that show up in the above proof for the worst-case error, based on exact expressions of $\Lambda$, }\lhn{which may be of independent interest.}

\subsubsection{Remarks on  \zwnew{input charge and distance metric}}
\label{sec:remarks}

 \zwnn{It is interesting to note how the charge of the input ancilla state in our construction affects the code performance}.  In particular, $a=1/2$ (namely $\alpha=n/2$) gives rise to the best accuracy, and the accuracy becomes worse as one increases or decreases $a$. The intuition is that  the code resides in a subspace with Hamming weight between $\alpha$ and $\alpha+k$, \zwnn{which has the largest size and apparently the strongest entanglement that enhance the performance of the code, when $\alpha$ is around $n/2$. }

\zw{
\zwnn{Specifically, consider the scaling of the code errors.  
As we have shown, for linear $\alpha$ (constant $a$), the partial decoupling terms  (such as the first term of Eq.~(\ref{eq:bound})) are exponentially small in $n$, so the remaining symmetry terms (such as the second term of Eq.~(\ref{eq:bound})), which are polynomially small, are dominant in the overall code errors. Even when $\alpha$ is constant, following the calculation in Appendix~\ref{app:entropy}, we have that the conditional min-entropy is at least $\alpha\log n$, so for large $\alpha$, the partial decoupling terms are still dominated.} In particular, the error scales as $n^{-1/2}$ for constant input charge $\alpha$ and $n^{-1}$ for linear $\alpha$ (constant $a$).

\zwnn{Another interesting observation is that, for constant $\alpha$, the error bounds may behave significantly differently when using the trace or 1-norm distance instead of the purified distance in the definitions. Consider, for example, the Choi case.
According to our numerical results as shown in Fig.~\ref{fig:distance} and Table~\ref{tab:slope}, the symmetry term (the second term) in Eq.~\eqref{eq:bound} given by the purified distance scales worse for constant $\alpha$ than for linear $\alpha$. To be more precise, by a linear fitting, it can be seen that this term 
indeed scales roughly like $n^{-1/2}$ for constant $\alpha$ and $n^{-1}$ when $\alpha= O(n)$  (see Table~\ref{tab:slope}).}    On the other hand, this symmetry term as given by the trace distance always scales like $n^{-1}$.  \linghang{These numerical results are consistent with our calculation in Appendix~\ref{app:average-state}.}   \zwnew{Namely, a desirable feature of the purified distance is that it can distinguish different input charge scalings by its scaling. In fact, the cases of constant and linear input charge correspond to the two extremes in Eq.~\eqref{eq:distances}. }  }

\begin{figure}[ht]
    \includegraphics[width=0.49\textwidth]{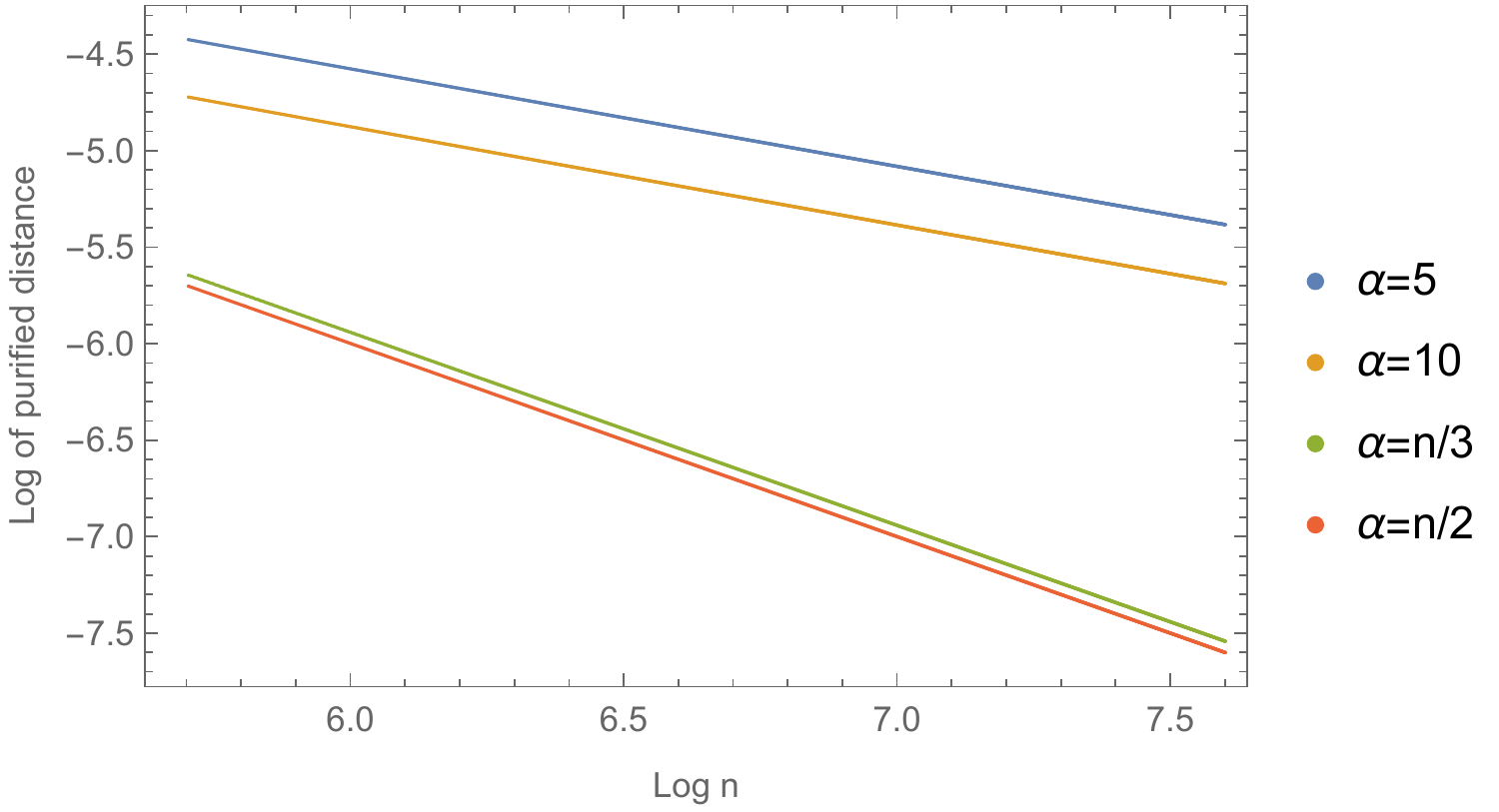}
    \includegraphics[width=0.49\textwidth]{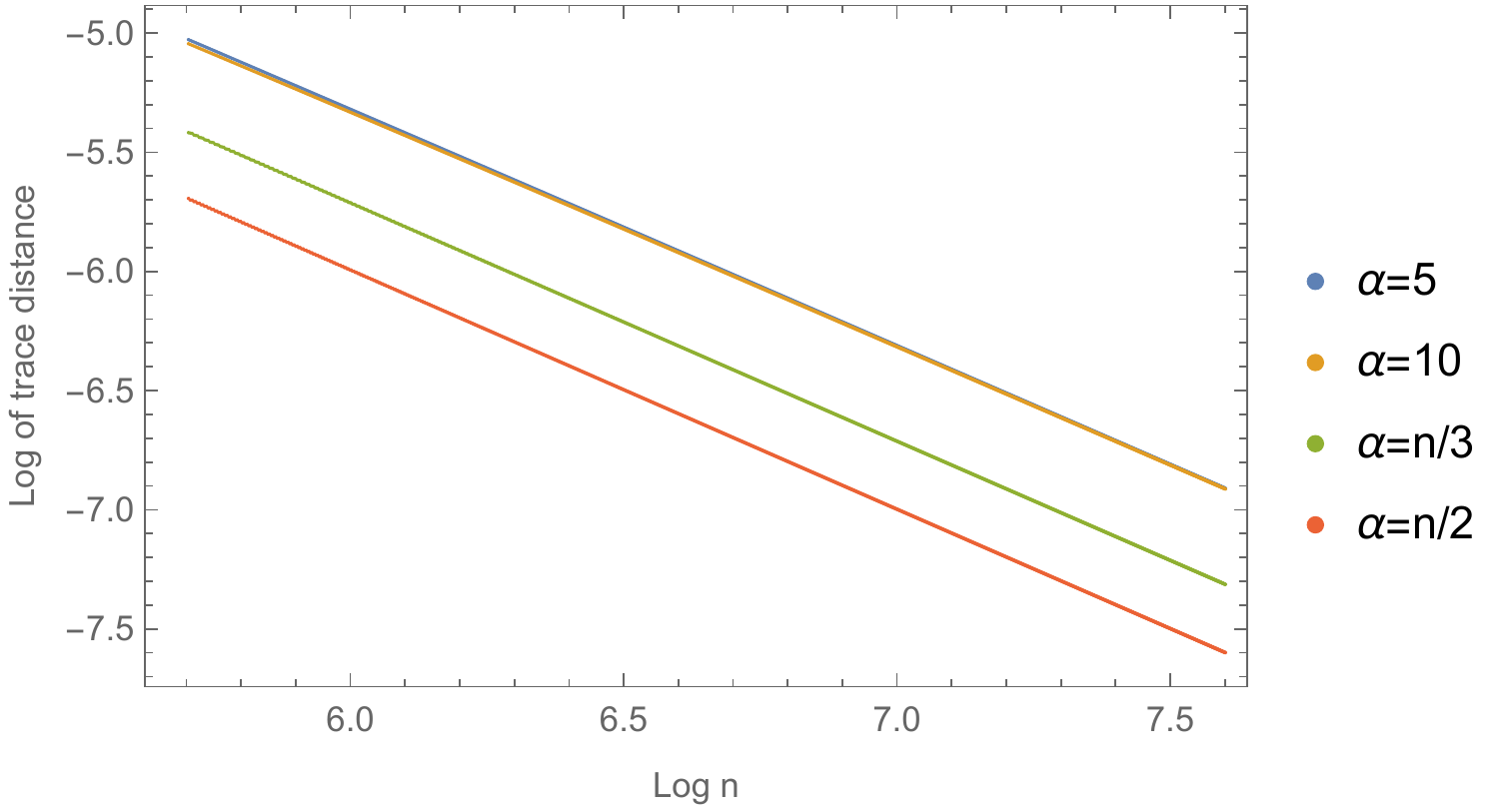}
    \caption{Log-log plots of \zw{the symmetry term in Eq.~\eqref{eq:bound} 
    as given by the \lhn{trace} distance and the purified distance for different $\alpha$ (here we set $k=t=2$).} 
    }
    \label{fig:distance}
\end{figure}
\begin{table}
\centering
\begin{tabular}{c|c|c|c|c}
    \hline
    $\alpha$ & 5 & 10 & $n/3$ & $n/2$\\
    \hline
    \lhn{Trace} distance & -0.993676 & -0.987255 & -0.999996 & -1.00313 \\
    \hline
    Purified distance & -0.504438 & -0.507633 & -1.00001 & -1.00063 \\
    \hline
\end{tabular}
\caption{\zwnn{Numerical values of} the slopes of the lines in Fig.~\ref{fig:distance} \zwnn{from linear fitting.}}
\label{tab:slope}
\end{table}

\subsection{Comparisons with fundamental limits}
\label{sec:u1-compare}
Now let us compare the performance of our $(n,k;\alpha)$-$U(1)$ codes with \zwnew{known lower bounds for $U(1)$-covariant} codes.
For simplicity, consider the $t=1$ case, namely \zwnn{the single-erasure noise channel}.
For $U(1)$ symmetry, Thm.~1 in Ref.~\cite{faist2019} indicates the following lower bounds:
\begin{align}
    \epsilon_\choi &\ge \frac{\binom{k}{\lceil k/2 \rceil}\lceil k/2 \rceil}{2^k n}, \\ \epsilon_\worst &\ge \frac{k}{2n}. \label{eq:lower-bounds}
\end{align}
Note that for large $k$, the bound on $\epsilon_\choi$ {approaches}
\begin{equation}
    \epsilon_\choi \ge \frac{1}{n}\sqrt{\frac{k}{2\pi}}.
\end{equation}
In comparison, \zwnn{according to} Thm.~\ref{thm:choi} and Thm.~\ref{thm:worst}, our $(n,k;\alpha)$-$U(1)$ random code has smallest error when $a=1/2$, in which case
\begin{align}
    \epsilon_\choi &\le \frac{\sqrt{k}}{2n}(1+O(k^2/n)),\\ 
    \epsilon_\worst &\le \frac{k}{2n}(1+O(k/n)).
\end{align}
That is, {up to leading order}, the worst-case \zwnn{error} exactly matches the lower bound, and the Choi \zwnn{error} matches the bound up to a constant factor. When $k=1$, the Choi \zwnn{error} also matches the bound in Eq.~\eqref{eq:lower-bounds} exactly.

The situation for the general $t>1$  case is as follows.
Thm.~2  in Ref.~\cite{faist2019} gives a bound for general erasure: If the physical charge operator has the form
\begin{equation}
    T^S = \sum_\eta \zwnn{T_\eta},
\end{equation}
where $T_\eta$ has support on a set of qubits $\eta$, and that $\eta$ gets erased with probability $q_\eta$, then
\begin{align}
    \epsilon_\choi &\ge \frac{\left\|T^{L}-\mu(T^{L}) I^{L}\right\|_{1} / d_{L}}{\max _{\eta}\left(\Delta T^{\eta} / q_{\eta}\right)}, \\
    \epsilon_\worst &\ge \frac{\Delta T^{L} }{2\max _{\eta}\left(\Delta T^{\eta} / q_{\eta}\right)},\label{eq:t_bound}
\end{align}
where $\Delta T^\eta$ and $\Delta T^L$ are the difference between the largest and smallest eigenvalues of $T^\eta$ and $T^L$ respectively, and $\mu(T^L)$ is the median of the eigenvalues of $T^L$.
For example, consider the following two ways to model the erasure of $t$ qubits:
\begin{enumerate}
    \item The qubits are grouped into sets of size $t$, and there are $n/t$ such sets. Each $T^\eta$ is the Hamming weight operator on this set, and $\Delta T^\eta = t, q_\eta = t/n$;
    \item $\eta$ represent all possible sets of $t$ qubits. $q_\eta=\binom{n}{t}^{-1}$, and each $T_\eta$ is $\frac{n}{t}\times \binom{n}{t}^{-1}$ the Hamming weight operator on the set, where the coefficient is chosen so that $T_\eta$ sum up to the physical charge operate $T_S$. In this case $\Delta T^\eta = n\times \binom{n}{t}^{-1}, q_\eta=\binom{n}{t}^{-1}$.
\end{enumerate}
In either case, it turns out that
\begin{equation}
    \max_\eta \left(\Delta T^{\eta} / q_{\eta}\right) = n.
\end{equation}
As a result, the lower bounds in Eq.~(\ref{eq:t_bound}) do not scale with $t$ and are expected to be loose. 
\zw{We leave potential improvement of the lower bounds in Ref.~\cite{faist2019} as well as more careful investigation into different methods in, e.g.,~Refs.~\cite{Woods2020continuousgroupsof,YangWoods,KubicaD,zlj20} for the $t>1$ case for future work.}

\section{$SU(d)$ symmetry}
\label{sec:sud}
We now proceed to discuss the more complicated case of $SU(d)$ symmetry, \zwnew{which is \zwnn{non-Abelian and particularly important for fault-tolerant}  quantum computing as it describes the entire group of unitary gates.  The high-level ideas are reminiscent to the $U(1)$ case, but more advanced representation theory techniques will be used.  Here, we first review aspects of the representation theory of $SU(d)$ and the relevant permutation group $S_n$ that play key roles in our analysis, and then present the results.  
} 

\subsection{Representation theory of $SU(d)$ and $S_n$}
\label{sec:sud-pre}
\subsubsection{Partitions and Young tableaux}
Let \lhn{$n$ be a positive integer and }$\lambda=(\lambda_1, \cdots, \lambda_m)$ be a set of non-increasing \zwnn{positive integers} \zwnn{such that} $n=\lambda_1 + \cdots + \lambda_m$. Then $\lambda$ \zwnn{describes} a \emph{partition} of $n$, denoted by $\lambda \vdash n$. Sometimes we would like to emphasize that $\lambda$ is a partition of $n$ with at most $d$ terms, denoted by $\lambda \vdash (n,d)$.

It is often helpful to think of $\lambda$ as a set of boxes \zwnn{arranged in a way that there are $\lambda_i$ boxes in the $i$-th row, represented by the so-called  \emph{Young diagrams}}. 
One can fill numbers into a Young diagram and make it a \emph{Young tableau}. There are two types of Young tableau that are relevant to our work:
\begin{enumerate}
    \item A \emph{standard Young tableau} is obtained by filling 1 to $n$ into each box of a Young diagram. Each number appears exactly once, and the numbers on each row and each column should be increasing.
    \item A \emph{semistandard Young tableau} contains numbers that could be repeated. The numbers on each row are weakly increasing, and the numbers on each column are strictly increasing.
\end{enumerate}
\zwnn{A basic example of a Young diagram and associated standard and semistandard Young tableaux}  can be found in Fig.~\Ref{fig:ytab}.
\begin{figure}[ht]
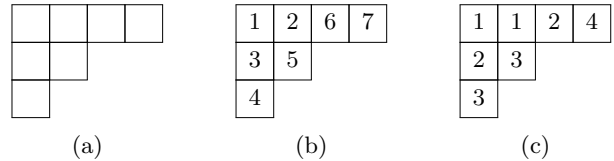

     \centering
     \begin{subfigure}[b]{0.15\textwidth}
         \centering
         \begin{ytableau}
            {} &&& \\
            {} & \\
            {}
         \end{ytableau}
         \caption{}
         \label{fig:y-diag}
     \end{subfigure}
     \hfill
     \begin{subfigure}[b]{0.15\textwidth}
         \centering
         \begin{ytableau}
            1 & 2 & 6 & 7 \\
            3 & 5 \\
            4
         \end{ytableau}
         \caption{}
         \label{fig:s-ytab}
     \end{subfigure}
     \hfill
     \begin{subfigure}[b]{0.15\textwidth}
         \centering
         \begin{ytableau}
            1 & 1 & 2 & 4 \\
            2 & 3 \\
            3
         \end{ytableau}
         \caption{}
         \label{fig:ss-ytab}
     \end{subfigure}
    \caption{Young diagrams and Young tableaux. Fig.~\ref{fig:y-diag} is \zwnn{the} Young diagram corresponding to the partition (4, 2, 1). Fig.~\ref{fig:s-ytab} is an associated standard Young tableau. Fig.~\ref{fig:ss-ytab} is an associated semistandard Young tableau with numbers 1 to 4.}
    \label{fig:ytab}
\end{figure}

Given a Young diagram $\lambda$, we can calculate $r_i$ and $l_i$, the number of standard Young tableaux and semistandard Young tableaux (with numbers from 1 to $d$), using the following formulae:
\begin{equation}
    r_\lambda = \frac{n!}{\prod_{(i,j)\in \lambda} h(i,j)}, \quad l_\lambda = \prod_{(i,j)\in \lambda}\frac{d+j-i}{h(i,j)}. \label{eq:hook}
\end{equation}
Here, $(i,j)$ refers to the box on the $i$-th row and $j$-th column, and $h(i,j)$ is the \emph{hook length} of the box $(i,j)$ in diagram $\lambda$, which is defined as the number of the boxes to the right of $(i,j)$, plus the number of boxes below $(i,j)$, plus 1. For example, $h(1,2) = 2+1+1=4$ for the Young diagram in Fig.~\ref{fig:y-diag}.

\zwnn{We need} asymptotic bounds for $l_\lambda$ and $r_\lambda$ when $n$ is large while $d$ remains constant. An upper bound for $l_\lambda$ could be found by lifting the constraints on columns for semistandard Young tableaux. On row $i$, the number of ways to fill in numbers is given by $\binom{\lambda_i+d-1}{d-1}$, which means that
\begin{equation}
    l_\lambda \le \prod_i \binom{\lambda_i+d-1}{d-1} = \poly(n).\label{eq:l}
\end{equation}
For $r_\lambda$ with $\lambda \vdash (n,d)$, note that
\begin{equation}
    \lambda_i - j + 1 \le h(i,j) \le \lambda_i - j + 1 + d
\end{equation}
because $\lambda_i - j$ is the number of boxes to the right of $(i,j)$, and the number of boxes below $(i,j)$ is between 0 and $d$. So
\begin{equation}
    r_\lambda \le \frac{n!}{\lambda_1! \lambda_2!\cdots \lambda_d!} = \binom{n}{\lambda_1,\cdots,\lambda_d} \label{eq:r-ub}
\end{equation}
and
\begin{equation}
    r_\lambda \ge \frac{n}{\frac{(\lambda_1+d)!}{d!}\frac{(\lambda_2+d)!}{d!}\cdots \frac{(\lambda_d+d)!}{d!}} = \binom{n}{\lambda_1,\cdots,\lambda_d}/\poly(n).\label{eq:r-lb}
\end{equation}

\subsubsection{Schur-Weyl duality}
\label{sec:duality}
Consider the Hilbert space $(\C^d)^{\otimes n}$, which corresponds to $n$ $d$-dimensional qudits. Let $\{|e_i\>\}_{i=1}^d$ be \zwmerge{an} orthonormal basis for each qudit. \zwmerge{A natural representation for the unitary group $SU(d)$ on $(\C^d)^{\otimes n}$ is given by the collective action of $U\in SU(d)$ on each subsystem,}
\begin{equation}
    (U^{\otimes n} ) |e_{i_1}\> \cdots |e_{i_n}\> = U|e_{i_1}\>  \otimes \cdots U|e_{i_n}\>.
\end{equation}
\zwmerge{On the other hand, the symmetric group $S_n$ naturally acts on $(\C^d)^{\otimes n}$ by permuting the subsystems.}  For $\pi \in S_n$, the corresponding operator $O_\pi$ behaves like
\begin{equation}
    O_\pi |e_{i_1}\> \cdots |e_{i_n}\> = |e_{i_{\pi^{-1}(1)}}\> \cdots |e_{i_{\pi^{-1}(n)}}\>.
\end{equation}
Intuitively, the operator $O_\pi$ moves qudit $i$ to the position $\pi(i)$ for each $i$.
Such representations for $SU(d)$ and $S_n$ are reducible. The Schur-Weyl duality states that the Hilbert space $(\C^d)^{\otimes n}$ could be decomposed in the following way:
\[
    (\C^d)^{\otimes n} = \bigoplus_{\lambda\vdash(n,d)} \cL_\lambda \otimes \cR_\lambda,
\]
so that $\cL_\lambda$ and $\cR_\lambda$ correspond to irreps of $SU(d)$ and $S_n$ respectively. In other words,
\begin{equation}
    U^{\otimes n} = \bigoplus_{\lambda\vdash(n,d)} U_\lambda \otimes I_{\cR_\lambda},\quad
    O_\pi = \bigoplus_{\lambda\vdash(n,d)}  I_{\cL_\lambda} \otimes \pi_\lambda,
\end{equation}
where $U_\lambda$ (resp.\ $\pi_\lambda$) is the operator corresponding to $U$ (resp.\ $\pi$) in the irrep labeled by $\lambda$. The dimensions for $\cL_\lambda$ and $\cR_\lambda$ are given by
\[
    \dim \cL_\lambda = l_\lambda,\quad \dim \cR_\lambda = r_\lambda 
\]
where $l_\lambda$ and $r_\lambda$ are defined in Eq.~\eqref{eq:hook}. we shall keep the dependence of $\cL_\lambda$ and $\cR_\lambda$ on $n$ implicit in the notation, as $n$ could be calculated as the total number of boxes in $\lambda$.

Now we define $\Pi_\lambda$ as the projector onto $\cL_\lambda \otimes \cR_\lambda$. In order to give an expression for $\Pi_\lambda$, we need to introduce the concept of \emph{Young symmetrizer}. Let $T$ be a standard Young tableaux that corresponds to Young diagram $\lambda$. Then $\cR(T)$ (resp.\ $\cC(T)$) is defined as the set of permutations that permutes numbers within each row (resp.\ column) of $T$. Now we define the Young symmetrizer $e_T$ as $e_T = r_T c_T$ where
\begin{equation}
    r_T = \sum_{\pi \in \cR(T)} O_\pi, \quad c_T = \sum_{\pi \in \cC(T)} \sgn(\pi) O_\pi,
\end{equation}
where $\sgn(\pi)$ is the sign of permutation $\pi$. Note that the Young symmetrizer is usually defined in the group algebra of $S_n$, but here we care only about the corresponding operator in $(\C^d)^{\otimes n}$. Then we have the formula for $\Pi_\lambda$
\begin{equation}
    \Pi_\lambda = \frac{r_\lambda^2}{n!^2}\sum_{\pi \in S_n} O_\pi e_T O_\pi^\dagger
\end{equation}
for any Young tableau $T$ that corresponds to Young diagram $\lambda$ \cite{goodman2009}.

Another interesting fact to \zwnn{note} is that for any standard Young tableau $T$ on Young diagram $\lambda$, $\frac{r_\lambda}{n!}e_T$ is a projector onto a subspace of $(\C^d)^{\otimes n}$ that corresponds to $U_\lambda$, the irrep of $SU(d)$ labeled by $\lambda$.

\subsubsection{Relating $n-1$ qudits to $n$ qudits}
\label{sec:lr}
Suppose that in $(\C^d)^{\otimes (n-1)}$, $P$ is a projector onto a subspace that corresponds to an irrep of $SU(d)$ labeled by $\lambda$. Then what do we know about $I_d \otimes P$, the operator in $(\C^d)^{\otimes n}$ obtained by adding a new qudit? Essentially we are asking about the decomposition of the tensor product of two irreps:
\begin{equation}
    U_{(1)} \otimes U_{\lambda} = \bigoplus_{\lambda'} U_{\lambda'}
\end{equation}
where $U_{(1)}$ is the $d$-dimensional fundamental representation of $SU(d)$. According to Littlewood--Richardson rule, each representation $U_{\lambda'}$ appears at most once on the right hand side, and the $\lambda'$, which appears could be obtained by adding a single box to $\lambda$. Such a relation is denoted by $\lambda \lhd \lambda'$, and we have
\begin{equation}
    U_{(1)} \otimes U_{\lambda} = \bigoplus_{\lambda':\lambda \lhd \lambda'} U_{\lambda'}.
\end{equation}

In other words, $I_d \otimes P$ is a sum of projectors into subspaces that correspond to irreps $U_{\lambda'}$, where $\lambda'$ satisfies $\lambda \lhd \lambda'$.

\subsection{$SU(d)$-covariant codes from \zwnew{$SU(d)$-symmetric} unitaries}
\label{sec:sud-construction}

Let $\zwnew{\cU_{\times,SU(d)}^n} \subset SU(d^n)$ be the set of unitary operators that commute with $U^{\otimes n}$ for all $U \in SU(d)$\zwmerge{, which forms a compact subgroup of $SU(d)$}. \zwnn{As discussed in}  Sec.~\ref{sec:duality}, any $V\in \cU_{\times,SU(d)}^n$ should take the form
\begin{equation}
    V = \bigoplus_{\lambda \vdash (n,d)} I_{\cL_\lambda} \otimes V_\lambda \label{eq:hx}
\end{equation}
for some unitaries $V_\lambda$. Let $H_{\times,SU(d)}^n$ be the Haar measure on $\cU_{\times,SU(d)}^n$. Then $V\sim H_{\times,SU(d)}^n$ will take the form in Eq.~\eqref{eq:hx} with each $V_\lambda$ drawn from the Haar measure of $SU(r_\lambda)$ independently.

\begin{defi}
A code is called $(n;\lambda)$-$SU(d)$ code for some $\lambda \vdash (n-1,d)$, if it encodes a qudit into $n$ qudits by first appending a $(n-1)$-qudit state $\rho_\lambda$ that lies within $\cL_\lambda \otimes \cR_\lambda$ and takes the form
\begin{equation}\label{eq:rho-lambda}
    \rho_\lambda = \frac{I_{\cL_\lambda}}{l_\lambda}\otimes |\psi_\lambda\>\<\psi_\lambda|
\end{equation}
for some pure state $|\psi_\lambda\> \in \cR_\lambda$, and then applying a unitary $V \in \cU_{\times,SU(d)}^n$. In particular, \zwnn{a \emph{$(n;\lambda)$-$SU(d)$ random code} is given by $V\sim H_{\times,SU(d)}^n$.} 
\end{defi}

Note that \zwnn{the choice of $|\psi_\lambda\>$ does not affect the definition of the $(n;\lambda)$-$SU(d)$ random code}, because different states in $\cR_\lambda$ are related by unitaries in $\cU_{\times,SU(d)}^n$, which could be absorbed in the Haar measure $H_{\times,SU(d)}^n$.

It is  straightforward to verify that the $(n;\lambda)$-$SU(d)$ codes satisfy the $SU(d)$-covariance conditions.

\begin{prop}
$(n;\lambda)$-$SU(d)$ codes are covariant with respect to $SU(d)$ symmetry, \zwnn{in the sense that for all $U\in SU(d)$,}
\[
    \cE(U\rho U^\dagger) = U^{\otimes n} \cE(\rho) (U^{\dagger})^{\otimes n}.
\]
where $\cE$ is the encoding channel of the code.
\end{prop}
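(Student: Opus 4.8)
The plan is to mirror the proof of Proposition~\ref{thm:cov} in the $U(1)$ case, replacing the commutation with $e^{i\theta Q^{(n)}}$ by the commutation with $U^{\otimes n}$ for $U\in SU(d)$. The encoding channel is $\cE(\rho) = V(\rho\otimes\rho_\lambda)V^\dagger$, where $\rho_\lambda$ is the fixed $(n-1)$-qudit ancilla of Eq.~\eqref{eq:rho-lambda} and $V\in\cU_{\times,SU(d)}^n$. The two ingredients I would assemble are: (i) $V$ commutes with $U^{\otimes n}$ by definition of $\cU_{\times,SU(d)}^n$; and (ii) the ancilla state $\rho_\lambda$ is invariant under $U^{\otimes(n-1)}$, i.e.,~$U^{\otimes(n-1)}\rho_\lambda (U^\dagger)^{\otimes(n-1)} = \rho_\lambda$ for all $U\in SU(d)$.

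First I would establish (ii). By Eq.~\eqref{eq:rho-lambda}, $\rho_\lambda = \frac{I_{\cL_\lambda}}{l_\lambda}\otimes|\psi_\lambda\>\<\psi_\lambda|$ lives in the block $\cL_\lambda\otimes\cR_\lambda$ of $(\C^d)^{\otimes(n-1)}$. Under Schur--Weyl duality the action of $U^{\otimes(n-1)}$ on this block is $U_\lambda\otimes I_{\cR_\lambda}$, so conjugating $\rho_\lambda$ by $U^{\otimes(n-1)}$ gives $\frac{U_\lambda I_{\cL_\lambda} U_\lambda^\dagger}{l_\lambda}\otimes|\psi_\lambda\>\<\psi_\lambda| = \frac{I_{\cL_\lambda}}{l_\lambda}\otimes|\psi_\lambda\>\<\psi_\lambda| = \rho_\lambda$, since $U_\lambda I_{\cL_\lambda}U_\lambda^\dagger = I_{\cL_\lambda}$. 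This is precisely why the maximally mixed state on the $SU(d)$-irrep factor is the right choice of ancilla — it is the unique state (up to the choice of $|\psi_\lambda\>$ on the multiplicity space, which is irrelevant) making the construction covariant.

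Then the main computation is the chain, for any single-qudit logical state $\rho$ and any $U\in SU(d)$:
\begin{align}
    U^{\otimes n}\cE(\rho)(U^\dagger)^{\otimes n} &= U^{\otimes n} V(\rho\otimes\rho_\lambda) V^\dagger (U^\dagger)^{\otimes n} \nonumber\\
    &= V\, U^{\otimes n}(\rho\otimes\rho_\lambda)(U^\dagger)^{\otimes n}\, V^\dagger \nonumber\\
    &= V\big(U\rho U^\dagger \otimes U^{\otimes(n-1)}\rho_\lambda (U^\dagger)^{\otimes(n-1)}\big)V^\dagger \nonumber\\
    &= V\big(U\rho U^\dagger \otimes \rho_\lambda\big)V^\dagger = \cE(U\rho U^\dagger),
\end{align}
where the second line uses $[V, U^{\otimes n}]=0$, the third line uses the tensor-product factorization $U^{\otimes n} = U\otimes U^{\otimes(n-1)}$ acting on the logical qudit and the $(n-1)$ ancilla qudits separately, and the fourth uses step (ii). Extending from pure to general $\rho$ and from the group action on density matrices to the channel statement is immediate by linearity. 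Finally, the statement for the $(n;\lambda)$-$SU(d)$ random code follows since the argument holds for every fixed $V\in\cU_{\times,SU(d)}^n$ in the support of $H_{\times,SU(d)}^n$.

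I do not anticipate a real obstacle here; the proof is routine given Schur--Weyl duality. The only point requiring a moment of care is the invariance of the ancilla $\rho_\lambda$ — one must note that the $SU(d)$ action on the block $\cL_\lambda\otimes\cR_\lambda$ is $U_\lambda\otimes I$ (acting trivially on the $S_n$-multiplicity factor $\cR_\lambda$ that carries $|\psi_\lambda\>$), so the maximally mixed state on $\cL_\lambda$ is fixed while $|\psi_\lambda\>$ sits in the invariant factor. This is where the specific form \eqref{eq:rho-lambda} of the ancilla is essential, and I would state it as a one-line lemma or an explicit remark before the main chain of equalities.
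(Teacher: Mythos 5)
Your proof is correct and follows essentially the same route as the paper: the paper likewise notes $[U^{\otimes(n-1)},\rho_\lambda]=0$ (your invariance statement, which you justify by the same Schur--Weyl observation that $U^{\otimes(n-1)}$ acts as $U_\lambda\otimes I_{\cR_\lambda}$ on the block) and then performs the identical chain of equalities using $[V,U^{\otimes n}]=0$. No gaps; your extra elaboration of the ancilla invariance is simply a more explicit version of the paper's one-line appeal to Schur--Weyl duality.
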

\begin{proof}
First note that \zwnn{for $\rho_\lambda$ of the form Eq.~(\ref{eq:rho-lambda})}, it holds that
\begin{equation}
    [U^{\otimes (n-1)}, \rho_\lambda] = 0
\end{equation}
due to the Schur-Weyl duality. So we have
\begin{align}
    &U^{\otimes n} \cE(\rho) (U^{\dagger})^{\otimes n} \nonumber\\
    =& U^{\otimes n} V(\rho \otimes \rho_\lambda)V^\dagger (U^{\dagger})^{\otimes n}\nonumber\\
    =& V [U\rho U^\dagger \otimes U^{\otimes (n-1)}\rho_\lambda  (U^\dagger)^{\otimes (n-1)}]V^\dagger \nonumber\\
    =& V (U\rho U^\dagger \otimes \rho_\lambda )V^\dagger \nonumber\\
    =& \cE(U\rho U^\dagger).
\end{align}
\end{proof}

Again, the $(n;\lambda)$-$SU(d)$ random code \zwnn{can be regarded as a randomized} construction of $SU(d)$-covariant codes, \zwnn{and furthermore, it} indicates the  typical behaviors of all such codes.
It is also possible to define codes that encode $k>1$ qudits into $n$ qudits in a similar way, with logical operator $U^{\otimes k}$ mapped to $U^{\otimes n}$. Here we focus on the $k=1$ case due to its particular importance and leave $k>1$ for future studies.

\subsection{Performance of random $SU(d)$-covariant codes}
\label{sec:sud-error}
\zwnew{We now study the Choi error and worst-case error of $(n; \lambda)$-$SU(d)$ random codes against erasure.  Here we discuss the erasure of a fixed qudit, but note again that the results hold for the erasure of any qudit.} \zwnn{The complementary channel is then a partial trace over the other $n-1$ qudits, which we denote by $\Tr_{n-1}[\cdot]$.}
The erasure against $t>1$ qudits may be analyzed in a similar way, which we leave for future work.

\subsubsection{Choi error}

\begin{theo}
\label{thm:sud-choi}
In the large $n$ limit, if a partition $\lambda \vdash (n-1, d)$ satisfies $n-\lambda_1 = \Omega(n)$, then the expected Choi error of the $(n;\lambda)$-$SU(d)$ random code satisfies
\begin{equation}
    \E \epsilon_\choi \le \frac{\sqrt{d^2-1}}{2n} + O(n^{-2})
\end{equation}
against the erasure of a single qudit.
Furthermore, the probability that the Choi error of a $(n;\lambda)$-$SU(d)$ random code (with respect to $H^n_{\times,SU(d)}$) violates the inequality above is exponentially small, i.e.,
\begin{equation}
\Pr\left[\epsilon_\choi >  \frac{\sqrt{d^2-1}}{2n} + O(n^{-2})\right] = e^{-\Omega(n)}.
\end{equation}

\end{theo}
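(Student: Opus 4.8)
The plan is to mirror the proof strategy used for the $U(1)$ case in Theorem~\ref{thm:choi}, now exploiting the Schur--Weyl decomposition. First I would invoke the complementary channel formalism: by Eq.~\eqref{eq:complementary}, for a fixed choice of $V$ the Choi error of the $(n;\lambda)$-$SU(d)$ code is
\begin{equation}
\epsilon_\choi = \min_\zeta P\bigl(\Tr_{n-1}[V(\hat\phi^{LR}\otimes\rho_\lambda)V^\dagger],\; \tfrac{I^L}{d}\otimes\zeta\bigr),
\end{equation}
and then apply the triangle inequality for $P$ to split this into a ``fluctuation'' term comparing $\Tr_{n-1}[V(\hat\phi\otimes\rho_\lambda)V^\dagger]$ with its average $\Tr_{n-1}\Phi_{\mathrm{avg}}$, plus a ``symmetry'' term $\min_\zeta P(\Tr_{n-1}\Phi_{\mathrm{avg}},\tfrac{I}{d}\otimes\zeta)$ that is independent of $V$. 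The structural input here is that $\cU^n_{\times,SU(d)}$ has exactly the direct-sum-product form required by the partial decoupling theorem of Ref.~\cite{wakakuwa2019}, with the blocks indexed by $\lambda\vdash(n,d)$, the left factors $\cL_\lambda$ (dimension $l_\lambda$) carrying identities, and the right factors $\cR_\lambda$ (dimension $r_\lambda$) carrying Haar-random $V_\lambda$; this is precisely Eq.~\eqref{eq:hx}.

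Next I would bound the fluctuation term in expectation. Using $\E_V\sqrt{2\|\cdot\|_1}\le\sqrt{2\E_V\|\cdot\|_1}$ (Jensen) and the partial decoupling bound, this term is at most $\sqrt2\cdot 2^{-\frac14 H_{\min}(A^*|RE)_{\Lambda(\Psi,\cT)}}$, where $\Psi=\hat\phi^{LR}\otimes\rho_\lambda$, $\cT=\Tr_{n-1}$, and $\Lambda$ is built as in Eqs.~\eqref{eq:lambda1}--\eqref{eq:lambda_F} from the $\lambda$-indexed direct-sum-product structure. The key quantitative claim is that $H_{\min}(A^*|RE)_{\Lambda(\Psi,\cT)}=\Omega(n)$ under the hypothesis $n-\lambda_1=\Omega(n)$; this should follow by the same computation deferred to Appendix~\ref{app:entropy} in the $U(1)$ case, now using the pure-state min-entropy formula Eq.~\eqref{eq:pure-min} together with the asymptotic estimates Eqs.~\eqref{eq:l}, \eqref{eq:r-ub}, \eqref{eq:r-lb} for $l_\lambda$ and $r_\lambda$ and the Littlewood--Richardson branching $\lambda\lhd\lambda'$ from Sec.~\ref{sec:lr} that governs how adding back the erased qudit splits $\cR_\lambda$. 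Because $r_\lambda$ is exponentially large in $n$ precisely when $n-\lambda_1=\Omega(n)$ (roughly $r_\lambda\approx\binom{n}{\lambda_1,\dots,\lambda_d}/\poly(n)$), the randomness on the right factor is enough to push the min-entropy to $\Omega(n)$, making the fluctuation term exponentially small in expectation; Markov's inequality then upgrades this to ``exponentially small with probability $1-e^{-\Omega(n)}$''.

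For the symmetry term I would choose $\zeta$ to be the reduced state of $\Tr_{n-1}\Phi_{\mathrm{avg}}$ on the single erased qudit and compute $\Phi_{\mathrm{avg}}$ explicitly using Eq.~\eqref{eq:avg-state}: averaging over $H^n_{\times,SU(d)}$ replaces each $\cR_{\lambda'}$ block by the maximally mixed state $I_{\cR_{\lambda'}}/r_{\lambda'}$, so $\Phi_{\mathrm{avg}}$ is a block-diagonal (over $\lambda'\vdash(n,d)$ with $\lambda\lhd\lambda'$) mixture of $|\psi_\lambda\rangle$-dependent pieces tensored with normalized identities. Tracing out all but one qudit and estimating the purified distance to $\tfrac{I}{d}\otimes\zeta$ should yield the leading term $\frac{\sqrt{d^2-1}}{2n}+O(n^{-2})$; here the $d^2-1$ is the dimension of $\mathfrak{su}(d)$ and enters because the obstruction is essentially the leakage of the logical $SU(d)$-charge (the $d^2-1$ generators) into the erased qudit, paralleling how $\sqrt{tk}/(4n\sqrt{a(1-a)})$ arose in the $U(1)$ case. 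Combining the two terms gives the stated bound.

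The main obstacle I anticipate is the combinatorial representation-theoretic bookkeeping in the min-entropy estimate: unlike the $U(1)$ case where the blocks are just Hamming-weight eigenspaces with binomial dimensions, here one must control $l_\lambda$, $r_\lambda$, and the branching multiplicities under $\lambda\lhd\lambda'$ simultaneously, and verify that the hypothesis $n-\lambda_1=\Omega(n)$ is exactly what prevents the ``anomalous'' near-trivial blocks (those with $\lambda_1$ close to $n$, where $r_\lambda$ is only polynomial) from dominating. A secondary technical point is computing the symmetry term's constant precisely enough to get $\sqrt{d^2-1}/(2n)$ rather than just $O(1/n)$, which requires carefully identifying the relevant overlap structure of $|\psi_\lambda\rangle$ under the embedding $\cR_\lambda\hookrightarrow\bigoplus_{\lambda\lhd\lambda'}\cR_{\lambda'}$; I expect both of these are handled in the appendices analogous to Appendices~\ref{app:entropy} and \ref{app:average-state}.
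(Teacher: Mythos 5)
Your proposal follows essentially the same route as the paper: the same complementary-channel split into a fluctuation term (bounded via Jensen plus the partial decoupling theorem over the Schur--Weyl block structure, with $H_{\min}=\Omega(n)$ coming from the $l_\lambda/r_\lambda$ estimates and the $\lambda\lhd\lambda'$ branching under the hypothesis $n-\lambda_1=\Omega(n)$) and a symmetry term evaluated on the average state, whose single-qudit complementary output the paper computes to be $\frac{1}{n}\hat\phi+\frac{n-1}{n}\frac{I_d}{d}\otimes\frac{I_d}{d}$, giving $\frac{\sqrt{d^2-1}}{2n}+O(n^{-2})$, with Markov's inequality for the tail bound. One minor note: since $\Psi=\hat\phi\otimes\rho_\lambda$ and hence $\Lambda(\Psi,\cT)$ are not pure here, the min-entropy bound is obtained from the operator-norm estimate of Lemma~\ref{lem:entropy}/Theorem~\ref{lem:entropy-lb} rather than the pure-state formula Eq.~\eqref{eq:pure-min} you cite, but the substance of your estimate is the same.
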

\begin{proof}
\zwnew{Similarly as the $U(1)$ case, the error can be analyzed using the complementary channel formalism. By Eq.~\eqref{eq:complementary},} for any fixed encoding unitary $V \in \cU_{\times,SU(d)}^n$ \zwnew{and $\rho_\lambda$ that satisfies Eq.~\eqref{eq:rho-lambda}}, we have
\begin{align}
    \epsilon_\choi =& \min_\zeta P\left(\Tr_{n-1}[V(|\hat\phi\>\<\hat\phi|\otimes \rho_\lambda)V^\dagger],\frac{I_d}{d}\otimes \zeta\right) \nonumber\\
    \le& P\left(\Tr_{n-1}[V(|\hat\phi\>\<\hat\phi|\otimes \rho_\lambda)V^\dagger],\Tr_{n-1}\Phi_\text{avg}\right) \nonumber\\
    &+\min_\zeta P\left(\Tr_{n-1}\Phi_\text{avg},\frac{I_d}{d}\otimes \zeta\right) \nonumber \\
    \le& \sqrt{2\left\|\Tr_{n-1}[V(|\hat\phi\>\<\hat\phi|\otimes \rho_\lambda)V^\dagger]-\Tr_{n-1}\Phi_\text{avg}\right\|_1}\nonumber \\
    &+\min_\zeta P\left(\Tr_{n-1}\Phi_\text{avg},\frac{I_d}{d}\otimes \zeta\right), \label{eq:sud-bound}
\end{align}
where \zwnn{$\rho_\lambda$ takes the form Eq.~(\ref{eq:rho-lambda}) and} \zwnew{$\Phi_\text{avg}$ is the \zwnnn{average physical state} given by} 
\begin{equation}
    \Phi_\text{avg} = \E_{V\sim H^n_{\times,SU(d)}} V(|\hat\phi\>\<\hat\phi|\otimes \rho_\lambda)V^\dagger.
\end{equation}
After averaging over $V$ sampled from $H^n_{\times,SU(d)}$, the first term in Eq.~\eqref{eq:sud-bound} can be bounded using the partial decoupling theorem,
\begin{align}
    &\E_{V\sim H^n_{\times,SU(d)}} \sqrt{2\left\|\Tr_{n-1}[V(|\hat\phi\>\<\hat\phi|\otimes \rho_{\lambda})V^\dagger]-\Tr_{n-1}\Phi_\text{avg}\right\|_1} \nonumber \\
    \le& \sqrt{2\E_{V\sim H^n_{\times,SU(d)}}\left[\left\|\Tr_{n-1}[V(|\hat\phi\>\<\hat\phi|\otimes \rho_{\lambda})V^\dagger]-\Tr_{n-1}\Phi_\text{avg}\right\|_1\right]} \nonumber \\
    \le & \sqrt{2} \times 2^{-\frac{1}{4}H_{\min}(A^*|RE)_{\Lambda(\Psi,\cT)}},
\end{align}
\zwnn{where $\Psi = |\hat\phi\>\<\hat\phi|\otimes \rho_\lambda$  is the initial state  on $AR$, and $\cT^{A\rightarrow E}$ is the complementary erasure $\Tr_{n-1}$  (detailed definitions introduced in Thm.~\ref{thm:choi} and Sec.~\ref{sec:partial-dec}).}
{We prove in Appendix~\ref{app:sud-entropy} that} $H_{\min}(A^*|RE)_\Lambda=\Omega(n)$, so the first term in Eq.~\eqref{eq:sud-bound} is exponentially small in $n$.
\lhn{
The main observation here is that the support of $\Psi^A \equiv \Tr_R[\Psi^{AR}]$ lies within the subspaces corresponding to $\lambda'$ where $\lambda \lhd \lambda'$, according to Section~\ref{sec:lr}. As a result, one could bound the norm of $\Lambda(\Psi,\cT) = \Xi^{A\bar A \to A^*}(\Psi^{AR} \otimes \tau^{\bar AE})(\Xi^\dagger)^{A\bar A \to A^*}$. Then we could use the bounds on $l_\lambda$ and $r_\lambda$ in Section~\ref{sec:duality} to get a bound on the conditional min-entropy.
}

Now we take $\zeta = I_d/d$. As shown in Appendix~\ref{app:avg},
\begin{equation}
    \Tr_{n-1}\Phi_{\avg}  = \frac{1}{n}|\hat\phi\>\<\hat\phi| + \frac{n-1}{n} \frac{I_d}{d} \otimes \frac{I_d}{d},
\end{equation}
and by a straightforward calculation we have
\begin{equation}
    P\left(\Tr_{n-1}\Phi_\text{avg},\frac{I_d}{d}\otimes \frac{I_d}{d}\right) = \frac{\sqrt{d^2-1}}{2n}+O(n^{-2}).
\end{equation}

\end{proof}


\subsubsection{Worst-case error}

\begin{theo}
\label{thm:sud-worst}
In the large $n$ limit, if a partition $\lambda \vdash (n-1, d)$ satisfies $n-\lambda_1 = \Omega(n)$, then the expected worst-case error of the $(n;\lambda)$-$SU(d)$ random code satisfies
\begin{equation}
    \E \epsilon_\worst \le \frac{\sqrt{d^2-1}}{2n} + O(n^{-2})
\end{equation}
against the erasure of a single qudit.
Furthermore, the probability that the worst-case error of a $(n;\lambda)$-$SU(d)$ random code (with respect to $H^n_{\times,SU(d)}$)  violates the  inequality above is exponentially small, i.e.,
\begin{equation}
\Pr\left[\epsilon_\worst >  \frac{\sqrt{d^2-1}}{2n} + O(n^{-2})\right] = e^{-\Omega(n)}.
\end{equation}

\end{theo}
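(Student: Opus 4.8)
The plan is to follow the same high-level route as in the proof of Theorem~\ref{thm:worst}, with one structural change forced by the $SU(d)$ case: I would show that, with probability $1-e^{-\Omega(n)}$ over $V\sim H^n_{\times,SU(d)}$, the complementary channel $\widehat{\cN\circ\cE}$ is \emph{uniformly} (over all inputs) exponentially close to its average over $V$, and then bound the worst-case purified distance of that average channel from the constant channel $\cT_{I_d/d}$.

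\emph{Step 1 (the average channel).} Writing $\rho^{x,x'}:=\Tr_{n-1}[V(|x\>\<x'|\otimes\rho_\lambda)V^\dagger]$, one has $(\widehat{\cN\circ\cE}\otimes\id^R)(\psi^{LR})=\sum_{x,x'}\rho^{x,x'}\otimes M^R_{x,x'}$ for any pure input $\psi^{LR}$, with $M^R_{x,x'}$ the corresponding block of $\psi^{LR}$. I would first observe that the linear map $|x\>\<x'|\mapsto\rho^{x,x'}_{\avg}:=\E_V\rho^{x,x'}$ is $SU(d)$-covariant: $U^{\otimes n}$ commutes with $V$ and with $\rho_\lambda$ (Schur-Weyl), and $\Tr_{n-1}$ intertwines $U^{\otimes n}$ with the single-qudit action $U$ on $E$. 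Since the space of $d\times d$ matrices decomposes under $SU(d)$-conjugation into the trivial and adjoint irreps, Schur's lemma forces $\rho^{x,x}_{\avg}=(1-b)\tfrac{I_d}{d}+b\,|x\>\<x|$ and $\rho^{x,x'}_{\avg}=b\,|x\>\<x'|$ ($x\ne x'$) for a \emph{single} constant $b$ (trace preservation pins the trivial part), and the representation-theoretic computation of Appendix~\ref{app:avg} --- the same one used in the Choi proof --- gives $b=1/n$. Equivalently, $\rho_{\avg}(\psi^{LR}):=\E_V(\widehat{\cN\circ\cE}\otimes\id^R)(\psi^{LR})=\tfrac1n\psi^{LR}+\tfrac{n-1}{n}\tfrac{I_d}{d}\otimes\psi^R$.

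\emph{Step 2 (concentration).} The new feature is that $\rho^{x,x'}_{\avg}\ne0$ for $x\ne x'$, so $\|\rho^{x,x'}\|_1=\Theta(1/n)$ (not exponentially small), and feeding this into Lemma~\ref{lem:worst-bound} would only give a useless $O(n^{-1/2})$ bound. Instead I would bound the \emph{fluctuations} $\|\rho^{x,x'}-\rho^{x,x'}_{\avg}\|_1$ directly, applying the partial decoupling theorem to $\Tr_{n-1}[V(\omega\otimes\rho_\lambda)V^\dagger]$ for $\omega$ ranging over $|x\>\<x|$ and over the rank-one pieces $|\mu^\pm_{x,x'}\>\<\mu^\pm_{x,x'}|,\,|\nu^\pm_{x,x'}\>\<\nu^\pm_{x,x'}|$ used to expand $|x\>\<x'|$, exactly as in Lemma~\ref{lem:worst-randomized}; the relevant conditional min-entropies $H_{\min}(A^*|RE)_{\Lambda(\Psi,\cT)}$ are $\Omega(n)$ by Appendix~\ref{app:sud-entropy} (this is where the branching rule $\lambda\lhd\lambda'$ of Sec.~\ref{sec:lr} and the $\poly(n)$ bounds on $l_\lambda,r_\lambda$ of Sec.~\ref{sec:duality} enter). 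Markov's inequality plus a union bound over these $O(d^2)$ events gives $\|\rho^{x,x'}-\rho^{x,x'}_{\avg}\|_1\le e^{-\Omega(n)}$ for all $x,x'$ with probability $1-e^{-\Omega(n)}$; since $\sum_{x,x'}\|M^R_{x,x'}\|_1\le d^2$ for any pure input, this yields $\|(\widehat{\cN\circ\cE}\otimes\id^R)(\psi^{LR})-\rho_{\avg}(\psi^{LR})\|_1\le e^{-\Omega(n)}$ simultaneously for all $\psi^{LR}$, hence (taking $\zeta=I_d/d$ in Eq.~\eqref{eq:complementary}) $\epsilon_\worst\le\max_{\psi^{LR}}P\big(\rho_{\avg}(\psi^{LR}),\tfrac{I_d}{d}\otimes\psi^R\big)+e^{-\Omega(n)}$.

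\emph{Step 3 (optimising over the input) --- the main obstacle.} It remains to show $\max_{\psi^{LR}}P(\rho_{\avg}(\psi^{LR}),\tfrac{I_d}{d}\otimes\psi^R)=\tfrac{\sqrt{d^2-1}}{2n}+O(n^{-2})$. Since $\rho_{\avg}(\psi^{LR})-\tfrac{I_d}{d}\otimes\psi^R=\tfrac1n(\psi^{LR}-\tfrac{I_d}{d}\otimes\psi^R)$ is an $O(1/n)$ perturbation about $\sigma:=\tfrac{I_d}{d}\otimes\psi^R$, whose support contains that of $\psi^{LR}$, the fidelity expansion gives $P^2=\tfrac1{n^2}Q(\psi)+O(n^{-3})$ with $Q(\psi)$ the Bures quadratic form of the tangent vector $\psi^{LR}-\sigma$ at $\sigma$; in Schmidt coordinates $p_1,\dots,p_d$ a short calculation yields $Q=\tfrac{d-1}{4}+d\sum_{i<j}\tfrac{p_ip_j}{p_i+p_j}$. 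Each summand is (half) a harmonic mean, hence concave, so $Q$ is maximised on the simplex at $p_i\equiv 1/d$ --- the maximally entangled input --- where $Q=\tfrac{d^2-1}{4}$, reproducing exactly the Choi value of Theorem~\ref{thm:sud-choi}. Combining with Step~2 gives the claimed inequality with probability $1-e^{-\Omega(n)}$, and since $\epsilon_\worst\le1$ always, the bound on $\E\epsilon_\worst$ follows as well. I expect Step~3 --- proving that the maximally entangled state is the worst input and extracting the exact constant $\sqrt{d^2-1}/2$ --- to be the crux, with the $SU(d)$ bookkeeping in Steps~1--2 (the covariance/Schur-lemma argument and the min-entropy estimate through the $\lambda\lhd\lambda'$ branching) the main secondary technical burden.
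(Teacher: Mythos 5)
Your proposal is correct and follows essentially the same route as the paper: a triangle-inequality split into a fluctuation term controlled by partial decoupling with the $\Omega(n)$ min-entropy bound of Appendix~\ref{app:sud-entropy} (via the $\lambda\lhd\lambda'$ branching), plus the symmetry term comparing the averaged channel $\rho\mapsto\tfrac1n\rho+\tfrac{n-1}{n}\Tr[\rho]\tfrac{I_d}{d}$ (Appendix~\ref{app:avg}) to the constant channel $I_d/d$, whose worst-case input is the maximally entangled state with value $\tfrac{\sqrt{d^2-1}}{2n}+O(n^{-2})$ --- your quadratic form $\tfrac{d-1}{4}+d\sum_{i<j}\tfrac{p_ip_j}{p_i+p_j}$ coincides with the paper's second-derivative computation in Appendix~\ref{app:worst-case}, and your concavity/symmetry maximization is equivalent to its AM--HM step. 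Your explicit matrix-element decomposition plus union bound to make the fluctuation term uniform over all inputs (and your observation that Lemma~\ref{lem:worst-bound} would only give $O(n^{-1/2})$) is a detail the paper states only in a remark or leaves implicit, but it is the same underlying argument.
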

\begin{proof}

For the worst-case error, first note that the reference system $R$ could always be assumed to be $d$-dimensional, as mentioned in Sec.~\ref{sec:approx-qec}. Then we have
\begin{align}
    \epsilon_\worst =& \min_\zeta \max_{|\psi\>} P\left(\Tr_{n-1}[V(|\psi\>\<\psi|\otimes \rho_\lambda)V^\dagger],\frac{I_d}{d}\otimes \zeta\right) \nonumber\\
    \le& \max_{|\psi\>}P\left(\Tr_{n-1}[V(|\psi\>\<\psi|\otimes \rho_\lambda)V^\dagger],\Tr_{n-1}\Phi_\text{avg}\right)\nonumber \\
    &+\min_\zeta \max_{|\psi\>} P\left(\Tr_{n-1}\Phi_\text{avg},\frac{I_d}{d}\otimes \zeta\right) \nonumber \\
    \le& \max_{|\psi\>}\sqrt{2\left\|\Tr_{n-1}[V(|\psi\>\<\psi|\otimes \rho_\lambda)V^\dagger]-\Tr_{n-1}\Phi_\text{avg}\right\|_1}\nonumber \\
    &+\min_\zeta \max_{|\psi\>} P\left(\Tr_{n-1}\Phi_\text{avg},\frac{I_d}{d}\otimes \zeta\right),
\end{align}
Again the first term is exponentially small in $n$ as shown in Appendix~\ref{app:sud-entropy}. If we take $\zeta=I_d/d$, according to Appendix~\ref{app:worst-case} we can see that the purified distance is maximized by $|\psi\> = |\hat\phi\>$, which implies that
\begin{equation}
    \E \epsilon_\worst \le \frac{\sqrt{d^2-1}}{2n} + O(n^{-2}).
\end{equation}
\end{proof}

\zwnn{Note that {here we did not use Lemma~\ref{lem:worst-bound} as in Sec.~\ref{sec:worst-case} for the $U(1)$ case, because} for $(n; \lambda)$-$SU(d)$ random codes, $\rho^{x,x'}_\avg$ (the average of $\rho^{x,x'}$ defined in Eq.~\eqref{eq:rho-def} over the random unitary) is nonzero for $x\not= x'$, which could only lead to a weaker bound of $\E \epsilon_\worst=O(n^{-1/2})$.}

\subsection{Comparisons with fundamental limits and known protocols}
\label{sec:sud-compare}

For $SU(d)$-covariant codes, the bound given in Ref.~\cite{faist2019} is
\begin{align}
    \epsilon_\choi &\ge \frac{1}{dn},\\ \epsilon_\worst &\ge \frac{1}{2n}.
\end{align}
This bound could be obtained either from their Thm.~1 or Appendix~E (note that their Thm.~6 only gives a weaker bound). From Sec.~\ref{sec:sud-error}, with high probability $(n; \lambda)$-$SU(d)$ random codes satisfy
\begin{align}
    \epsilon_\choi &\le \frac{\sqrt{d^2-1}}{2n},\\ \epsilon_\worst &\le \frac{\sqrt{d^2-1}}{2n}
\end{align}
as long as $n-\lambda_1 = \Omega(n)$. We can see that our code saturates this bound up to a constant factor.

It is interesting to compare our construction with the ``generalized $W$-state encoding'' given in Ref.~\cite[Sec.~VII.B]{faist2019}, where the encoding channel is given by
\begin{equation}
    |{\psi}\> \to \frac{1}{\sqrt{n}}(|{\psi,\perp,\cdots,\perp}\> + |{\perp,\psi,\perp,\cdots}\> + \cdots + |{\perp,\cdots,\perp,\psi}\> ),
\end{equation}
\zwnn{mapping} a $d$-dimensional logical qudit into $n$ $(d+1)$-dimensional qudits. The output of the complementary channel of \zwnn{this} encoding and erasure is quite similar to the \zwnnn{average physical state} \zwnn{of} our encoding derived in Appendix~\ref{app:avg}, with the maximally mixed state replaced by $|{\perp}\>\<{\perp}|$. As a result, our proof in Sec.~\ref{sec:sud-error} shows that the Choi and worst-case errors of this encoding is also $\frac{\sqrt{d^2-1}}{2n}$. Our $(n;\lambda)$-$SU(d)$ codes have higher efficiency than the generalized $W$-state encoding in that we need $n$ $d$-dimensional qudits instead of $(d+1)$-dimensional qudits. Also, this homogeneity property \lhn{(that the logical qudit and the physical qudit have the same dimension)} might also be helpful for constructing fault tolerance schemes.

\lh{Other constructions of $SU(d)$-covariant codes could be found in Refs.~\cite{Woods2020continuousgroupsof,YangWoods,wang2020,wang2021theory}.  In Ref.~\cite{Woods2020continuousgroupsof} a construction based on quantum reference frames was given, but the worst-case error rate was $O(\operatorname{polylog}(n)/n)$, which is asymptotically larger than our code.}  \lh{Another  construction was provided in Ref.~\cite{YangWoods}, which also has $O(1/n)$ worst-case error \zwnn{but with larger factors }\lhn{and worse scaling in $d$}. \zwnnn{The edge valence-bond-solid code \cite{wang2020,wang2021theory} provides another example of $SU(d)$-covariant codes where a logical qudit is encoded into a $d$-dimensional qudit together with $n$ $(d^2-1)$-dimensional qudits, for which the error scaling is again worse than our construction.}  
}










\section{On unitary designs and random circuits with symmetries}
\label{sec:designs}
\zw{In the above, we considered  Haar-random unitaries \zwnn{with $U(1)$ and $SU(d)$ symmetries}, for which one important motivation is to understand the typical performance of all such unitaries.     A key follow-up question of both practical and mathematical interest is how well the  results  hold for more ``efficient'' versions of random unitaries such as unitary $t$-designs (``pseudorandom'' distributions that match the Haar measure up to $t$ moments) and random local quantum circuits (circuits composed of random local gates). 
\zwnn{In particular, the random circuit models are broadly important in that they provide a powerful lens into the complexity and dynamics (especially early-time physics) of physical systems by capturing the locality of interactions.}
Consider the case without symmetries---it is known that the decoupling and QEC properties of Haar-random unitaries hold for (approximate) 2-designs \cite{dupuis2014}, and that random circuits converge to $t$-designs in depth polynomial in $t$ and $n$ \cite{harrow2009,brandao2016}, which imply that random circuits can provide rather efficient constructions of good codes (see also Refs.~\cite{BrownFawzi:decoupling,Brown13shortrandom,Gullans2021}).    Do analogous conclusions hold for the case with symmetries?

}
In our analysis, the Haar randomness has only been used in the partial decoupling theorem, and as was noted in Ref.~\cite{wakakuwa2019}, 
\zw{symmetric 2-designs 
are sufficient for  the partial decoupling bounds to hold. Therefore, all our results hold for symmetric 2-designs. 
}
Although 2-designs 
for the full unitary group  have been widely studied \cite{harrow2009,brandao2016,cleve2015,nakata2017}, little is known about 2-designs (let alone higher-order designs) with symmetries, as is in our consideration. 
Particularly for the fundamental problem of convergence of random circuits to designs with $U(1)$ symmetry \zwnn{or charge conservation}, we \zwnn{explicitly} point out a few interesting  differences.
\zwnn{For the no-symmetry case, repeated applications of local  random unitary gates converge to designs \cite{harrow2009,brandao2016}, but it appears difficult to adapt the techniques in Refs.~\cite{harrow2009,brandao2016} to establish an analogous result even for 2-designs in the case with symmetries}---negative values could appear in the operator basis, making the Markov chain analysis difficult; The conversion into the Hamiltonian spectrum problem discussed in Ref.~\cite{brandao2016} does not work here either, due to the complicated structure of the eigenspaces of Hamming weight operator. \zw{To summarize, there seem to be nontrivial obstacles to adapting previous proofs of convergence of random circuits to 
the case with symmetries, \zwnn{which are worth further understanding}, and it remains open how to efficiently construct symmetric 2-designs.  }  
\zw{\zwnn{In fact}, it is recently shown that \cite{marvian2020locality}  in the presence of continuous symmetries, the group of unitaries generated by local symmetric gates is in general a proper subgroup of the group of  global symmetric ones.  As a result, local random circuits cannot converge to the Haar measure with symmetries, and it remains to be further studied whether and how fast they converge to certain $t$-designs. 
Note that for $SU(d)$, it is known that \cite{hulse2021}  2-local symmetric unitaries could generate unitaries in $\cU_{\times,SU(d)}^n$ up to relative phases when $d=2$, while the generated group do not form a 2-design when $d>2$. 
Also note that for $SU(d)$, the quantum Schur transform \cite{bacon2006} might be helpful for constructing 2-designs for $H_{\times,SU(d)}^n$, by converting the system into the basis that corresponds to Schur-Weyl duality. It might be easier to construct 2-designs in this  basis, but we leave \zwnn{a more careful study for} future work.
For the weaker QEC property \zwnn{(which is implied by convergence results)}, given the dominance of the symmetry terms at late times,  we conjecture that
random circuits composed of symmetric local gates are able to approach the near-optimal performance of Haar-random symmetric unitaries derived in Thm.~\ref{thm:choi}, Thm.~\ref{thm:worst}, Thm.~\ref{thm:sud-choi} and Thm.~\ref{thm:sud-worst} with an efficiency no worse than the no-symmetry case, \zwnnn{e.g.,~$\widetilde{O}(n)$ gates or $O(\mathrm{polylog}(n)$) depth for circuit architecture without geometries \cite{BrownFawzi:decoupling}. }
}

\section{Discussion and outlook}
\label{sec:discussion}

In this work, we rigorously studied  $U(1)$- and $SU(d)$-covariant codes generated by Haar-random unitaries with corresponding symmetries, \zwnnn{which have simple structures and faithfully represent typical features of symmetric unitaries}.  A central message is that, with overwhelming probability, the \zwnew{error rates of such codes under erasure noise as measured by both Choi and worst-case purified distances can scale as $O(n^{-1})$ in the number of physical subsystems $n$ which \zwnew{nearly} saturate known lower bounds to leading order, \zwnnn{indicating the near-optimality of both  the lower bounds and our randomized code constructions. }} 

\zwnew{
How does our \zwnn{analysis} apply to the case without symmetries, \zwnn{where} our code constructions are modified by replacing the  Haar-random \zwnn{symmetric} unitary by a fully Haar-random one? As long as the quantum \zwnn{Singleton} bound $n-k \ge 4t$ is satisfied, the resulting random code has an error rate exponentially small in $n$, in contrast to polynomial small for the case with symmetries. 
To be more explicit, let $\Delta = n-k-4t$, then the random code will have expected Choi error $e^{-\Omega(\Delta)}$. 
 This could be shown using an analysis analogous to the case with symmetries, and the main difference is that the symmetry terms (e.g., the second term in Eq.~\eqref{eq:bound}) are naturally zero in this case and the error solely comes from the decoupling bound \cite{dupuis2014}. 
This comparison also gives an intuition for the Eastin--Knill theorem and the lower bounds for covariant codes from a mathematical perspective.}

\zwnew{We would also like to comment on the noise model. 
Note that although our analyses are presented in terms of the erasure of \zwnn{a specific set of} $t$ qudits, the results hold for the more general case of the erasure of any combination of $t$ qudits with some probability. To see this, first note that the purified distance is independent of the choice of $t$ qudits, because the permutation of the qudits \lhn{\zwnnn{belongs to the groups of covariant unitaries, namely} $\cU_{\times,U(1)}$ for $U(1)$ and $\cU_{\times,SU(d)}^n$ for $SU(d)$, }  and thus could be absorbed into the Haar measure \zwnnn{over the groups}. Then a union bound could be used over all choices of $t$ qudits, which {at most amplifies} the failure probability by a factor of $\binom{n}{t}$. Since $t=o(n)$ is  needed for \zwnn{meaningful results}, the failure probability is still exponentially small as $\log\binom{n}{t}=o(n)$.
Instead of erasing $t$ out of the $n$ qudits, a natural and stronger model of erasure noise is to have each qudit erased with some independent probability. Our analysis can be easily applied to this model by combining our bounds with the distribution on the number of qudits erased.  
}

\zwnnn{
In the paper, we already mentioned a few specific problems for future work, including the cases of multiple logical qudits and multiple erasure noise for $SU(d)$. Our constructions and results may also have applications for fault tolerance given the imposed transversality feature, which could be interesting to explore.  Note also that the analysis here may shed new light on the trade-off between  symmetry properties and QEC accuracy studied recently in Refs.~\cite{liu2021quantum,liu2021approximate} via the behavior of the error terms when symmetry constraints are relaxed. 

\smallskip

\textbf{Remarks on relevance to physics.} We expect our settings, results, and methods to find broad applications in physics through several directions with intimate connections to QEC. 
Note also that QEC properties of quantum systems go hand in hand with their entanglement properties.
Here we discuss preliminarily the potential relevance and point out some references, leaving in-depth explorations for future work.  Remarkably, random unitaries and circuits have drawn great interest in recent years as solvable models of complex, chaotic quantum systems which are key to quantum gravity and many-body physics.  Given the fundamental importance of symmetries and conservation laws, the symmetric versions of random unitaries and circuits that we consider here are expected to be broadly relevant in physical contexts. More specifically, our study of their QEC properties may find implications to many-body physics and quantum gravity through these lenses: 
\begin{itemize}
    \item 
    \zwnnn{Charged black holes and Hayden--Preskill thought experiment with symmetries. }  Hayden and Preskill  \cite{hayden2007} considered the retrieval of quantum information from black hole radiation based on the scrambling feature of quantum black holes modeled by random circuits, which provides important insights to and has stimulated many recent developments on the black hole information problem and quantum gravity. 
    In the original model, the information recovery essentially relies on Haar-random codes, which have almost optimal QEC properties.  However, conserved charges are expected to place certain obstructions on the recovery, for which our results of  random symmetric codes  may indicate various quantitative characterizations  (see also recent works Refs.~\cite{Yoshida:softmode,liu2020,nakata2021black,tajima2021symmetry} that studied Hayden--Preskill with charge conservation from different aspects).
    Note that our current analysis and the optimality arguments mostly focus on the regime of relatively small $t$ (size of erasure), so in order to understand the connections to Hayden--Preskill it could be important to further look into the large $t$ regime. 
    Furthermore, note that the decay process and the Hayden--Preskill recovery of charged black holes is also closely related to other key clues about quantum gravity such as weak gravity \cite{Arkani_Hamed_2007} and no-global-symmetry \cite{BanksSeiberg11} conjectures, thus our analysis may also lead to useful quantitative statements in these regards.
  
\item Scrambling, entanglement growth, and \zwnnn{emergent QEC in complex quantum many-body systems.} Random circuit models have also drawn great interest in condensed matter physics, 
leading to highly active directions like entanglement and operator growth \cite{Nahum2,Nahum1}, and measurement-induced entanglement transition \cite{PhysRevX.9.031009,PhysRevB.100.134306,PhysRevB.99.224307}.  Conservation laws lead to diffusive transport of the conserved quantities so that the laws of information scrambling (which underpins QEC properties) including operator spreading \cite{PhysRevX.8.031057,PhysRevX.8.031058} and R\'enyi entanglement entropy growth \cite{Rakovszky19:renyi,Znidaric2020,Huang20} could exhibit features fundamentally different  from the case without symmetries.  Specifically, note that the deviation from maximal entanglement in quantum many-body systems with conservation laws discussed in e.g.~Ref.~\cite{Huang22} may be closely related to the QEC error of covariant codes, since both have origins in the logical charge information contained in subsystems (e.g., reflected by the symmetry terms). Moreover, the so-called monitored random circuits in which random unitary gates are interspersed with measurements exhibit phase transitions when tuning the measurement density that can be understood from QEC properties \cite{ChoiBaoQiAltman,GullansHuse20:purification}, indicating interesting connections between quantum codes and phases of matter.  This work essentially addresses the late-time (equilibrium) and low measurement density limits of symmetric random circuit models, and it would be interesting to further understand the complete behaviors in early-time regimes and for different measurement densities, in particular, the time scales and measurement densities for achieving the optimal error scaling in certain circuit architectures.  


\end{itemize}

In addition, as mentioned, symmetries and QEC underlie a series of key recent developments in holography and AdS/CFT as well.  In particular, the famous conjecture about quantum gravity that exact global symmetries are not allowed is recently justified in AdS/CFT based on the QEC formulation of AdS/CFT \cite{HarlowOoguri2018arXiv181005338H,harlow2019}, and the argument at least for continuous symmetries indeed has deep connections to the limitations of covariant codes \cite{faist2019}.  Note that the transversality feature deduced from entanglement wedge reconstruction arguments \cite{czech12,wall2014,headrick2014,jafferis2016,PhysRevLett.117.021601,PhysRevX.9.031011} is a critical component of the argument, thus our study (in particular, $SU(d)$) may 
motivate natural models that exhibit optimal QEC (reconstruction) properties, and lead to quantitative insights into the corrections to the exact QEC  or transversality conditions needed to ensure consistency (e.g.,~via the random tensor network models \cite{Hayden2016}).

}


\zw{
With this work, we hope to stimulate further study into random unitaries and circuits with symmetries, which, as discussed above, exhibit many intriguing distinctions from the no-symmetry case.  In this work, we considered random global unitaries, and it would be important to further study random circuits since they can capture the ``complexity'' of the construction as well as the locality structure that is important in physical \zwnnn{and practical} scenarios.     As a general program, it would be interesting to better understand the relations between Haar-random unitaries, $t$-designs, and random local circuits (with different architectures), in the presence of various kinds of symmetries, through various kinds of physical properties and measures.  Here we take a first step by analyzing the QEC performance of random unitaries with $U(1)$ and $SU(d)$ symmetries, and conjectured that it holds for low-depth \zwnnn{symmetric} random circuits.   As discussed,  \zwnnn{there appear to be fundamental difficulties in adapting known methods to achieve a full understanding of the convergence of symmetric random circuits to designs}, but it would already be interesting and useful to look into the behaviors of ``measures'' of scrambling and randomness, such as frame potentials \cite{Scott_2008,2016arXiv160908172Z,RobertsYoshida}, out-of-time-order correlators \cite{hosur2016,RobertsYoshida}, R\'enyi entanglement entropies \cite{hosur2016,LLZZ18,PhysRevLett.120.130502}, which are widely used in physics and quantum information.
}


\section*{Acknowledgements}
We thank Daniel Gottesman, Aram Harrow, Zhi Li, Sirui Lu, Shengqi Sang, Jon Tyson, Beni Yoshida, Sisi Zhou for useful discussions and feedback.   
LK is supported by NSF grants No. CCF-1452616 and No. OMA-2016245. ZWL is supported by Perimeter Institute for Theoretical Physics.
Research at Perimeter Institute is supported in part by the Government of Canada through the Department of Innovation, Science and Economic Development Canada and by the Province of Ontario through the Ministry of Colleges and Universities.

\appendix


\section{Conditional min-entropy}

In this appendix we \zwnn{present in detail our derivation of the conditional min-entropy bounds.} We shall first present a general lower bound, and then \zwnn{specifically} apply it to the $U(1)$ and $SU(d)$ cases. 

\begin{lemma}
\label{lem:entropy}
For any bipartite positive operator $\rho^{PQ}$, we have\lhn{
\begin{equation}
     -\log \|\rho^{PQ}\|_\infty - \log\dim Q \le H_{\min}(P|Q)_\rho \le -\log \|\rho^{PQ}\|_\infty.
\end{equation}}
\end{lemma}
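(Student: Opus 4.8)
The plan is to work directly from the variational definition of the conditional min-entropy, $H_{\min}(P|Q)_\rho = \sup_{\sigma^Q\ge 0,\,\Tr\sigma^Q=1}\sup\{\lambda\in\R \mid 2^{-\lambda}I^P\otimes\sigma^Q\ge \rho^{PQ}\}$, and to establish the two inequalities separately, each by a one-line operator-ordering argument.

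For the upper bound, I would take an arbitrary feasible pair $(\sigma^Q,\lambda)$, i.e.\ one satisfying $2^{-\lambda}I^P\otimes\sigma^Q\ge \rho^{PQ}$, and observe that since $\sigma^Q$ is a normalized state its largest eigenvalue is at most its trace, so $\|I^P\otimes\sigma^Q\|_\infty=\|\sigma^Q\|_\infty\le 1$ and hence $2^{-\lambda}I^P\otimes\sigma^Q\le 2^{-\lambda}I^{PQ}$. Combining the two operator inequalities gives $\rho^{PQ}\le 2^{-\lambda}I^{PQ}$, i.e.\ $2^{-\lambda}\ge\|\rho^{PQ}\|_\infty$, equivalently $\lambda\le -\log\|\rho^{PQ}\|_\infty$. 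Taking the supremum over all feasible $\lambda$ and over all $\sigma^Q$ yields $H_{\min}(P|Q)_\rho\le -\log\|\rho^{PQ}\|_\infty$.

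For the lower bound, I would simply exhibit a feasible point: take $\sigma^Q=I^Q/\dim Q$. The feasibility constraint then reads $2^{-\lambda}\,I^{PQ}/\dim Q\ge \rho^{PQ}$, which holds precisely when $2^{-\lambda}\ge \dim Q\cdot\|\rho^{PQ}\|_\infty$ (using that $\|\rho^{PQ}\|_\infty I^{PQ}-\rho^{PQ}\ge 0$ by definition of the operator norm of a positive operator). Choosing $\lambda=-\log\|\rho^{PQ}\|_\infty-\log\dim Q$ makes this an equality, so this $\lambda$ is attained by an admissible $\sigma^Q$, and therefore $H_{\min}(P|Q)_\rho\ge -\log\|\rho^{PQ}\|_\infty-\log\dim Q$.

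This is essentially a routine fact, so I do not expect a genuine obstacle; the only points requiring a little care are the normalization bound $\|\sigma^Q\|_\infty\le\Tr\sigma^Q=1$ used in the upper bound and the verification that the maximally mixed $\sigma^Q$ is a legitimate feasible choice in the lower bound. (One could also handle the degenerate case $\rho^{PQ}=0$ separately, where both sides are $+\infty$, though one may just as well exclude it by taking $\rho^{PQ}$ nonzero.)
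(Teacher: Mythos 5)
Your proposal is correct and is essentially the paper's own argument: the lower bound via the feasible choice $\sigma^Q = I^Q/\dim Q$ with $\lambda = -\log\|\rho^{PQ}\|_\infty - \log\dim Q$, and the upper bound by noting that feasibility forces $2^{-\lambda} \ge \|\rho^{PQ}\|_\infty$ since $\|2^{-\lambda}I^P\otimes\sigma^Q\|_\infty \le 2^{-\lambda}$. The only difference is cosmetic (you spell out the intermediate inequality $\rho^{PQ}\le 2^{-\lambda}I^{PQ}$ explicitly), so there is nothing to fix.
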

\begin{proof}
\zwnnn{Denote $s\equiv\|\rho^{PQ}\|_\infty$. } We can take $\sigma^Q=I^Q/\dim Q$ and $\lambda = -\log s - \log\dim Q$. Then \zwnn{it holds} that 
\begin{equation}
    2^{-\lambda} I^P \otimes \sigma^Q = s I^{PQ} \ge \rho^{PQ}. \label{eq:hmin-condition}
\end{equation}
\lhn{
By definition, $H_{\min}(P|Q)_\rho $ is given by the supremum of $\lambda$ such that $2^{-\lambda} I^P \otimes \sigma^Q \ge \rho^{PQ}$ for some positive operator $\sigma^Q$ with trace equal to 1. Since our specific choice of $\sigma$ and $\lambda$ satisfies this condition, we have
\begin{equation}
    H_{\min}(P|Q)_\rho \ge -\log \|\rho^{PQ}\|_\infty - \log\dim Q.
\end{equation}
}

For the other \zwnn{inequality}, note that the maximum eigenvalue \zwnnn{of} $2^{-\lambda}I^P \otimes \sigma^Q$ is at most $2^{-\lambda}$, so in order to have $2^{-\lambda}I^P \otimes \sigma^Q \ge \rho^{PQ}$, we must have $2^{-\lambda} \ge s$.
\end{proof}

Then we have the following theorem that gives a general lower bound for $H_{\min}(A^*|RE)_{\Lambda(\Psi,\tau)}$ for the general direct-sum-product decomposition
\begin{equation}
    \H^A = \bigoplus_j \H_j^{A_l} \otimes \H_j^{A_r}. \label{eq:dsp}
\end{equation}
Recall that $\Lambda(\Psi,\tau) = \Xi^{A\bar A \to A^*}(\Psi^{AR} \otimes \tau^{\bar AE})(\Xi^\dagger)^{A\bar A \to A^*}$ where
\begin{equation}
    \Xi^{A \bar{A} \rightarrow A^{*}}:=\bigoplus_{j} \sqrt{\frac{d_A l_j}{r_j}}\<\Phi_j^l|^{A_l\bar A_l}\left(\Pi_{j}^{A} \otimes \Pi_{j}^{\bar{A}}\right).
\end{equation}

\begin{theo}
\label{lem:entropy-lb}
Suppose that the support of $\Psi^{A}:= \Tr_R\Psi^{AR}$ lies within $\bigoplus_{j\in \cJ} \H_j^{A_l} \otimes \H_j^{A_r}$ for some set $\cJ$. If the reference system is $k$ qudits and the channel $\tau$ is erasure of $n-t$ qudits, then
\begin{equation}
    H_{\min}(A^*|RE)_{\Lambda(\Psi,\tau)} \ge -(2t+k)\log d - \log \max_{j \in \cJ}\frac{ l_j}{r_j}
\end{equation}
\end{theo}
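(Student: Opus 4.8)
The plan is to derive the bound directly from the elementary operator–norm estimate of Lemma~\ref{lem:entropy}, applied with $P=A^*$ and $Q=RE$, so that everything reduces to bounding $\|\Lambda(\Psi,\tau)\|_\infty$ and computing $\dim(RE)$. Since the reference system $R$ consists of $k$ qudits and the complementary erasure $\tau^{A\to E}=\Tr_{n-t}$ keeps $t$ qudits, we have $\dim(RE)=d^{k}\cdot d^{t}=d^{k+t}$, which will supply the $-(k+t)\log d$ portion of the claimed inequality; the remaining $-t\log d-\log\max_{j\in\mathcal{J}}(l_j/r_j)$ must come from $-\log\|\Lambda\|_\infty$, so the whole task is to show $\|\Lambda\|_\infty\le d^{t}\max_{j\in\mathcal{J}}(l_j/r_j)$.

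First I would exploit the block structure of $\Xi^{A\bar A\to A^*}=\bigoplus_j\sqrt{d_A l_j/r_j}\,\langle\Phi_j^l|^{A_l\bar A_l}(\Pi_j^A\otimes\Pi_j^{\bar A})$. Because this is a direct sum over $j$ on the output decomposition $\mathcal{H}^{A^*}=\bigoplus_j\mathcal{H}_j^{A_r}\otimes\mathcal{H}_j^{\bar A_r}$, the state $\Lambda=(\Xi\otimes I^{RE})(\Psi^{AR}\otimes\tau^{\bar A E})(\Xi^\dagger\otimes I^{RE})$ is block diagonal, so $\|\Lambda\|_\infty=\max_j\|\Lambda_j\|_\infty$. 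Moreover, since $\Psi^A=\Tr_R\Psi^{AR}$ is supported in $\bigoplus_{j\in\mathcal{J}}\mathcal{H}_j^{A_l}\otimes\mathcal{H}_j^{A_r}$, positivity of $\Psi^{AR}$ forces $(\Pi_j^A\otimes I^R)\Psi^{AR}(\Pi_j^A\otimes I^R)=0$ for $j\notin\mathcal{J}$, hence only the blocks $j\in\mathcal{J}$ contribute and $\|\Lambda\|_\infty=\max_{j\in\mathcal{J}}\|\Lambda_j\|_\infty$.

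Next I would bound each surviving block. For $j\in\mathcal{J}$, $\Lambda_j=\tfrac{d_A l_j}{r_j}(\langle\Phi_j^l|^{A_l\bar A_l}\otimes I)(\Pi_j^A\otimes\Pi_j^{\bar A}\otimes I)(\Psi^{AR}\otimes\tau^{\bar A E})(\Pi_j^A\otimes\Pi_j^{\bar A}\otimes I)(|\Phi_j^l\rangle^{A_l\bar A_l}\otimes I)$. Since $|\Phi_j^l\rangle$ is a unit vector, $|\Phi_j^l\rangle^{A_l\bar A_l}\otimes I$ is an isometry and the $\Pi_j$'s are projectors, so conjugation by these maps does not increase the operator norm; therefore $\|\Lambda_j\|_\infty\le\tfrac{d_A l_j}{r_j}\|\Psi^{AR}\otimes\tau^{\bar A E}\|_\infty\le\tfrac{d_A l_j}{r_j}\|\Psi^{AR}\|_\infty\|\tau^{\bar A E}\|_\infty$. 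Using $\|\Psi^{AR}\|_\infty\le 1$, $d_A=d^{n}$, and the explicit form of the Choi state of the partial trace $\tau^{\bar A E}=|\Phi_t\rangle\langle\Phi_t|^{\bar A_t E}\otimes I^{\bar A_{n-t}}/d^{n-t}$, which has $\|\tau^{\bar A E}\|_\infty=d^{-(n-t)}$, I get $\|\Lambda_j\|_\infty\le d^{n}\cdot\tfrac{l_j}{r_j}\cdot d^{-(n-t)}=d^{t}\tfrac{l_j}{r_j}$. Maximizing over $j\in\mathcal{J}$ and substituting into Lemma~\ref{lem:entropy} gives $H_{\min}(A^*|RE)_\Lambda\ge-\log\!\big(d^{t}\max_{j\in\mathcal{J}}l_j/r_j\big)-\log d^{k+t}=-(2t+k)\log d-\log\max_{j\in\mathcal{J}}(l_j/r_j)$, which is the claim.

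The main obstacle I anticipate is the careful operator-norm bookkeeping in the per-block step: one must verify that contracting with the ``partial bra'' $\langle\Phi_j^l|^{A_l\bar A_l}$ and sandwiching with $\Pi_j^A\otimes\Pi_j^{\bar A}$ are genuinely norm-non-increasing (so that all the dangerous growth is isolated in the explicit prefactor $d_A l_j/r_j$), and to pin down $\|\tau^{\bar A E}\|_\infty=d^{-(n-t)}$ with the normalization convention used for the Choi state in the partial decoupling theorem of Ref.~\cite{wakakuwa2019}. This $d^{-(n-t)}$ factor is precisely what cancels the enormous $d_A=d^{n}$ and leaves the harmless $d^{t}$; getting it right (rather than, say, off by a factor of $d^{n}$ from an unnormalized Choi state) is the crux of the argument.
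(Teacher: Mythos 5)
Your overall strategy is the same as the paper's: apply Lemma~\ref{lem:entropy} with $Q=RE$, $\dim(RE)=d^{k+t}$, and reduce everything to showing $\|\Lambda(\Psi,\tau)\|_\infty\le d^{t}\max_{j\in\cJ}l_j/r_j$, using $\|\Psi^{AR}\otimes\tau^{\bar AE}\|_\infty\le d^{-(n-t)}$ and the prefactors $d_Al_j/r_j$ in $\Xi$. Those ingredients, including the normalization $\|\tau^{\bar AE}\|_\infty=d^{-(n-t)}$, are correct.

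However, there is a genuine gap in the step where you claim that $\Lambda$ is block diagonal over $j$ so that $\|\Lambda\|_\infty=\max_j\|\Lambda_j\|_\infty$. The fact that $\Xi=\sum_j\Xi_j$ has orthogonal output subspaces does not make $\Xi M\Xi^\dagger$ block diagonal: the cross blocks $\Xi_j(\Psi^{AR}\otimes\tau^{\bar AE})\Xi_{j'}^\dagger$ for $j\neq j'$ (both in $\cJ$) involve $\Pi_j^A\Psi^{AR}\Pi_{j'}^A$ and $\Pi_j^{\bar A}\tau^{\bar AE}\Pi_{j'}^{\bar A}$, and neither vanishes in general, since $\Psi^{AR}$ and the Choi state $\tau^{\bar AE}$ are coherent across the sectors labelled by $j$ through $R$ and $E$ respectively. (Indeed, the exact computation of $\Lambda$ in Appendix~B of the paper for the $U(1)$ case exhibits explicit cross terms $|\gamma_{j,i}\>\<\gamma_{j',i}|$ with $j\neq j'$; $\Lambda$ is block diagonal only in the auxiliary index $i$, not in $j$.) For a matrix with nonzero off-diagonal blocks, the operator norm can exceed the maximum of the diagonal block norms, so your equality is unjustified. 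The support hypothesis only lets you drop $j\notin\cJ$ (there positivity does give $\Pi_j^A\Psi^{AR}=0$), not the off-diagonal $(j,j')$ blocks within $\cJ$. The repair is simple and is what the paper does: replace $\Xi$ by $\tilde\Xi=\sum_{j\in\cJ}\Xi_j$, note that $\tilde\Xi_j\tilde\Xi_{j'}^\dagger=\delta_{jj'}\frac{d_Al_j}{r_j}I_j$ so $\|\tilde\Xi\|_\infty=\max_{j\in\cJ}\sqrt{d_Al_j/r_j}$, and bound $\|\Lambda\|_\infty\le\|\tilde\Xi\|_\infty^2\,\|\Psi^{AR}\otimes\tau^{\bar AE}\|_\infty\le d^{t}\max_{j\in\cJ}l_j/r_j$, after which your final step goes through unchanged.
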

\begin{proof}
Note that $\Pi_j^A \Psi^{AR} = \Psi^{AR}\Pi_j^A = 0$ for any $j \not\in \cJ$, we can replace $\Xi^{A \bar{A} \rightarrow A^{*}}$ by $\tilde \Xi^{A \bar{A} \rightarrow A^{*}}$ defined as
\begin{equation}
    \tilde \Xi^{A \bar{A} \rightarrow A^{*}}:=\bigoplus_{j\in\cJ} \sqrt{\frac{d_A l_j}{r_j}}\<\Phi_j^l|^{A_l\bar A_l}\left(\Pi_{j}^{A} \otimes \Pi_{j}^{\bar{A}}\right),
\end{equation}
and
\begin{equation}
    \|\tilde \Xi\|_\infty \le \max_{j \in \cJ}\sqrt{\frac{d_A l_j}{r_j}} = \max_{j \in \cJ}\sqrt{\frac{d^n l_j}{r_j}}.
\end{equation}

The state $\tau^{\bar AE}$ is the Choi state that corresponds to the erasure of $n-t$ qudits, which \lhn{is} equal to the maximally mixed state on $n-t$ qudits together with $t$ EPR pairs 
\lhn{
\begin{equation}
    \tau^{\bar AE} = \left(\frac{I}{d}\right)^{\otimes (n-t)} \otimes (|\hat\phi\>\<\hat\phi|)^{\otimes t}
\end{equation}
where $|\hat\phi\>$ is the maximally entangled state.
}

We can see that
\begin{equation}
    \|\Psi^{AR} \otimes \tau^{\bar AE}\|_\infty \le \frac{1}{d^{n-t}},
\end{equation}
so
\begin{equation}
    \|\Lambda(\Psi,\tau)\|_\infty \le \|\tilde \Xi\|_\infty^2 \|\Psi^{AR} \otimes \tau^{\bar AE}\|_\infty = d^{t} \max_{j \in cJ}\frac{ l_j}{r_j}.
\end{equation}
Then by Lemma~\ref{lem:entropy} we have
\begin{equation}
    H_{\min}(A^*|RE)_{\Lambda(\Psi,\tau)} \ge -(2t+k)\log d - \log \max_{j \in \cJ}\frac{ l_j}{r_j}.
\end{equation}

\end{proof}

\subsection{$U(1)$ case}
\label{app:entropy}

For the $U(1)$ case, $0\le j \le n$ labels the eigenspaces of Hamming weight operator and $l_j = 1, r_j = \binom{n}{j}$. For $(n, k; \alpha)$-$U(1)$ codes, we can set $\cJ = \{j|\alpha \le j \le k + \alpha\}$. When studying Choi error, the reference system have $k$ qubits (note that $d=2$), so in the large $n$ limit,
\begin{align}
    &H_{\min}(A^*|RE)_{\Lambda(\Psi,\tau)} \nonumber\\
    \ge& -(2t+k) - \log \max_{j \in \cJ}\frac{1}{\binom{n}{j}} \nonumber\\
    =& n\min\left\{H_b\left(\frac{\alpha}{n}\right),H_b\left(\frac{\alpha+k}{n}\right)\right\} - (2t+k) - O(\log n).
\end{align}
where $H_b(p) = -p\log p - (1-p)\log (1-p)$ is the binary Shannon entropy.

When studying worst-case error using Lemma~\ref{lem:worst-bound}, the reference system $R$ is trivial, and
\begin{align}
    H_{\min}(A^*|RE)_{\Lambda(\Psi,\tau)} \ge & n\min\left\{H_b\left(\frac{\alpha}{n}\right),H_b\left(\frac{\alpha+k}{n}\right)\right\} \nonumber\\
    &- 2t - O(\log n).
\end{align}
\subsection{$SU(d)$ case}
\label{app:sud-entropy}

For the $SU(d)$ case, the labels $j$ are the partitions of $n$ into at most $d$ parts. When studying $(n; \lambda)$-$SU(d)$ codes, according to Sec.~\ref{sec:lr}, the support of $\Psi^A$ only overlaps with $\Pi_{j}$ that satisfies $\lambda \lhd j$, i.e., $\cJ = \{j|\lambda \lhd j\}$.

Consider the large $n$ limit with $d$ remaining constant. Suppose that $\lambda$ satisfies $n-\lambda_1 = \Theta(n)$, from Eq.~\eqref{eq:l} and Eq.~\eqref{eq:r-lb} we know that
\begin{equation}
     H_{\min}(A^*|RE) \ge -2t\log d + n H(\lambda) - O(\log n)
\end{equation}
where
\begin{equation}
    H(\lambda) = H\left(\frac{\lambda_1}{n-1},\cdots, \frac{\lambda_d}{n-1}\right)
\end{equation}
is the Shannon entropy of the distribution $(\frac{\lambda_1}{n-1},\cdots, \frac{\lambda_d}{n-1})$. Note that the difference resulted from adding a box to $\lambda$ is small and could be absorbed into $O(\log n)$.

\section{Alternative derivation of $U(1)$ conditional min-entropies}
\label{sec:alternative}
In this appendix we provide an alternative \zwnn{analysis of the conditional min-entropies} in the $U(1)$ case, \zwnn{which could be of independent interest}. \zwnn{In fact we give  exact expressions of} $\Lambda(\Psi,\tau)$,  \zwnn{which yield lower bounds} as well as \zwnn{upper bounds} for \zwnn{$H_{\min} $}. 

\subsection{Choi}\label{app:alternative_choi}
We first restate the partial decoupling theorem adapted to our structure of the space, which corresponds to $l_j=1$ and $r_j = \binom{n}{j}$ for all $0\le j \le n$ in Ref.~\cite{wakakuwa2019}. We denote by $\Pi_j$ the projector onto the subspace with Hamming weight $j$.
\begin{lemma}[Partial Decoupling]
Let $\mathcal{T}^{A\to E}$ be any channel mapping system $A$ to system $E$, and let $\Psi^{AR}$ be any joint state of system $A$ and $R$. We have
\begin{align}
    &\mathbb{E}_{U \sim H_{\times,U(1)}}\left[\left\|\mathcal{T}^{A \rightarrow E} \circ \mathcal{U}^{A}\left(\Psi^{A R}\right)-\mathcal{T}^{A \rightarrow E}\left(\Psi_{\avg}^{A R}\right)\right\|_{1}\right] \nonumber\\
    \leq& 2^{-\frac{1}{2} H_{\min{} }(A^*|RE)_{\Lambda(\Psi, \mathcal{T})}},
\end{align}
where
\begin{equation}
    \Psi_\avg^{A R} = \E_{U\sim H_{\times,U(1)}} U \Psi^{A R} U^\dagger.
\end{equation}
The state $\Lambda(\Psi, \mathcal{T})$ is defined as
\begin{equation}
    \Lambda(\Psi, \mathcal{T}) = \Xi(\Psi^{AR} \otimes \tau^{\bar A E})\Xi^\dagger
\end{equation}
where $\tau^{\bar A E}$ is the Choi state of $\mathcal{T}$ and the operator $\Xi^{A\bar A \to A^*}$ is
\begin{equation}
    \Xi^{A \bar{A} \rightarrow A^{*}}:=\bigoplus_{j=0}^n \sqrt{\frac{2^n }{\binom{n}{j}}}\left(\Pi_{j}^{A} \otimes \Pi_{j}^{\bar{A}}\right).
\end{equation}
\end{lemma}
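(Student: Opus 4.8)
The plan is to obtain this statement as a direct specialization of the general (non-smoothed) partial decoupling theorem of Ref.~\cite{wakakuwa2019}, recalled in Sec.~\ref{sec:partial-dec}, to the direct-sum-product decomposition of the $n$-qubit Hilbert space into Hamming-weight eigenspaces. First I would set up the decomposition: write $\H^A = (\C^2)^{\otimes n} = \bigoplus_{j=0}^n \H_j$ with $\H_j := \Pi_j^A (\C^2)^{\otimes n}$ the eigenspace of $Q^{(n)}$ with eigenvalue $j$, which has dimension $\binom{n}{j}$. This fits the general form $\H^A = \bigoplus_j \H_j^{A_l}\otimes\H_j^{A_r}$ upon taking $\H_j^{A_l} = \C$ (trivial) and $\H_j^{A_r} = \H_j$, so that $l_j = 1$, $r_j = \binom{n}{j}$, and $d_A = 2^n = \sum_j r_j$.

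Next I would check that the measure $H_{\times,U(1)}$ is exactly the measure $H_\times$ attached to this decomposition. A unitary $U$ commutes with $Q^{(n)}$ iff it is block-diagonal with respect to the $\H_j$, i.e.\ $U = \bigoplus_{j=0}^n U_j$ with each $U_j$ an arbitrary unitary on $\H_j$; hence $\cU_{\times,U(1)} \cong \prod_{j=0}^n U(\binom{n}{j})$ as compact groups, and its Haar measure is the product of the Haar measures on the factors. Identifying $I_j^{A_l}\otimes U_j^{A_r}$ with $U_j$ (the $A_l$ leg being one-dimensional), this is precisely the structure required by $H_\times$, so the general theorem applies verbatim and yields the bound $2^{-\frac{1}{2} H_{\min}(A^*|RE)_{\Lambda(\Psi,\cT)}}$ together with $\Psi_\avg^{AR} = \E_{U\sim H_{\times,U(1)}} U\Psi^{AR}U^\dagger$.

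Then I would read off the concrete form of $\Xi$. Substituting $l_j = 1$, $r_j = \binom{n}{j}$, $d_A = 2^n$ into Eq.~\eqref{eq:lambda_F}: because $\H_j^{A_l}$ is one-dimensional, the maximally entangled vector $|\Phi_j^l\>^{A_l\bar A_l}$ is the unique product vector on $\C\otimes\C$ and the contraction $\<\Phi_j^l|^{A_l\bar A_l}$ acts as the identity on the remaining registers, while the prefactor $\sqrt{d_A l_j / r_j}$ collapses to $\sqrt{2^n/\binom{n}{j}}$. One therefore obtains $\Xi^{A\bar A\to A^*} = \bigoplus_{j=0}^n \sqrt{2^n/\binom{n}{j}}\,(\Pi_j^A\otimes\Pi_j^{\bar A})$ and $\Lambda(\Psi,\cT) = \Xi(\Psi^{AR}\otimes\tau^{\bar A E})\Xi^\dagger$ as stated, with $A^* = \bigoplus_j \H_j\otimes\H_j$ (the second factor living on the $\bar A$ side).

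The only step requiring genuine care — which I would flag as the main (if routine) obstacle — is the conventional bookkeeping when matching Ref.~\cite{wakakuwa2019}'s general framework: confirming that the trivial left factors $\H_j^{A_l}$ do not introduce spurious dimensional constants, that the normalization of the Choi state $\tau^{\bar A E}$ agrees with the one used there, and that the target register $A^* = \bigoplus_j \H_j^{A_r}\otimes\H_j^{\bar A_r}$ reduces correctly when $\H_j^{A_l}$ is one-dimensional. Once these are pinned down the lemma follows with no further argument; in particular the smoothing present in the full version of the theorem is not needed here, since only the non-smoothed inequality is claimed.
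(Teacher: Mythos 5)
Your proposal is correct and matches the paper's treatment: the paper offers no separate proof, simply presenting this lemma as the general partial decoupling theorem of Ref.~\cite{wakakuwa2019} (recalled in Sec.~\ref{sec:partial-dec}) adapted to the Hamming-weight decomposition with $l_j=1$ and $r_j=\binom{n}{j}$. Your verification that $H_{\times,U(1)}$ coincides with $H_\times$ for this decomposition and that the prefactor in Eq.~\eqref{eq:lambda_F} collapses to $\sqrt{2^n/\binom{n}{j}}$ with the trivial left factors contracting away is exactly the specialization the paper has in mind.
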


For simplicity\lhn{, for any integer $m$} we define the $2m$-qubit normalized state
\begin{equation}
    |\phi_i^{(m)}\> = \binom{m}{i}^{-1/2} \sum_{v \in \{0,1\}^m,\,|v|=i}|v\>|v\>,
\end{equation}
which is the maximally entangled state between two copies of subspaces with Hamming weight $i$ of $m$ qubits. We also define $|\hat\phi^{(m)}\>$ as $m$ EPR pairs.

In our setting the state $\Psi^{AR}$ is the encoded state before applying the random unitary, which is $k$ EPR pairs appended by the fixed state $|\psi\>$,
\begin{equation}
    \Psi^{AR} = |\Psi^{AR}\>\<\Psi^{AR}|,\quad |\Psi^{AR}\> =  |\hat \phi\>^{A_1 R} \otimes |\psi_\alpha\>^{A_2}.
\end{equation}
$A_1$ and $A_2$ refers to different parts of $A$, and have $k$ and $n-k$ qubits each. Then it is easy to see that
\begin{equation}
    \Pi_j^A |\Psi\>^{AR} = \sqrt{\frac{\binom{k}{j-\alpha}}{2^k}}|\phi_{j-\alpha}^{(k)}\>^{A_1R}|\psi_\alpha\>^{A_2}.
\end{equation}

The channel $\mathcal T$ traces over $n-t$ qubits, so the corresponding \zwnn{Choi state} is
\begin{equation}
    \tau^{\bar A E} = \Tr_{n-t} |\hat\phi^{(n)}\>\<\hat\phi^{(n)}| = |\hat\phi^{(t)}\>\<\hat\phi^{(t)}|^{\bar A_1 E} \otimes \frac{I^{\bar A_2}}{2^{n-t}}.
\end{equation}
Let $\Pi^{(a)}_b$ be the subspace on $a$ qubits with Hamming weight $b$ \lhn{for integers $a$ and $b$}. We have
\begin{align}
    &\Pi_j^{\bar A}\tau^{\bar A E}\Pi_{j'}^{\bar A} \nonumber \\
    = & \frac{1}{2^{n-t}}\sum_i\Pi_j^{\bar A}\left[|\hat\phi^{(t)}\>\<\hat\phi^{(t)}|^{\bar A_1 E} (\Pi^{(n-t)}_i)^{\bar A_2}\right] \Pi_{j'}^{\bar A} \nonumber \\
    =&\frac{1}{2^{n}}\sum_i \sqrt{\binom{t}{j-i}\binom{t}{j'-i}} |\phi_{j-i}^{(t)}\>\<\phi_{j'-i}^{(t)}|^{\bar A_1 E} (\Pi^{(n-t)}_i)^{\bar A_2}, \label{eq:proj}
\end{align}
and therefore
\begin{align}
    &\Lambda(\Psi, \mathcal{T})\nonumber \\
    =& \frac{1}{2^{k}}\sum_{i,j,j'} \sqrt{\frac{\binom{t}{j-i}\binom{t}{j'-i}\binom{k}{j-\alpha}\binom{k}{j'-\alpha}}{\binom{n}{j}\binom{n}{j'}}} |\phi_{j-i}^{(t)}\>\<\phi_{j'-i}^{(t)}|^{\bar A_1 E} \nonumber\\
    &\quad \otimes (\Pi^{(n-t)}_i)^{\bar A_2}|\phi_{j-\alpha}^{(k)}\>\<\phi_{j'-\alpha}^{(k)}|^{A_1R}|\psi_\alpha\>\<\psi_\alpha|^{A_2} \nonumber \\
    =&  \sum_{i,j,j'} (\Pi^{(n-t)}_i)^{\bar A_2} \otimes |\gamma_{j,i}\>\<\gamma_{j',i}|^{\bar A_1 A_1A_2ER} \nonumber \\
    =& \sum_i (\Pi^{(n-t)}_i)^{\bar A_2} \otimes |\Gamma_i\>\<\Gamma_i|^{\bar A_1 A_1A_2ER}
\end{align}
where
\begin{align}
    |\Gamma_i\> =& \sum_j |\gamma_{j,i}\>, \\
    |\gamma_{j,i}\>^{\bar A_1 A_1A_2ER} =& \sqrt{\frac{\binom{t}{j-i}\binom{k}{j-\alpha}}{2^k\binom{n}{j}}}|\phi_{j-i}^{(t)}\>^{\bar A_1 E}|\phi_{j-\alpha}^{(k)}\>^{A_1R}|\psi_\alpha\>^{A_2}.
\end{align}

As mentioned in Sec.~\ref{subsec:min-entropy}, the min conditional entropy is defined as
\begin{equation}
    H_{\min}(P|Q)_\rho = \sup_{\sigma\ge 0, \Tr\sigma=1}\sup\{\lambda\in \R| 2^{-\lambda}I^P\otimes \sigma^Q \ge \rho^{PQ}\},
\end{equation}
or equivalently, $H_{\min}(P|Q)_\rho=-\log s$ where $s$ is the optimum value of the following semidefinite program (SDP)
\begin{equation}
    s=\inf \Tr\sigma, \quad \text{s.t. }I^P\otimes \sigma^Q \ge \rho^{PQ},\,\sigma\ge0. \label{eq:primal}
\end{equation}
The corresponding dual SDP is
\begin{equation}
    t=\sup \<\rho^{PQ},y^{PQ}\>,\quad \text{s.t. }\Tr_P[y^{PQ}]\le I_Q,\,y\ge 0. \label{eq:dual}
\end{equation}
It is obvious that both primal and dual are strongly feasible, so $s=t$. We can use the following lemma to relate the min-entropy of $\Lambda(\Psi, \mathcal{T})$ to the min-entropy of each $|\Gamma_i\>$.
\begin{lemma}
\label{lem:sum}
Suppose the register $P$ in \zwnn{Eqs.}~\eqref{eq:primal} and \eqref{eq:dual} can be \lhn{decomposed} into \lhn{registers} $P_1$ and $P_2$, and the state $\rho^{PQ}$ has the structure
\begin{equation}
    \rho^{PQ} = \sum_{i=1}^m \Pi_i^{P_1} \otimes \rho_i^{P_2Q}
\end{equation}
where $\Pi_i$ are projectors into disjoint subspaces. Then $s$, the result of the SDP, \zwnn{satisfies} 
\begin{equation}
\frac{1}{m}\sum_i s_i \le  s \le \sum_i s_i
\end{equation}
where $s_i$ is the result \zwnn{of} the SDP \zwnn{for} $\rho_i$.
\end{lemma}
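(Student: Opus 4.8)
The plan is to argue directly at the level of the semidefinite program \eqref{eq:primal} and its dual \eqref{eq:dual}. Both are strongly feasible, so $s$ equals the common optimal value; this lets me prove the upper bound by exhibiting a primal-feasible operator and the lower bound by exhibiting a dual-feasible operator (or, alternatively, by a restriction argument in the primal). Throughout I write $\mathcal{H}^P=\mathcal{H}^{P_1}\otimes\mathcal{H}^{P_2}$ and recall that the SDP defining $s_i$ is the instance of \eqref{eq:primal} obtained by replacing $(P,\rho^{PQ})$ with $(P_2,\rho_i^{P_2Q})$, so the minimization there is again over operators on $Q$.

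For the upper bound $s\le\sum_i s_i$, let $\sigma_i$ be optimal for $s_i$, so $I^{P_2}\otimes\sigma_i^Q\ge\rho_i^{P_2Q}$ and $\Tr\sigma_i=s_i$, and set $\sigma:=\sum_i\sigma_i$. Tensoring the $i$-th inequality with the positive operator $\Pi_i^{P_1}$ and summing gives $\sum_i\Pi_i^{P_1}\otimes I^{P_2}\otimes\sigma_i^Q\ge\rho^{PQ}$; and since $I^{P_1}\ge\Pi_i^{P_1}$ and $I^{P_2}\otimes\sigma_i^Q\ge0$ we have $I^{P_1}\otimes I^{P_2}\otimes\sigma_i^Q\ge\Pi_i^{P_1}\otimes I^{P_2}\otimes\sigma_i^Q$, so summing over $i$ yields $I^P\otimes\sigma^Q\ge\rho^{PQ}$. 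Thus $\sigma$ is primal-feasible with $\Tr\sigma=\sum_i s_i$, giving the claim.

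For the lower bound I would use the disjointness $\Pi_i\Pi_j=\delta_{ij}\Pi_i$, and there are two equivalent routes. Route (i), in the primal: take an optimal $\sigma$ for $s$ and conjugate $I^P\otimes\sigma^Q\ge\rho^{PQ}$ by the Hermitian operator $\Pi_i^{P_1}\otimes I^{P_2}\otimes I^Q$; the left side collapses to $\Pi_i^{P_1}\otimes I^{P_2}\otimes\sigma^Q$ and the right to $\Pi_i^{P_1}\otimes\rho_i^{P_2Q}$, and testing both sides against vectors of the form $|w\rangle^{P_1}\otimes|x\rangle^{P_2Q}$ with $|w\rangle\in\mathrm{range}(\Pi_i^{P_1})$ normalized shows $I^{P_2}\otimes\sigma^Q\ge\rho_i^{P_2Q}$, i.e.\ the \emph{same} $\sigma$ is feasible for each $s_i$; hence $s\ge\max_i s_i\ge\frac1m\sum_i s_i$. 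Route (ii), in the dual: pick unit vectors $|e_i\rangle\in\mathrm{range}(\Pi_i^{P_1})$ and optimal dual solutions $y_i^{P_2Q}$ for $s_i$ (so $\Tr_{P_2}y_i\le I^Q$, $y_i\ge0$, $\langle\rho_i,y_i\rangle=s_i$), and set $y^{PQ}:=\frac1m\sum_i|e_i\rangle\langle e_i|^{P_1}\otimes y_i^{P_2Q}$; then $y\ge0$, $\Tr_P y=\frac1m\sum_i\Tr_{P_2}y_i\le I^Q$, and since $\langle e_j|\Pi_i|e_j\rangle=\delta_{ij}$ the objective is $\langle\rho^{PQ},y^{PQ}\rangle=\frac1m\sum_i\langle\rho_i,y_i\rangle=\frac1m\sum_i s_i$, so again $s\ge\frac1m\sum_i s_i$.

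There is no genuine obstacle here; the statement is essentially bookkeeping with operator inequalities and strong duality. The one point that needs care is the lower bound: the naive dual candidate $\frac1m\sum_i\Pi_i^{P_1}\otimes y_i^{P_2Q}$ fails feasibility because $\Tr_{P_1}\Pi_i^{P_1}=\mathrm{rank}(\Pi_i)$ can be large, so one must either replace $\Pi_i^{P_1}$ by a rank-one projector $|e_i\rangle\langle e_i|$ (route (ii)) or avoid the dual entirely via the primal restriction (route (i)). I would present route (i), since it is shortest and in fact yields the slightly stronger bound $s\ge\max_i s_i$, which trivially implies the stated $\frac1m\sum_i s_i\le s$.
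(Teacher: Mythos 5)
Your proof is correct, and your upper bound is exactly the paper's: take $\sigma=\sum_i\sigma_i$ and verify feasibility via $I^{P_1}\ge\Pi_i^{P_1}$. For the lower bound, your route (ii) is essentially the paper's argument: the paper's dual candidate is $y=\frac1m\sum_i\frac{\Pi_i^{P_1}}{\Tr[\Pi_i]}\otimes y_i^{P_2Q}$, i.e.\ it already fixes the normalization issue you warn about, just by trace-normalizing $\Pi_i$ instead of passing to a rank-one projector $|e_i\rangle\langle e_i|$; the two choices are interchangeable and both give $\langle\rho,y\rangle=\frac1m\sum_i s_i$ with $\Tr_P y\le I^Q$. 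Your route (i) is genuinely different from the paper: by conjugating the primal constraint with $\Pi_i^{P_1}\otimes I$ and testing on product vectors $|w\rangle\otimes|x\rangle$ with $|w\rangle\in\mathrm{range}(\Pi_i)$, you show the single optimal $\sigma$ for $s$ is feasible for every block, giving the stronger conclusion $s\ge\max_i s_i$ (hence $s\ge\frac1m\sum_i s_i$ since $s_i\ge0$), with no appeal to the dual at all. What the paper's route buys is a symmetric primal/dual presentation matching the SDP formulation it just introduced; what yours buys is a shorter argument and a tighter lower bound that makes clear the factor $1/m$ is only an artifact of averaging in the dual construction, which is harmless here since the ensuing entropy bounds only lose an additive $\log(k+t+1)$ or $\log(t+1)$ term either way.
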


\begin{proof}
We prove the lemma by constructing feasible solutions of the primal and the dual. Let
\[
    \sigma=\sum_i \sigma_i
\]
where $\sigma_i$ is the optimal solution for the primal SDP of $\rho_i$. Then it is natural that $\Tr[\sigma]=\sum_i s_i$, and the condition \lhn{for the SDP} holds because
\begin{equation}
    I^{P_1P_2} \otimes \sigma \ge \sum_i \Pi_i^{P_1} \otimes I^{P_2} \otimes \sigma_i^Q \ge \sum_i \Pi_i^{P_1} \otimes \rho_i^{P_2Q} = \rho.
\end{equation}

For the dual SDP, let
\begin{equation}
    y = \frac{1}{m}\sum_i \frac{\Pi_i^{P_1}}{\Tr[\Pi_i]} \otimes y_i^{P_2Q}
\end{equation}
where $y_i$ is the optimal solution for the dual SDP of $\rho_i$. Then
\begin{equation}
    \<\rho,y\> = \frac{1}{m}\sum_i \Tr\left[\Pi_i\frac{\Pi_i}{\Tr[\Pi_i]}\right]\Tr[y_i\rho_i] = \frac{1}{m}\sum_i s_i,
\end{equation}
and
\begin{equation}
    \Tr_P y = \frac{1}{m}\sum_i \Tr \frac{\Pi_i}{\Tr[\Pi_i]} \Tr_{P_2}[y_i] \le \frac{1}{m}\sum_i I^Q = I^Q.
\end{equation}
\end{proof}
Note that the state $|\Gamma_i\>$ is a pure state, so its min entropy can be calculated using Eq.~\eqref{eq:pure-min}. Therefore we have
\begin{equation}
    H_{\min}(A^*|RE)_{\Gamma_i} = -2\log\left(\sum_j \frac{1}{\sqrt{2^k\binom{n}{j}}}\binom{t}{j-i}\binom{k}{j-\alpha}\right),
\end{equation}
and the value for the corresponding SDP is
\begin{equation}
  \left(\sum_j \frac{1}{\sqrt{2^k\binom{n}{j}}}\binom{t}{j-i}\binom{k}{j-\alpha}\right)^2. 
\end{equation}

Note that $j$ and $i$ should satisfy
\[
    0 \le j-i \le t, \quad 0 \le j-\alpha \le k,
\]
so $\alpha-t \le i \le \alpha + k$, and there are at most $k+t+1$ possible values for $i$. By Lemma~\ref{lem:sum} we have
\begin{equation}
     -\log\kappa \le H_{\min}(A^*|RE)_{\Lambda} \le  -\log\frac{\kappa}{k+t+1}. \label{eq:entropy-bound}
\end{equation}
where
\begin{equation}
    \kappa = \sum_i\left(\sum_j \frac{1}{\sqrt{2^k\binom{n}{j}}}\binom{t}{j-i}\binom{k}{j-\alpha}\right)^2.
\end{equation}
Note that
\begin{align}
    &\frac{1}{\sqrt{2^k\binom{n}{j}}}\binom{t}{j-i}\binom{k}{j-\alpha}\nonumber\\
    \le& 2^{-k/2}\binom{t}{t/2}\binom{k}{k/2}\frac{1}{\sqrt{\min\{\binom{n}{\alpha},\binom{n}{\alpha+k}\}}},
\end{align}
so from Eq.~\eqref{eq:entropy-bound} we have
\begin{align}
    H_{\min}(A^*|RE)_{\Lambda} \ge& n \min\left\{H_b\left(\frac{\alpha}{n}\right),H_b\left(\frac{\alpha+k}{n}\right)\right\}\nonumber\\
    &-2t-k+O(\log n)
\end{align}
for general values of $t$ and $k$ as long as $\alpha$ is linear in $n$. Here $H_b(\cdot)$ is the binary Shannon entropy
\begin{equation}
    H_b(x)\zwnnn{:=} -x\log x -(1-x)\log(1-x), \quad 0 \le x \le 1.
\end{equation}
If  \zwnnn{$t,k = o(n)$ (which is required by the $t^2k^2=o(n)$ condition in Thm.~\Ref{thm:choi}}), this implies $H_{\min}(A^*|RE)_{\Lambda} = \Omega(n)$.

\linghang{
When $\alpha$, $k$ and $t$ are all $O(1)$ and does not depend on $n$, the bound in Eq.~\eqref{eq:entropy-bound} implies that
\begin{equation}
     H_{\min}(A^*|RE)_{\Lambda} \ge \alpha\log n + O(1).
\end{equation}
}

\subsection{Worst-case}\label{app:alternative_worst}
\zwnn{Here} we are interested in the conditional entropy $H_{\min }(A^*|RE)_{\Lambda(\Psi, \mathcal{T})}$ with the initial states $\Psi$ being $|x\>$, $|\mu^\pm_{x,x'}\>$ and $|\nu^\pm_{x,x'}\>$, where
\begin{equation}
    |\mu^\pm_{x,x'}\> = \frac{1}{\sqrt 2}(|x\> \pm |x'\>), \quad|\nu^\pm_{x,x'}\> = \frac{1}{\sqrt 2}(|x\> \pm i|x'\>).
\end{equation}
In other words, the state $|\Psi^{AR}\>$ is one of the above states appended by $|\psi_\alpha\>$, a state with Hamming weight $\alpha$. The reference system $R$ is now trivial, in contrast to the $k$ qubits in Appendix~\ref{app:entropy}. Here the channel $\mathcal{T}$ is the erasure channel over $n-t$ qubits.  Using Eq.~\eqref{eq:proj}, we have
\begin{align}
    &\Lambda(\Psi, \mathcal{T}) \nonumber\\
    =& \frac{1}{\binom{n}{|x|+\alpha}} |x\>\<x|^{A_1} \otimes |\psi_\alpha\>\<\psi_\alpha|^{A_2} \nonumber\\
    & \otimes \sum_i \binom{t}{|x|+\alpha-i}|\phi_{|x|+\alpha-i}^{(t)}\>\<\phi_{|x|+\alpha-i}^{(t)}|^{\bar A_1 E} (\Pi_i^{(n-t)})^{\bar A_2} \nonumber\\
    =& \sum_i (\Pi_i^{(n-t)})^{\bar A_2} \otimes |\gamma_{x,i}\>\<\gamma_{x,i}|, \label{eq:lambda}
\end{align}
where
\begin{equation}
    |\gamma_{x,i}\> = \sqrt{\frac{\binom{t}{|x|+\alpha-i}}{\binom{n}{|x|+\alpha}}}|x\>^{A_1} \otimes |\psi_\alpha\>^{A_2} \otimes |\phi_{|x|+\alpha-i}^{(t)}\>^{\bar A_1 E}.
\end{equation}
Now using Lemma~\ref{lem:sum} and Eq.~\eqref{eq:pure-min}, we have
\begin{equation}
   -\log\left[ \frac{\sum_i\binom{t}{|x|+\alpha-i}^2}{\binom{n}{|x|+\alpha}}\right]  \le  H_{\min }^x \le -\log\left[\frac{\sum_i\binom{t}{|x|+\alpha-i}^2}{\binom{n}{|x|+\alpha}(t+1)}\right], 
\end{equation}
where $H_{\min }^x$ stands for $H_{\min }(A^*|RE)_{\Lambda(\Psi, \mathcal{T})}$ when the initial state is $|x\>\<x|$. This could be further simplified to
\begin{equation}
   -\log\left[ \frac{\binom{2t}{t}}{\binom{n}{|x|+\alpha}}\right]  \le  H_{\min }^x \le -\log\left[\frac{\binom{2t}{t}}{\binom{n}{|x|+\alpha}(t+1)}\right]. 
\end{equation}
When the initial state is $|\mu^\pm_{x,x'}\>$ \lhn{or} $|\nu^\pm_{x,x'}\>$, the state in Eq.~\eqref{eq:lambda} will have the same form, with $|\gamma_{x,i}\>$ replaced by $\frac{1}{\sqrt{2}}(|\gamma_{x,i}\>\pm |\gamma_{x',i}\>)$ and $\frac{1}{\sqrt{2}}(|\gamma_{x,i}\>\pm i|\gamma_{x',i}\>)$ correspondingly. Then
\begin{equation}
   -\log\left[\frac{\chi}{2} \right]  \le  H_{\min }^{x,x'} \le -\log\left[\frac{\chi}{2(t+1)} \right],
\end{equation}
where $H_{\min }^{x,x'}$ stands for $H_{\min }(A^*|RE)_{\Lambda(\Psi, \mathcal{T})}$  when the initial state is one of $|\mu^\pm_{x,x'}\>$ or $|\nu^\pm_{x,x'}\>$, and
\begin{align}
    \chi =& \sum_i\left(\frac{\binom{t}{|x|+\alpha-i}}{\sqrt{\binom{n}{|x|+\alpha}}}+\frac{\binom{t}{|x'|+\alpha-i}}{\sqrt{\binom{n}{|x'|+\alpha}}}\right)^2 \nonumber\\ 
    =& \frac{\binom{2t}{t}}{\binom{n}{|x|-\alpha}}+\frac{\binom{2t}{t}}{\binom{n}{|x'|-\alpha}}+\frac{2\binom{2t}{t+|x|-|x'|}}{\sqrt{\binom{n}{|x|-\alpha}\binom{n}{|x'|-\alpha}}}.
\end{align}

Suppose that in the large $n$ limit $\frac{\alpha}{n}$ and $\frac{\alpha+k}{n}$ are both constants between 0 and 1, we have
\begin{align}
    H_{\min }^x =& n H_b\left(\frac{|x|+\alpha}{n}\right) -2t +O(\log n) \nonumber\\
    \ge& n \min\left\{H_b\left(\frac{\alpha}{n}\right),H_b\left(\frac{\alpha+k}{n}\right)\right\} - 2t + O(\log n),
\end{align}
and
\begin{align}
    &H_{\min }^{x,x'} \nonumber\\
    \ge & n \min\left\{H_b\left(\frac{|x|+\alpha}{n}\right),H_b\left(\frac{|x'|+\alpha}{n}\right)\right\} - 2t + O(\log n) \nonumber\\ 
    \ge & n \min\left\{H_b\left(\frac{\alpha}{n}\right),H_b\left(\frac{\alpha+k}{n}\right)\right\} - 2t + O(\log n).
\end{align}

\section{\zwnnn{Average physical states} in the $U(1)$ case}
\subsection{Choi}
\label{app:average-state}
Following the definitions \zwnn{above}, $\Phi_\text{avg}$ is a joint state on $n$-qubit register $A$ and $k$-qubit register $R$. From Eq.~\eqref{eq:avg-state},
\begin{align}
    &\Phi_\text{avg}^{RA} \nonumber\\
    =& \E_{U\sim H_{\times,U(1)}} U(|\hat\phi^{(k)}\>\<\hat\phi^{(k)}|^{A_1R}\otimes |\psi_\alpha\>\<\psi_\alpha|^{A_2})U^\dagger \nonumber \\
    =& \sum_{j=0}^k \Tr_A[\Pi_j^A(|\hat\phi^{(k)}\>\<\hat\phi^{(k)}|^{A_1R}\otimes |\psi_\alpha\>\<\psi_\alpha|^{A_2})\Pi_j^A] \otimes \frac{\Pi_j^A}{\binom{n}{j}} \nonumber \\
    =& 2^{-k}\sum_{j=\alpha}^{k+\alpha} \Pi_{j-\alpha}^R \otimes \frac{\Pi_j^A}{\binom{n}{j}},
\end{align}
and
\begin{align}
    \Tr_{n-t}\Phi_\text{avg}^{RA}=&2^{-k}\sum_{j=\alpha}^{k+\alpha} \sum_{i=0}^{t} \Pi_{j-\alpha}^R \otimes \Pi_i^E\frac{\binom{n-t}{j-i}}{\binom{n}{j}} \nonumber\\
    =& 2^{-k}\sum_{j=0}^{k} \sum_{i=0}^{t} \Pi_{j}^R \otimes \Pi_i^E\frac{\binom{n-t}{j+\alpha-i}}{\binom{n}{j+\alpha}},
\end{align}
where $E$ refers to the $t$-qubit register that the complementary channel maps to. To get an upper bound for the second term of Eq.~\eqref{eq:bound}, we can replace the minimization over $\zeta$ by an arbitrary fixed $\zeta_0$, which we choose to be the marginal state
\begin{equation}
    \zeta_0^E=\Tr_R \Tr_{n-t}\Phi_\text{avg}^{RA} = 2^{-k}\sum_{i=0}^{t}\Pi_i^E \sum_{j=\alpha}^{k+\alpha}  \frac{\binom{k}{j-\alpha}\binom{n-t}{j-i}}{\binom{n}{j}}.
\end{equation}
We define
\begin{equation}
    \quad \beta_i = 2^{-k}\sum_{j=\alpha}^{k+\alpha}  \frac{\binom{k}{j-\alpha}\binom{n-t}{j-i}}{\binom{n}{j}} = 2^{-k}\sum_{j=0}^{k}  \frac{\binom{k}{j}\binom{n-t}{j+\alpha-i}}{\binom{n}{j+\alpha}},\label{eq:beta}
\end{equation}
so that
\begin{equation}
    \zeta_0^E = \sum_{i=0}^{t}\beta_i\Pi_i^E
\end{equation}

Note that all the states are diagonal, the fidelity is given by
\begin{align}
    &F\left(\Tr_{n-t}\Phi_\text{avg},\frac{I}{2^k}\otimes \zeta_0\right)\nonumber\\
    =& \sum_{j=0}^{k} \sum_{i=0}^{t} \Tr[\Pi_j^R \otimes \Pi_i^A]\sqrt{2^{-k}\beta_i \times 2^{-k}\frac{\binom{n-t}{j+\alpha-i}}{\binom{n}{j+\alpha}}} \nonumber \\ =&
    2^{-k}\sum_{j=0}^{k} \sum_{i=0}^{t} \binom{k}{j}\binom{t}{i}\sqrt{\beta_i \frac{\binom{n-t}{j+\alpha-i}}{\binom{n}{j+\alpha}}} \label{eq:choi-fidelity}
\end{align}

For any nonnegative number $n$ and real number $x$, we define
\[
    x^{\underline n} = x(x-1)\cdots (x-n+1),\quad x^{\overline n} = x(x+1)\cdots (x+n-1).
\]
They are related by
\[
    x^{\ovl n} = (x+n-1)^{\ul n}, \quad x^{\ul n} = (x-n+1)^{\ovl n}
\]
and
\[
    x^{\ul n} = (-x)^{\ovl n} (-1)^n, \quad x^{\ovl n} = (-x)^{\ul n} (-1)^n.
\]

It could be verified that the following binomial theorems hold (by induction on $n$)
\[
    (x+y)^{\underline n} = \sum_{k=0}^n \binom{n}{k} x^{\ul k} y^{\ul{n-k}},\quad (x+y)^{\ovl n} = \sum_{k=0}^n \binom{n}{k} x^{\ovl k} y^{\ovl{n-k}}.
\]
Then from the definition of $\beta_i$ in Eq.~\eqref{eq:beta} we have
\begin{align}
    \beta_i =& 2^{-k}\sum_{j=0}^{k}  \frac{\binom{k}{j}\binom{n-t}{j+\alpha-i}}{\binom{n}{j+\alpha}} \nonumber \\
    =& \frac{1}{n^{\ul t}2^k} \sum_{j=0}^k \binom{k}{j} {(j+\alpha)^{\ul i}(n-\alpha-j)^{\ul {t-i}}} \nonumber \\
    =& \frac{1}{n^{\ul t}2^k} \sum_{j,x} \binom{k}{j} \binom{i}{x}j^{\ul x}\alpha^{\ul {i-x}}(-1)^{t-i}(j-n+\alpha)^{\ovl {t-i}} \nonumber \\
    =& \frac{1}{n^{\ul t}2^k} \sum_{j,x} \binom{k}{j} \binom{i}{x}j^{\ul x}\alpha^{\ul {i-x}}(-1)^{t-i} \nonumber\\
    &\times (j-n+\alpha+t-i-1)^{\ul {t-i}} \nonumber \\
    =& \frac{1}{n^{\ul t}2^k} \sum_{j,x,y} \binom{k}{j} \binom{i}{x}j^{\ul x}\alpha^{\ul {i-x}}(-1)^{t-i}\binom{t-i}{y} \nonumber \\ 
     & \times (j-x)^{\ul y}(x-n+\alpha+t-i-1)^{\ul {t-i-y}} \nonumber \\
    =& \frac{1}{n^{\ul t}2^k} \sum_{j,x,y}\binom{k}{j}j^{\ul{x+y}}  \binom{i}{x}\alpha^{\ul {i-x}}(-1)^{y}\binom{t-i}{y}\nonumber\\
    &\times (-x+n-\alpha-t+i+1)^{\ovl {t-i-y}} \nonumber \\
    =& \frac{1}{n^{\ul t}}  \sum_{x,y} 2^{-(x+y)}(-1)^{y}k^{\ul{x+y}} \binom{i}{x}\binom{t-i}{y}\nonumber\\
    &\times \alpha^{\ul {i-x}}(n-\alpha-x-y)^{\ul {t-i-y}}. 
\end{align}
Suppose that $kt=o( n)$, we can see that the term with $x=x_0$ and $y=y_0$ is of order $(kt/n)^{x_0+y_0}$ times the term with $x=y=0$. Then by keeping the terms $x+y \le 2$ we have the series expansion
\begin{align}
    \beta_i = & \frac{\alpha^{\ul i}(n-\alpha)^{\ul{t-i}}}{n^{\ul t}} \left( 1 +\frac{i k-a k t}{\left(2 a-2 a^2\right) n}\right.\nonumber\\
    & + \left. \frac{\xi_1}{8 (a-1)^2 a^2 n^2}+O\left(\frac{k^3t^3}{n^3}\right)\right),
\end{align}
where $a=\alpha/n$ and
\begin{align}
    \xi_1 =& k [a^2 (k-1) (t-1) t+i^2 (-4 a+k+3)\nonumber\\
    &+i (2 a (2 a (t-1)-(k+1) t+k+3)-k-3)].
\end{align}
Similarly
\begin{align}
    \frac{\binom{n-t}{j+\alpha-i}}{\binom{n}{j+\alpha}} =& \frac{\alpha^{\ul i}(n-\alpha)^{\ul{t-i}}}{n^{\ul t}}\left(1+\frac{j (a t-i)}{(a-1) a n} \right. \nonumber\\
    &\left. +\frac{\xi_2}{2 (a-1)^2 a^2 n^2}+O\left(\frac{k^3t^3}{n^3}\right)\right)
\end{align}
with
\begin{align}
    \xi_2 =& j [a^2 (j-1) (t-1) t+i^2 (-2 a+j+1)\nonumber \\
    &+i (2 a (a (t-1)+j (-t)+j+1)-j-1)].
\end{align}

Therefore,
\begin{align}
    \sqrt{\beta_i \frac{\binom{n-t}{j+\alpha-i}}{\binom{n}{j+\alpha}}} =& \frac{\alpha^{\ul i}(n-\alpha)^{\ul{t-i}}}{n^{\ul t}}\left(1+\frac{(2 j+k) (a t-i)}{4 (a-1) a n} \right. \nonumber\\
    & \left. +\frac{\xi_3}{32 (a-1)^2 a^2 n^2}+O\left(\frac{k^3t^3}{n^3}\right)\right),
\end{align}
where
\begin{align}
    &\xi_3 \nonumber\\
    =& a^2 t \left(4 j^2 (t-2)+4 j ((k-2) t+2)+k (k (t-2)-2 t+2)\right) \nonumber\\
    +&i^2 \left(4 j (-4 a+k+2)+k (-8 a+k+6)+4 j^2\right)\nonumber\\
    +&2 i \left(-4 j^2 (a (t-2)+1)+j (4 a (2 a (t-1)-k t+2)-4)\right) \nonumber\\
    -&2ik\left( (a (-4 a (t-1)+(k+2) t-2 (k+3))+k+3)\right).
\end{align}
Then we can multiply this by $\binom{t}{i}$ and sum over $i$ to have the fidelity
\begin{equation}
    \sum_i \binom{t}{i}\sqrt{\beta_i \frac{\binom{n-t}{j+\alpha-i}}{\binom{n}{j+\alpha}}} = 1-\frac{t(k-2j)^2}{32a(1-a)n^2}+O\left(\frac{k^3t^3}{n^3}\right). \label{eq:fidelity}
\end{equation}
Now we plug this into Eq.~\eqref{eq:choi-fidelity} and obtain
\begin{equation}
    F\left(\Tr_{n-t}\Phi_\text{avg},\frac{I}{2^k}\otimes \zeta_0\right) = 1-\frac{tk}{32a(1-a)n^2}+O\left(\frac{k^3t^3}{n^3}\right),
\end{equation}
and the corresponding purified distance is
\begin{equation}
    P\left(\Tr_{n-t}\Phi_\text{avg},\frac{I}{2^k}\otimes \zeta_0\right) = \frac{\sqrt{tk}}{4n\sqrt{a(1-a)}}\left(1+O\left(\frac{t^2k^2}{n}\right)\right).
\end{equation}
\linghang{
Another interesting case to consider is $\alpha=O(1)$, which could be directly evaluated from Eq.~\eqref{eq:choi-fidelity} for small values of $k$ and $t$. For example, when $k=t=1$, we have
\begin{align}
    &F\left(\Tr_{n-t}\Phi_\text{avg},\frac{I}{2^k}\otimes \zeta_0\right) \nonumber\\
    =& 1+\frac{-\sqrt{2}-2\sqrt{2}\alpha+\sqrt{\alpha(2\alpha+1)}+\sqrt{(\alpha+1)(2\alpha+1)}}{2\sqrt{2}n}\nonumber\\
    &+O(n^{-2}),
\end{align}
and when $k=t=2$,
\begin{align}
    &F\left(\Tr_{n-t}\Phi_\text{avg},\frac{I}{2^k}\otimes \zeta_0\right) \nonumber\\
    =& 1+\frac{-2-2 \alpha +\sqrt{\alpha  (\alpha +1)}+\sqrt{(\alpha +1) (\alpha +2)}}{2 n}+O(n^{-2}).
\end{align}
In both cases the purified distance is $O(n^{-1/2})$.
}
\subsection{Worst-case}
\label{app:worst-avg}
It is easy to see that when $U$ is sampled from $H_{\times,U(1)}$
\begin{equation}
    \E_{U} U(|x\>\<x'| \otimes |\psi\>\<\psi|) U^\dagger =
    \begin{cases}
        0, & x \not= x' \\
        \Pi^{(n)}_{|x|+\alpha}/\binom{n}{|x|+\alpha}, &x = x'.
    \end{cases} \label{eq:avg-rhoxx}
\end{equation}
In the case of $x=x'$, we \zwnnn{trace out $n-t$ qubits of the above state} 
\zwnnn{and obtain} 
\begin{equation}
    \rho^{x,x}_\text{avg} = \sum_{i=0}^t \Pi_i^{(t)}\frac{\binom{n-t}{|x|+\alpha-i}}{\binom{n}{|x|+\alpha}}.
\end{equation}
We wish to show that this is close to some fixed state $\zeta$ independent of $x$. We propose that $\zeta$ is $\rho^{x,x}_\text{avg}$ averaged over $x$,
\begin{equation}
    \zeta = \frac{1}{2^k}\sum_{j=0}^k\Pi_i^{(t)}\frac{\binom{k}{j}\binom{n-t}{j+\alpha-i}}{\binom{n}{j+\alpha}}=\sum_i \beta_i \Pi_i^{(t)},
\end{equation}
where $\beta_i$ is the quantity previously defined in Eq.~\eqref{eq:beta} from Appendix~\ref{app:average-state}. For any $x$, the fidelity is given by
\begin{equation}
    F(\rho^{x,x}_\text{avg},\zeta) = \sum_i \binom{t}{i} \sqrt{\beta_i \frac{\binom{n-t}{|x|+\alpha-i}}{\binom{n}{|x|+\alpha}}}. 
\end{equation}
which is exactly the result in Eq.~\eqref{eq:fidelity} with $j=|x|$. Then we have
\begin{equation}
    \max_x P(\rho^{x,x}_\text{avg},\zeta) = \frac{k\sqrt{t}}{4n\sqrt{a(1-a)}}\left(1+\left(\frac{kt^2}{n}\right)\right).
\end{equation}

\section{\zwnnn{Average physical states} in the $SU(d)$ case}
\label{app:avg}
In order to calculate the \zwnnn{average physical state} over $H_{\times,SU(d)}^n$, we need the following lemma.
\begin{lemma}
For any operator $M$ acting on $(\C^d)^{\otimes n}$,
\begin{equation}
    \E_{V \sim H_{\times,SU(d)}^n} VMV^\dagger = \frac{1}{n!} \sum_{\pi \in S_n} O_\pi M O_\pi^\dagger.
\end{equation}
In other words, the average over $\cU_{\times,SU(d)}^n$ equals to the average over its subgroup $\{O_\pi|\pi \in S_n\}$.
\label{lem:avg}
\end{lemma}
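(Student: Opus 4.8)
The plan is to recognize both sides of the identity as the Hilbert--Schmidt orthogonal projection onto the commutant of a group acting on $(\C^d)^{\otimes n}$, and then to use Schur--Weyl duality to identify these two commutants as one and the same operator algebra.

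First I would recall the standard fact that, for a compact group $G$ with a unitary representation $g\mapsto W_g$, the twirl $M\mapsto \int_G W_g M W_g^\dagger\,dg$ (with the normalized counting measure $\frac{1}{|G|}\sum_{g}$ when $G$ is finite) is precisely the orthogonal projection, with respect to the inner product $\langle A,B\rangle := \Tr A^\dagger B$, onto the commutant $\{M:[M,W_g]=0\ \forall g\in G\}$. Applied to $G=\cU_{\times,SU(d)}^n$ with $W$ the defining inclusion into $SU(d^n)$, this shows the left-hand side equals the projection onto $(\cU_{\times,SU(d)}^n)'$; applied to $G=S_n$ with $W_\pi=O_\pi$, it shows the right-hand side equals the projection onto $\{O_\pi:\pi\in S_n\}'$. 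Since the orthogonal projection onto a fixed subspace is unique, it then suffices to prove the operator-algebra identity $(\cU_{\times,SU(d)}^n)' = \{O_\pi:\pi\in S_n\}'$.

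For that I would invoke Schur--Weyl duality as recalled in Sec.~\ref{sec:duality}: under $(\C^d)^{\otimes n}=\bigoplus_{\lambda\vdash(n,d)}\cL_\lambda\otimes\cR_\lambda$, the operators $\{O_\pi\}$ linearly span the algebra $\bigoplus_\lambda I_{\cL_\lambda}\otimes B(\cR_\lambda)$ (each $\cR_\lambda$ being an irrep of $S_n$), while the $\{U^{\otimes n}\}$ span $\bigoplus_\lambda B(\cL_\lambda)\otimes I_{\cR_\lambda}$, and these two algebras are mutual commutants. On the other hand, by Eq.~\eqref{eq:hx} every element of $\cU_{\times,SU(d)}^n$ has the form $\bigoplus_\lambda I_{\cL_\lambda}\otimes V_\lambda$; requiring an operator to commute with all such elements forces it, block by block (as $I_{\cL_\lambda}\otimes V_\lambda$ runs over all of $U(r_\lambda)$), to take the form $\bigoplus_\lambda X_\lambda\otimes I_{\cR_\lambda}$, so $(\cU_{\times,SU(d)}^n)' = \bigoplus_\lambda B(\cL_\lambda)\otimes I_{\cR_\lambda} = \{O_\pi:\pi\in S_n\}'$, which closes the argument.

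Alternatively, and this would also deliver the explicit value of the average used in Appendix~\ref{app:avg}, I could compute both twirls directly block by block: writing $M$ through its components $\Pi_\lambda M\Pi_\mu$, Schur's lemma annihilates every off-diagonal ($\lambda\neq\mu$) piece under either average, while on a diagonal block the irreducible action on $\cR_\lambda$ produces $\frac{1}{r_\lambda}\Tr_{\cR_\lambda}[\Pi_\lambda M\Pi_\lambda]\otimes I_{\cR_\lambda}$ in both cases. I expect no real obstacle here beyond careful bookkeeping with the Schur--Weyl decomposition; the only mild subtlety is that the relevant twirl is over $SU(r_\lambda)$ rather than $U(r_\lambda)$, but since the defining representation of $SU(r_\lambda)$ is already irreducible (its adjoint action on $B(\C^{r_\lambda})$ decomposing as trivial-plus-adjoint) the twirl is still the completely depolarizing channel $X\mapsto \frac{\Tr X}{r_\lambda}I$, so nothing changes. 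The abstract commutant argument is the cleanest route and the one I would write up in full.
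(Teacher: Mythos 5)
Your proposal is correct, but it reaches the identity by a genuinely different route than the paper. The paper never computes the commutant of $\cU^n_{\times,SU(d)}$: it first shows, by exactly your ``bookkeeping'' step (Schur's lemma applied block by block to the $S_n$ action), that the permutation average $\tilde M=\frac{1}{n!}\sum_\pi O_\pi M O_\pi^\dagger$ has the form $\bigoplus_\lambda M_\lambda\otimes I_{\cR_\lambda}/r_\lambda$ as in Eq.~\eqref{eq:form}, and then uses invariance of the Haar measure under right multiplication by the $O_\pi$ (which lie in $\cU^n_{\times,SU(d)}$ up to a phase) to write $\E_V VMV^\dagger=\E_V V\tilde M V^\dagger=\tilde M$, the last equality because operators of that form commute with every $V$ of the form \eqref{eq:hx}. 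So the paper only needs two one-sided facts (the subgroup inclusion and one containment in a commutant), and as a by-product it produces the explicit expression for the average that is reused in Appendix~\ref{app:avg}. Your main argument instead identifies both averages as the Hilbert--Schmidt orthogonal projection onto a commutant and equates the two commutants; this is cleaner and more general (any two compact groups with the same commutant have the same twirl), and your alternative block computation recovers the explicit form as well. The one place where you are too quick is the equality $(\cU^n_{\times,SU(d)})'=\{O_\pi\colon\pi\in S_n\}'$: your remark that the defining representation of $SU(r_\lambda)$ is irreducible disposes only of the diagonal blocks, and when $d\ge n$ there are two sectors with $r_\lambda=1$ (the trivial and sign representations of $S_n$), between which an off-diagonal block is not annihilated by a twirl over $\prod_\lambda SU(r_\lambda)$ alone. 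It is annihilated for the actual group $\cU^n_{\times,SU(d)}$, either because that group contains relative phases between blocks (only the overall determinant is constrained), or most simply because it contains phase multiples of the $O_\pi$ themselves, so $(\cU^n_{\times,SU(d)})'\subseteq\{O_\pi\}'$ for free; equivalently, the unitaries in the commutant algebra of $\{U^{\otimes n}\}$ span that algebra, so $(\cU^n_{\times,SU(d)})'$ equals $\bigoplus_\lambda \Lin(\cL_\lambda)\otimes I_{\cR_\lambda}=\{O_\pi\}'$. With that point patched (it is vacuous in the paper's regime $d<n$, but the lemma is stated for all $n,d$), your proof is complete.
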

\begin{proof}
We first prove that
\begin{equation}
    \tilde M := \frac{1}{n!} \sum_{\pi \in S_n} O_\pi M O_\pi^\dagger = \bigoplus_{\lambda\vdash (n,d)} M_\lambda \otimes \frac{I_{\cR_\lambda}}{r_\lambda}\label{eq:form}
\end{equation}
for some operators $M_\lambda$. We first divide $\tilde M $ into blocks 
\begin{equation}
    \tilde M = \sum_{\kappa \lambda} M_{\kappa\lambda}
\end{equation}
and then decompose each $M_{\kappa\lambda}$ into tensor product basis of operators
\begin{equation}
    M_{\kappa\lambda} = \sum_{ab} E^{ab}_{\kappa\lambda} \otimes M^{ab}_{\kappa\lambda}
\end{equation}
where $E^{ab}_{\kappa\lambda}$ is a $l_\kappa \times l_\lambda$ dimensional matrix that has a single 1 on position $(a,b)$ and  0 elsewhere. In other words, $E^{ab}_{\kappa\lambda} = |\phi_\kappa^a\>\<\phi_\lambda^b|$ where $\{|\phi_\kappa^a\>\}_a$ and $\{|\phi_\lambda^b\>\}_b$ are some sets of basis on $\cL_\kappa$ and $\cL_\lambda$, respectively. Note that $O_\pi$ could be decomposed as
\begin{equation}
    O_\pi = \bigoplus_{\lambda \vdash (n,d)} I_{\cL_\lambda} \otimes \pi_\lambda
\end{equation}
where $\pi_\lambda$ is an irrep of $S_n$. Since $[O_\pi, \tilde M] = 0$, we have
\begin{equation}
    \pi_\kappa M^{ab}_{\kappa\lambda} = M^{ab}_{\kappa\lambda} \pi_\lambda,
\end{equation}
and by Schur's lemma we know that $M^{ab}_{\kappa\lambda}=0$ for $\kappa\not=\lambda$ and $M^{ab}_{\kappa\lambda}\propto I_{\cR_\lambda}$ for $\kappa=\lambda$. This proves Eq.~\eqref{eq:form}.

Since $O_\pi \in \cU_{\times,SU(d)}^n$ for any $\pi \in S_n$, we can see that
\begin{align}
    \E_{V \sim H_{\times,SU(d)}^n} V M V^\dagger =& \E_{V \sim H_{\times,SU(d)}^n} V O_\pi M O_\pi^\dagger V^\dagger \nonumber \\
    =& \frac{1}{n!}\sum_{\pi\in S_n}\E_{V \sim H_{\times,SU(d)}^n} V O_\pi M O_\pi^\dagger V^\dagger \nonumber\\
    =& \E_{V \sim H_{\times,SU(d)}^n} V \tilde M V^\dagger \nonumber \\
    =& \tilde M,
\end{align}
Note that in the final step we used the fact that $[V, \tilde M]= 0$ implied by Eq.~\eqref{eq:form}. 
\end{proof}

Then we consider the \zwnnn{average physical state} in our encoding defined as
\begin{equation}
    \Psi_{\avg} = \E_{V \sim H_{\times,SU(d)}^n} V(|\psi\>\<\psi| \otimes \rho_\lambda)V^\dagger.
\end{equation}
First note that $\cU_{\times,SU(d)}^{n-1}$ acting on the $n-1$ qudits of $\rho_\lambda$ is a subset of $\cU_{\times,SU(d)}^n$, because for any $V \in \cU_{\times,SU(d)}^{n-1}$, we have $[V,U^{\otimes (n-1)}]=0$, which implies $[I_d \otimes V,U^{\otimes n}]=0$. This indicates that $I_d \otimes V \in \cU_{\times,SU(d)}^n$. We can average over $H_{\times,SU(d)}^{n-1}$ first without changing the result, because it can be absorbed into the Haar measure $H_{\times,SU(d)}^n$. As a result,
\begin{equation}
    \Psi_{\avg} = \E_{V \sim H_{\times,SU(d)}^n} V\left(|\psi\>\<\psi| \otimes \frac{\Pi_\lambda}{l_\lambda r_\lambda}\right)V^\dagger.
\end{equation}

Then by Lemma~\ref{lem:avg} we know that
\begin{align}
    \Tr_{n-t}\Psi_{\avg}   =& \frac{1}{n} \sum_{j=1}^t |\psi\>\<\psi|^{(j)} \Tr_{n-t} \left(\frac{\Pi_\lambda}{l_\lambda r_\lambda}\right)^{(\bar j)} \nonumber\\
    &+ \frac{n-t}{n} \Tr_{n-t-1} \left(\frac{\Pi_\lambda}{l_\lambda r_\lambda}\right)
\end{align}
where the upper index $(j)$ and $(\bar j)$ refers to the $j$-th and the $t-1$ qudits excluding the $j$-th one, respectively. The general formula for calculating partial trace of $\Pi_\lambda$ could be found in Ref.~\cite[Lemma III.4]{christandl2007},
\begin{equation}
    \Tr_{n-t} \left(\frac{\Pi_\lambda}{l_\lambda r_\lambda}\right) = \frac{1}{r_\lambda}\sum_{\mu\nu}c_{\mu\nu}^\lambda \frac{r_\nu}{l_\mu} \Pi_\mu
\end{equation}
where $c_{\mu\nu}^\lambda$ is the Littlewood--Richardson coefficient. Here, $\mu$ enumerates all partitions of $t-1$ and $\nu$ enumerates all partitions of $n-t$.

When $t=1$, we can see that
\begin{equation}
    \Tr_{n-t}\Psi_{\avg} = \frac{1}{n}|\psi\>\<\psi| + \frac{n-1}{n} \frac{I_d}{d}. \label{eq:avg-channel}
\end{equation}
\linghang{$t=2$ may be solvable in a similar way.}

\subsection{Finding the worst-case input}
\label{app:worst-case}
From \lhn{Eq.~\eqref{eq:avg-channel}}, we can define the channel
\begin{equation}
    \cC^{L\to E}(\rho) = \frac{1}{n}\rho + \frac{n-1}{n}\Tr[\rho] \frac{I}{d},
\end{equation}
and we want to find an input state $|\psi\>^{LR}$ such that $\cC^{L\to E}$ has largest distance from a constant channel when acting on $|\psi\>$. To be more precise, we pick the constant channel $\cT_{\zeta_0}$ that corresponds to the maximally mixed state $\zeta_0 = I/d$ (note that this should be independent of $|\psi\>$) and want to study the quantity
\begin{align}
    \epsilon =& P\left((I^R\otimes \cC^{L\to E})(|\psi\>\<\psi|^{LR}), (I^R\otimes \cT_{\zeta_0}^{L\to E})(|\psi\>\<\psi|^{LR})\right) \nonumber \\
    =& P\left(\frac{1}{n}|\psi\>\<\psi|^{LR}+\frac{n-1}{n}\psi^R \otimes \frac{I}{d}, \psi^R \otimes \frac{I}{d}\right)
\end{align}
where
\[
    \psi^R = \Tr_L[|\psi\>\<\psi|].
\]
We can assume in general that
\begin{equation}
    |\psi\> = \sum_i \sqrt{p_i}|i\>|i\>
\end{equation}
(note that the coefficients are made real by redefining the basis of $R$) and the fidelity is given by
\begin{align}
    &f:= F\left(\frac{1}{n}|\psi\>\<\psi|^{LR}+\frac{n-1}{n}\psi^R \otimes \frac{I}{d}, \psi^R \otimes \frac{I}{d}\right) \nonumber \\
    =& \Tr\sqrt{\sqrt{\psi^R \otimes \frac{I}{d}}\left(\frac{1}{n}|\psi\>\<\psi|^{LR}+\frac{n-1}{n}\psi^R \otimes \frac{I}{d}\right)\sqrt{\psi^R \otimes \frac{I}{d}}} \nonumber \\
    =& \Tr\sqrt{A+\delta B}
\end{align}
where
\begin{align}
    A =& \left(\psi^R \otimes \frac{I}{d}\right)^2 = \sum_{i,j} \frac{p_i^2}{d^2} |i\>\<i| \otimes |j\>\<j|,\\
    B =& -A + \sqrt{\psi^R \otimes \frac{I}{d}}|\psi\>\<\psi|^{LR}\sqrt{\psi^R \otimes \frac{I}{d}} \nonumber\\
    =& -A + \frac{p_i p_j}{d}|i\>\<j| \otimes |i\>\<j|
\end{align}
and $\delta = 1/n$. Using matrix calculus \cite[Thm.~3.25]{hiai2014} we can find that
\begin{equation}
    \frac{df}{d\delta} = \frac{1}{2}\Tr[(A+\delta B)^{-1/2} B].
\end{equation}
which equals to 0 at $\delta = 0$. To further calculate the second order derivative, we use the theorem again and obtain
\begin{align}
    \left.\frac{d}{d\delta} (A+\delta B)^{-1/2} \right|_{\delta = 0} =& \frac{1}{2}\sum_{i,j}\frac{d}{p_i}|i\>\<i| \otimes |j\>\<j| \nonumber\\
    &- \frac{d^2}{p_i+p_j}|i\>\<j| \otimes |i\>\<j|
\end{align}
which shows that
\begin{equation}
    \left.\frac{d^2 f}{d\delta^2}\right|_{\delta=0} = \frac{1}{4}\sum_i p_i - \frac{d}{2}\sum_{i,j} \frac{p_i p_j}{p_i+p_j} = \frac{1}{4} - \frac{d}{2}\sum_{i,j} \frac{p_i p_j}{p_i+p_j}.
\end{equation}
Then we can use the Taylor series
\begin{equation}
    f = 1 + \left.\frac{1}{2n^2} \frac{d^2 f}{d\delta^2}\right|_{\delta=0} + O(n^{-3})
\end{equation}
and obtain
\begin{align}
    \epsilon =& \frac{1}{n}\sqrt{\frac{d}{2}\sum_{i,j} \frac{p_i p_j}{p_i+p_j} - \frac{1}{4}} + O(n^{-2}) \nonumber \\
    \le& \frac{1}{n}\sqrt{\frac{d}{4}\sum_{i,j} \left(\frac{p_i }{2} + \frac{p_j}{2}\right) - \frac{1}{4}}+ O(n^{-2}) \nonumber \\
    =& \frac{\sqrt{d^2-1}}{2n}+O(n^{-2}).
\end{align}
Note that we have used the fact that $\frac{2p_ip_j}{p_i+p_j} \le \frac{p_i+p_j}{2}$ and this maximum is achieved only at $p_1=p_2=\cdots =p_d=1/d$.

\end{document}